\documentclass[11pt,reqno]{amsart}
	\usepackage[T1]{fontenc}
	\usepackage[utf8]{inputenc}
	\usepackage[english]{babel}
	\usepackage[english=american]{csquotes}
	\usepackage{graphicx}
	\usepackage{subfigure}
	\usepackage{float}
	\usepackage{amssymb,amsmath}
	\usepackage{mathtools}
	\usepackage{color}
	\usepackage{upgreek}
	\usepackage{stackrel}
 	\usepackage[small,nohug,heads=LaTeX]{diagrams} 
	\usepackage[pdftex,plainpages=false,pdfborder={0 0 0},pdfpagelabels,unicode=true,breaklinks=true]{hyperref}

\newtheorem{Theorem}{Theorem}[section]
\newtheorem{Lemma}[Theorem]{Lemma}

\newtheorem{Corollary}[Theorem]{Corollary}

\newtheorem{Remark}[Theorem]{Remark}
\newtheorem{Example}[Theorem]{Example}
\newtheorem{Definition}[Theorem]{Definition}

\providecommand{\CC}{\mathbb{C}}
\providecommand{\DD}{\mathbb{D}}

\providecommand{\NN}{\mathbb{N}}

\providecommand{\RR}{\mathbb{R}}

\providecommand{\ZZ}{\mathbb{Z}}

\newcommand{\cA}{\mathcal{A}}
\newcommand{\cB}{\mathcal{B}}

\newcommand{\cD}{\mathcal{D}}
\newcommand{\cE}{\mathcal{E}}

\newcommand{\cH}{\mathcal{H}}

\newcommand{\cJ}{\mathcal{J}}
\newcommand{\cK}{\mathcal{K}}
\newcommand{\cL}{\mathcal{L}}

\newcommand{\cO}{\mathcal{O}}

\newcommand{\cT}{\mathcal{T}}

\newcommand{\cW}{\mathcal{W}}

\newcommand{\fA}{\mathfrak{A}}

\newcommand{\fg}{\mathfrak{g}}

\newcommand{\fL}{\mathfrak{L}}

\newcommand{\fr}{\mathfrak{r}}


\newcommand{\sH}{\mathsf{H}}
\newcommand{\sN}{\mathsf{N}}
\newcommand{\sT}{\mathsf{T}}
\newcommand{\sV}{\mathsf{V}}

\providecommand{\abs}[1]{\left \lvert #1 \right \rvert}

\providecommand{\babs}[1]{\bigl \lvert #1 \bigr \rvert}

\providecommand{\norm}[1]{\left \lVert #1 \right \rVert}

\providecommand{\bnorm}[1]{\bigl \lVert #1 \bigr \rVert}

\providecommand{\scpro}[2]{\left \langle #1 , #2 \right \rangle}





\providecommand{\expval}[1]{\left \langle #1 \right \rangle}

\providecommand{\bexpval}[1]{\bigl \langle #1 \bigr \rangle}

\DeclareMathOperator\Renew{Re}
\DeclareMathOperator\Imnew{Im}
\renewcommand{\Re}{\Renew}
\renewcommand{\Im}{\Imnew}

\DeclareMathOperator\vol{vol}
\DeclareMathOperator\Vol{Vol}
\DeclareMathOperator\End{End}

\DeclareMathOperator{\dom}{D}
\DeclareMathOperator\dist{dist}

\DeclareMathOperator{\dive}{div}

\DeclareMathOperator{\sff}{II}
\DeclareMathOperator{\im}{ran}

\providecommand{\dd}{\mathrm{d}}
\providecommand{\dD}{\mathrm{D}}
\providecommand{\e}{\mathrm{e}}
\providecommand{\I}{\mathrm{i}}
\providecommand{\rs}[2]{{\left.#1\right|}_{#2}} 

\newcommand{\sct}{C^\infty}
\newcommand{\pr}{\mathrm{pr}}

\newcommand{\eps}{\varepsilon}
\newcommand{\Ha}{H_\mathrm{a}}

\newcommand{\Heff}{H_\mathrm{eff}}
\newcommand{\id}{\mathbf{1}}

\providecommand{\iec}{i.e.,~}


\SetMathAlphabet{\mathcal}{normal}{OMS}{cmsy}{m}{n}

\DeclareSymbolFont{lettersA}{U}{txmia}{m}{it}
\DeclareMathSymbol{\updelta}{\mathord}{lettersA}{"0E}

\hyphenation{wave-guide wave-guides}

\begin{document}

\title{Quantum Waveguides with Magnetic Fields}

\author{Stefan Haag}
\address{Allianz Lebensversicherungs-AG\\ Reinsburgstraße 19, 70178 Stuttgart, Germany} 
\email{stefan.haag@allianz.de}

\author{Jonas Lampart}
\address{CNRS \& Laboratoire Interdisciplinaire Carnot de Bourgogne, UMR 6303 CNRS \& Universit\'e de Bourgogne Franche-Comt\'e, 9 Av. A. Savary, 21078 Dijon CEDEX, France}
\email{jonas.lampart@u-bourgogne.fr}

\author{Stefan Teufel}
\address{Mathematisches Institut, Eberhard Karls Universit\"at T\"ubingen\\ Auf der Morgenstelle 10, 72076 T\"ubingen, Germany}
\email{stefan.teufel@uni-tuebingen.de}

\maketitle

%
%
%
%
%
%

\begin{abstract}
We study generalised quantum waveguides in the presence of moderate and strong external magnetic fields. Applying recent results on the adiabatic limit of the connection Laplacian we show how to construct and compute effective Hamiltonians that allow, in particular, for a detailed spectral analysis of  magnetic waveguide  Hamiltonians. 
We apply our general construction to a number of explicit examples, most of which are not covered by previous results.   
\end{abstract}

\maketitle

\tableofcontents

\section{Introduction}

In this work we study generalised quantum waveguides as introduced in \cite{Haag2015} in the presence of moderate and strong external magnetic fields. In a nutshell, a generalised quantum waveguide is an $\eps$-thin neighbourhood of a submanifold, or a boundary thereof. The corresponding Hamiltonian is the magnetic Laplacian in the interior with Dirichlet boundary conditions, or the Laplacian on the surface. 
The strength of the magnetic field is either held fixed or of the same order as the constraining forces, which increase in the asymptotic limit $\eps\ll 1$. Our results concern the construction of   effective Hamiltonians, defined only on the submanifold, that provide an approximation of  the spectrum of the full waveguide Hamiltonian for small $\eps$. Our construction is based on recent results on the adiabatic limit of the connection Laplacian \cite{Haag2016b} that hold for very general geometries.   Here we analyse the effective Hamiltonians in detail for magnetic quantum waveguides and discuss the relevant terms for different energy scales, geometries, and strengths of the magnetic field.

More specifically,  for us  a quantum waveguide is 
an $\eps$-thin   neighbourhood $\cT^\eps$    of a smoothly embedded $b$-dimensional submanifold $B\xhookrightarrow{c}\RR^{b+f}$ of co-dimension $f$. The size and shape of the cross-section, i.e.\ the intersection of $\cT^\eps$ with a plane normal to the submanifold, may vary along the submanifold.  Common examples are quantum strips ($b=1$ and $f=1$), quantum tubes ($b=1$ and $f=2$) and quantum layers ($b=2$ and $f=1$). Additionally, we consider the case of so-called hollow waveguides, which are modelled on the boundary of a massive waveguide as just described.
\begin{figure}[H]
\centering
\def\svgwidth{0.7\textwidth}
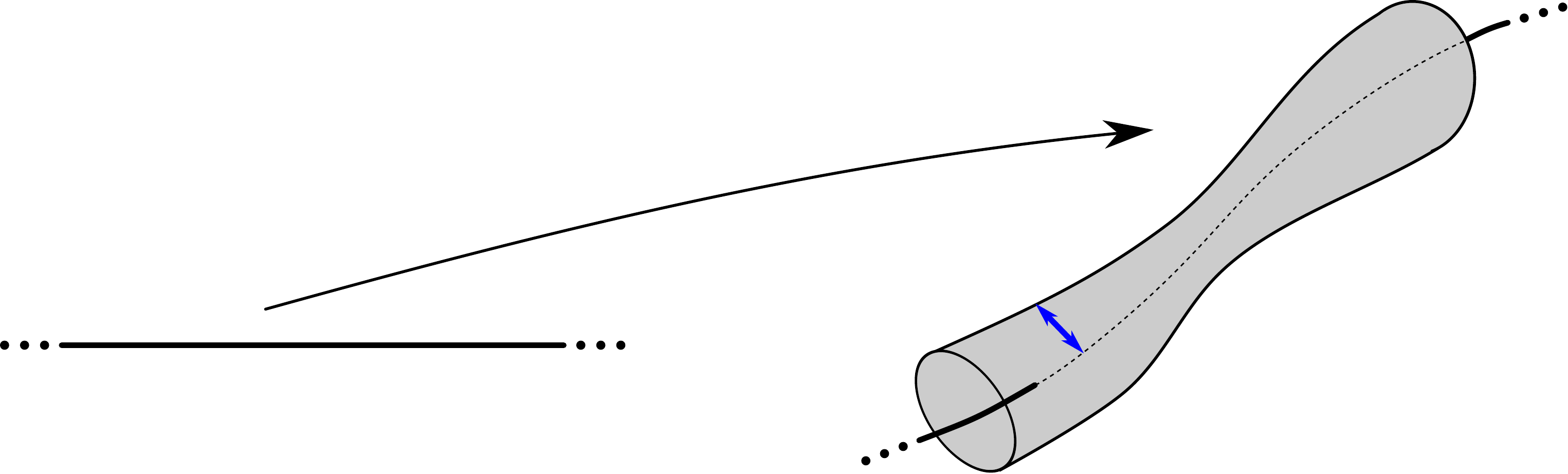
\bigskip
\caption{\small Sketch of a generalised quantum waveguide, modelled as a thin  neighbourhood of a manifold $B$  embedded  into an ambient Euclidean space. The domain of the corresponding massive waveguide  is the interior of the tube, while the domain of the hollow waveguide is its boundary, \iec the surface of the tube.}
\label{fig:QWG}
\end{figure}
In the absence of any magnetic field, the properties of freely moving quantum particles are modelled by the Laplacian
on $L^2(\cT^\eps)$ with Dirichlet boundary conditions for massive wave\-guides, or by the Laplace-Beltrami operator of the induced metric on the boundary of $\cT^\eps$ for hollow waveguides. Since localised and normalised elements of these domains (apart from the constant transversal ground state for hollow waveguides) necessarily have  derivatives of order $\eps^{-1}$ in  the transversal directions, we will introduce the pre-factor $\eps^2$ in front of the Laplacian in order to rescale the energies properly. The effects of an external magnetic field $\cB=\dd \cA$, represented by a magnetic potential $\cA\in\sct(\sT^*\RR^{b+f})$, are incorporated by minimal coupling, replacing the flat connection $\dd$ by the magnetic connection $\nabla^{\eps^{-\sigma}\cA}:=\dd+\I\eps^{-\sigma}\cA$.
The parameter $\sigma\in\{0,1\}$ controls the coupling strength. We refer to $\sigma=0$ as moderate magnetic fields and to $\sigma=1$ as  strong  magnetic fields. Thus, on the macroscopic scale where the waveguide is of width $\eps$, a moderate magnetic field $\cB$  is of order one, while a strong magnetic field $\cB$ is of order $\eps^{-1}$.

There exists a vast literature on specific types of quantum waveguides without magnetic fields, see e.g.\ \cite{Exner2015,Haag2015} and references therein.  There are significantly less results on the magnetic case. For example,    Ekholm  and Kova{\v{r}}{\'\i}k \cite{Ekholm2005} consider quantum strips ($B=\RR$, $f=b=1$) and Grushin \cite{Grushin2008} special tubes ($B=\mathbb{S}^1$, $f=2$), in  both cases with magnetic fields of moderate strength. The more recent work \cite{Krejcirik2014} of Krej\v{c}i\v{r}\'{i}k and Raymond deals with quantum strips and quantum tubes ($B=\RR$, $f=1,2$) of constant cross-section and   moderate as well as strong magnetic fields. Finally, in \cite{Krejcirik2015} the same authors with Tu\v{s}ek treat $\eps$-tubular neighborhoods of hypersurfaces in $\RR^{b+1}$ in the presence of moderate magnetic fields. 

In our approach we are able to treat all these geometries, and actually much more general ones, in a unified framework. For the case of moderate fields, our main result, Theorem~\ref{thm:weak}, states that  for any generalised waveguide  the effective Hamiltonian can be obtained by minimal coupling of the effective non-magnetic Hamiltonian from \cite{Haag2015} to an effective vector potential. For strong magnetic fields the   more complicated effective Hamiltonian is given in Theorem~\ref{thm:strong}.   Note that, in both cases, the quality of the approximation depends on the relevant energy scale of the problem, see Corollary~\ref{cor:main1} and Theorem~\ref{thm:main2}.
In Section~\ref{chap:conv} we explicitly compute and discuss the effective operators for  different  types of massive and hollow quantum tubes ($b=1$, $f=2$) and for moderate and strong magnetic fields. In particular, we recover as a special case the effective Hamiltonian of \cite{Krejcirik2014}, cf.\ Eq.~\eqref{KRHam}. The general effective Hamiltonians of Sections~\ref{chap:unscaled} and \ref{chap:scaled} can be easily evaluated also for all other special cases considered previously in the literature.

As a completely new example,  treated in detail at the end of  Section~\ref{sec:scaled}, let us mention the case of a hollow waveguide $\cT^\eps$ modelled on the surface of a cylindrical tube of varying radius  $\eps \,\ell(x)$ along a smooth curve  $c:\RR\to \RR^3$ in the presence of a strong magnetic field $\eps^{-1}\cB$. 
The resulting effective Hamiltonian is at leading order 
\begin{align*}
H_\mathrm{eff} =  - \partial_x^2 + \tfrac{1}{2}\tfrac{\ell''}{\ell} - \tfrac{1}{4} \left(\tfrac{\ell'}{\ell}\right)^2 +\tfrac{1}{4}\ell^2 \bigl({\cB^\parallel}^2 + 2 \bnorm{\cB^\bot}^2_{\RR^2}\bigr)  \,,
\end{align*}
where $\cB^\parallel$ and $\cB^\perp$ are the projections of $\cB|_c$ onto the tangential resp.\ normal directions of the curve.
The spectrum of  $H_\mathrm{eff} $  is $\eps$-close to the spectrum of the Laplace-Beltrami  operator $-\Delta_{\cT^\eps}^{\eps^{-1}\cA}$ on the surface $\cT^\eps\subset\RR^3$ in the sense that 
\[
\mathrm{dist}\Bigl( \sigma (H_\mathrm{eff}  ) \cap [0,E] , \sigma\bigl(-\Delta_{\cT^\eps}^{\eps^{-1}\cA}\bigr)\cap[0,E]\Bigr) = \cO(\eps)
\]
for all $E\in \RR$ (see Theorem~\ref{thm:main2} and mind the rescaling of energies by a factor~$\eps^2$).


Let us give a slightly more detailed plan of the paper.
In Section~\ref{chap:QWG}, we identify the family $\{\cT^\eps\}_{0<\eps\leq 1}$ (the \enquote{tube}) with an $\eps$-independent manifold $M$ (the \enquote{waveguide}) that has the additional structure of a fibre bundle $M\xrightarrow{\pi_M} B$, with typical fibre (\enquote{typical cross-section}) $F$ given by a compact subset of $\RR^f$. The corresponding diffeomorphism $\Psi_\eps:M\to \cT^\eps$ is described  in 
\cite[Chapter~2]{Haag2015} and  lifts to a unitary operator
$
 L^2(\cT^\eps )\to L^2(M,\vol_{G^\eps}) $. The magnetic waveguide Laplacian $-\eps^2\Delta^{\eps^{-\sigma}\cA}$ on $\cT^\eps\subset\RR^{b+f}$ is thus unitarily equivalent to $-\Delta_{G^\eps}^{\eps^{-\sigma}\cA^\eps}$ on $L^2(M,\vol_{G^\eps})$, with an induced Riemannian metric $G^\eps $ on $M$ and an induced magnetic potential $\cA^\eps \in\sct(\sT^* M)$. The precise form of the two last-mentioned objects will be discussed in Section~\ref{sec:metric} and Section~\ref{sec:magpot}, respectively.  In Section~\ref{sec:operator} we finally show that, after changing the volume measure,  $-\eps^2\Delta^{\eps^{-\sigma}\cA}$ is  unitarily equivalent to the operator 
\[
H^\sigma = - \eps^2\fL \left(G^\eps_\mathrm{hor}, \eps^{-\sigma}\cA^\eps_\mathrm{hor}\right) + H^{\sV,\sigma} + \eps^2 V_\mathrm{bend}
\]
on an $\eps$-independent Hilbert space of functions on $M$. It  is the sum of a horizontal operator $- \eps^2\fL (G^\eps_\mathrm{hor}, \eps^{-\sigma}\cA^\eps_\mathrm{hor})$ that acts \enquote{tangentially} to $B$, a vertical operator $H^{\sV,\sigma}$ (depending on the vertical contribution $\cA_\sV^\eps$ of the induced magnetic potential) that acts transversally to $B$, and the so-called bending potential $V_\mathrm{bend}$.
The operator $H^\sigma$ is   of the general form 
 to which the results of \cite{Haag2016b} can be applied. The    statements resulting from \cite{Haag2016b} concerning  the  effective Hamiltonian and its approximation properties are   discussed in Section~\ref{chap:results}. 
There we also introduce  the so-called adiabatic Hamiltonian, which is in some sense the simplest and most explicit  ansatz for an effective Hamiltonian.

In Section~\ref{chap:unscaled} and Section~\ref{chap:scaled}, we compute the relevant terms in an asymp\-totic expansion of the adiabatic Hamiltonian for moderate and strong magnetic fields, respectively. The form of these expansions and the approximation errors depend, among other things, on the energy scale under consideration. We consider energies at distance of order $\eps^\alpha$ above the infimum $\Lambda_0$ of the spectrum of the non-magnetic vertical operator $H^\sV_0=\rs{H^{\sV,\sigma}}{\cA_\sV^\eps=0}$ for $\alpha\in[0,2]$. So $\alpha=0$ corresponds to energies of order one in units of the level spacing of transversal eigenvalues and $\alpha=2$ to energies $\eps^2$-close to the infimum of the spectrum. 
Roughly speaking, the  spectrum of the adiabatic Hamiltonian turns out to be $\eps^{2+\alpha}$-close to the spectrum of the magnetic waveguide Laplacian. Therefore, in the asymptotic expansion of the adiabatic Hamiltonian terms of order $\eps^{2+\alpha}$ and smaller can be discarded. Whether the interval $[\Lambda_0,\Lambda_0 + C\eps^\alpha]$ for $\alpha>0$ contains any spectrum of the magnetic waveguide Hamiltonian at all depends on the geometry of the waveguide.  

The main result of Section~\ref{chap:unscaled}, Theorem~\ref{thm:weak}, states that for moderate magnetic fields the adiabatic Hamiltonian can be obtained by minimal coupling of the non-magnetic adiabatic Hamiltonian as computed in \cite{Haag2015} to an effective magnetic potential, for all values of $\alpha\in[0,2]$.
For strong magnetic fields the adiabatic Hamiltonian is not merely altered by minimal coupling to an effective magnetic potential, but the effects of the field are more subtle, see Theorem~\ref{thm:strong} and Corollary~\ref{cor:strong}.

In Section~6 we finally consider the special case of quantum tubes, \iec the case $b=1$ and $f=2$, where the expansions of the adiabatic Hamiltonian become even more explicit.

\section{Generalised Quantum Waveguides} \label{chap:QWG}

Let $c:B\to \RR^{b+f}$ be the smooth embedding of a smooth, complete $b$-dimensional manifold $B$ into the Euclidean space $\RR^{b+f}$ with the standard metric $\updelta$. 
%
We have an orthogonal decomposition into vectors tangent and normal to $c(B)$, that is
\begin{equation}
\label{eq:TBNB}
c^* \sT \RR^{b+f} = \sT B \oplus \sN B.
\end{equation}
The Riemannian metric~$\updelta$ and its Levi-Civita connection $\nabla^{\updelta}$ give rise to
\begin{itemize}
\item a Riemannian metric $g_B=c^*\updelta$ on $B$ and its associated Levi-Civita connection~$\nabla^{g_B}$,
\item a bundle metric $\updelta^{\sN B}=\updelta \vert_{\sN B}$ and a metric connection $\nabla^{\sN B}$ on $\sN B$, where the metric is obtained by restriction of the Euclidean metric and the connection is the projection of $\nabla^\updelta$ to the subbundle $\sN B$.
\end{itemize}
Suppose that there exists a tubular neighbourhood $\cT^r\subset\RR^{b+f}$ of $c(B)$ with globally fixed radius    $r>0$ that does not self-intersect.
This is equivalent to the requirement that the map
\[
\Phi:\sN B\to \RR^{b+f},\quad \sN_x B\ni \nu\mapsto c(x) + \nu
\]
restricted to
\[
\sN B^r := \bigl\{\nu\in\sN B\text{ such that }\norm{\nu}_{\updelta^{ \sN B}}< r\bigr\}
\]
be a diffeomorphism onto $\cT^r$. If this holds, the entire analysis can be carried out on (a subset of) the normal bundle. This suggests that we view the initial tube as an $\eps$-independent, fibrewise embedded submanifold $\varpi:M\to\sN B^r$, which is then mapped into $\RR^{b+f}$ by means of a rescaled diffeomorphism
\[
\Phi_\eps:\nu \mapsto c(\pi_{\sN B}(\nu))+\eps{\nu},\quad 0<\eps \leq 1.
\]
The composition
\begin{equation}
\label{eq:Psieps}
\Psi_\eps := \Phi_\eps\circ \varpi,\quad 0 < \eps\leq 1
\end{equation}
finally yields the desired change of perspective from the family of $\eps$-thin tubes $\{\cT^\eps\}_{0<\eps\leq 1}$ to the $\eps$-independent waveguide $M$ with $\eps$-dependent family of rescaled pullback metrics $\{G^\eps:=\Psi_\eps^*(\eps^{-2}\updelta)\}_{0<\eps\leq 1}$.

\begin{Definition} \label{def:QWG}
Let $B\xhookrightarrow{c} \RR^{b+f}$ be a smooth, complete, embedded $b$-dimen\-sional submanifold with tubular neighbourhood $\cT^r\subset\RR^{b+f}$ of fixed radius $r>0$, and $F$ be a compact manifold of dimension $\dim(F)\leq f$ with smooth boundary. Let $M\xhookrightarrow{\varpi}\sN B^r$ be a connected submanifold that has the structure of a fibre bundle $\pi_M:M\to B$ with typical fibre $F$ such that the diagram
\[
\begin{diagram}
M & \rTo^{\varpi} & \sN B^r \\
\dTo^{\pi_M} & & \dTo_{\pi_{\sN B}} \\
B & \rTo_{\id_B} & B
\end{diagram}
\]
commutes. We then call $M$ a \emph{generalised quantum waveguide}.
\end{Definition}

\begin{figure}[H]
\centering
\def\svgwidth{0.9\textwidth}
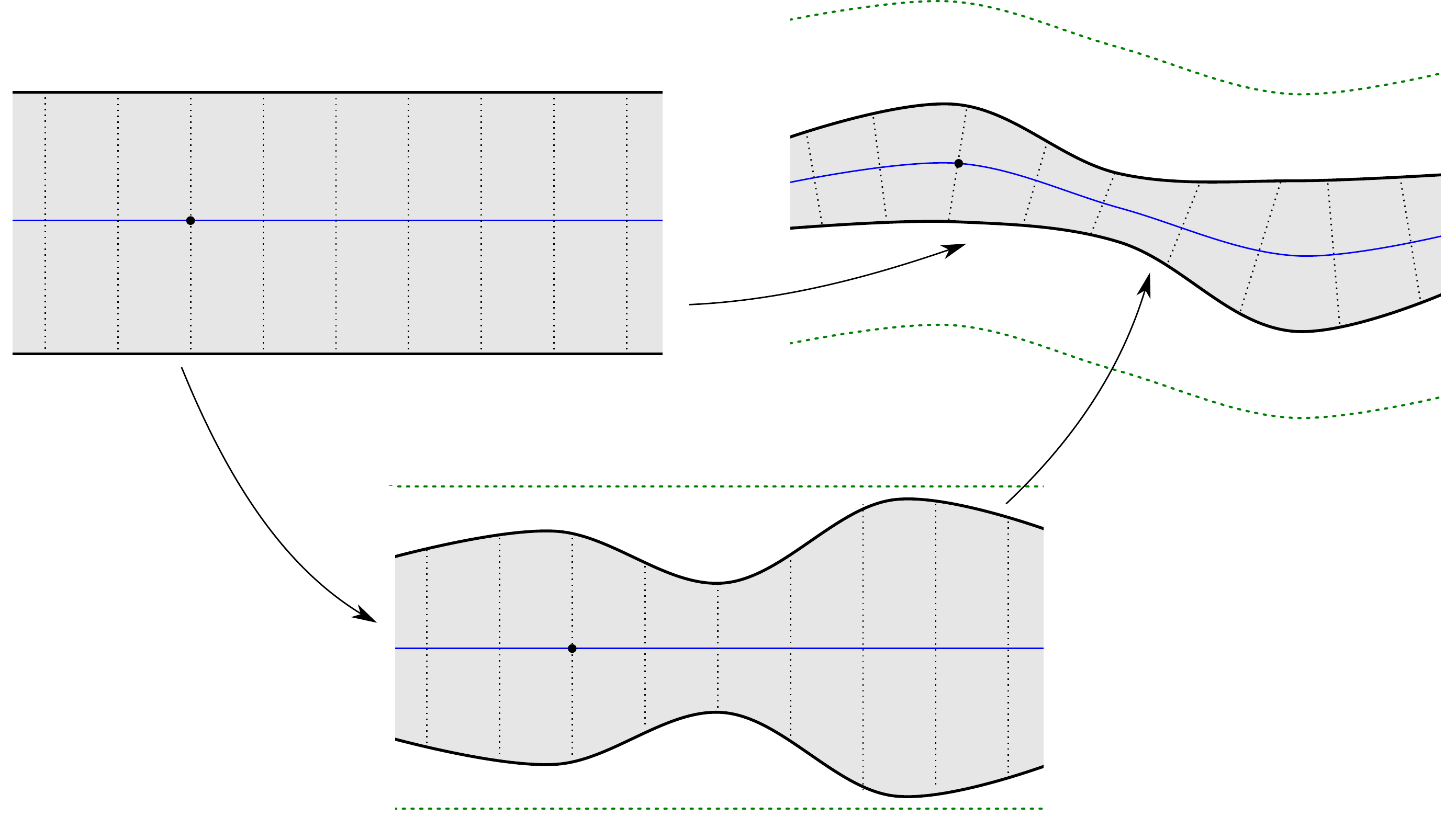
\caption{\small The two embeddings $B\xhookrightarrow{c} \RR^{b+f}$ and $M\xhookrightarrow{\varpi} \sN B^r\subset\sN B$ allow for the identification of the fixed waveguide $M\xrightarrow{\pi_M} B$ with the family of $\eps$-thin tubes $\cT^\eps\subset\RR^{b+f}$.}
\label{fig:embedding}
\end{figure}
\noindent We will refer to the fibres $M_x:=\pi_M^{-1}(x)$, $x\in B$, of the bundle $M\xrightarrow{\pi_M} B$ as the cross-sections of the waveguide, being fibrewise submanifolds of the respective normal spaces $\sN_x B$. The most interesting examples of generalised quantum waveguides are given by (see \cite[Definition 2.2]{Haag2015})
\begin{itemize}
\item \emph{massive quantum waveguides, $\dim(F)=f$:} \\
the typical fibre $F$ of $M$ is the closure of an open, bounded and connected subset of $\RR^f$ with smooth boundary,
\item \emph{hollow quantum waveguides, $\dim(F)=f-1$:} \\
$f\geq 2$ and the waveguide $M$ is the (fibrewise) boundary of a massive waveguide.
\end{itemize}

We impose the following uniformity conditions on the geometry of the generalized quantum waveguide and the magnetic potential:

\begin{Definition} \label{def:QWGbound}
Let $M$ be a generalised quantum waveguide as in Definition~\ref{def:QWG} and $\cA\in\sct(\sT^*\RR^{b+f})$ be a magnetic potential.
\begin{enumerate}
\item We call $M$ a quantum waveguide of bounded geometry (cf.~\cite[Definition~3.1]{Haag2015}) if
\begin{itemize}
\item $(B,g_B)$ is a manifold of bounded geometry (see \cite[Appendix~A1.1]{Shubin1992}),
\item $\pi_M:(M,g)\to(B,g_B)$ is a uniformly locally trivial fibre bundle (see \cite[Definiton~3.2]{Lampart2016}),
\item the embeddings $(B,g_B)\xhookrightarrow{c}(\RR^{b+f},\updelta)$ and $(M,g)\xhookrightarrow{\varpi} (\sN B^r,\Phi^*\updelta)$ are bounded with all their derivatives.
\end{itemize}
Here, 
$g:=g^{\eps=1}$ is the unscaled version of the submersion metric $g^\eps$~\eqref{eq:geps} on~$M$  and $\Phi=\Phi_{\eps=1}$.
\item We call the pair $(M,\cA)$ a magnetic quantum waveguide of bounded geometry if $M$ is a quantum waveguide of bounded geometry with fixed tubular radius $r>0$ and the restricted one-form $\rs{\cA}{\cT^r}$ is bounded with all its derivatives.
\end{enumerate}
\end{Definition}
These conditions on $M$ are trivially satisfied for compact manifolds and in many concrete examples, such as \enquote{asymptotically straight} or periodic waveguides.

We will now discuss the geometry of such quantum waveguides in some more detail, in order to get an explicit expression for the operator $H^\sigma$ on $L^2(M,\vol_g)$, obtained from the Laplacian on $\cT^\eps$ with magnetic potential $\eps^{-\sigma}\cA$ via $\Psi_\eps$.

\subsection{The Pullback of the Riemannian Metric} \label{sec:metric}

Let $\sV M := \ker(\sT\pi_M)$ be the vertical subbundle of $\sT M$. Its elements are vectors that are tangent to the fibres of $M$. 
For a massive waveguide, these vectors are tangent to the fibres of $\sN B$, so 
they point along the normals to $B$. 
They are thus naturally identified with $\sN B$ by \enquote{forgetting the 
base point}, mapping $\sT_\nu (\sN_x B) \to \sN_x B$ in the obvious way. Thus, 
in the massive case, $\sV M \cong \sN B$.   
This identification is formalised by introducing the map $\cK_{\sN B}:\sT(\sN B)\to\sN B$, which projects to vertical vectors, orthogonally with respect to $\updelta^{\sN B}$, and then uses the identification above.  

The orthogonal complement of the vertical subbundle $\sV M$ (with respect to $G=G^{\eps=1}$) is called the horizontal subbundle $\sH M \cong \pi^*_M\sT B$. It turns out that the choice of horizontal bundle does not depend on~$\eps$, that is, the decomposition
\[
\sT M = \sH M \oplus \sV M,
\]
is $G^\eps$-orthogonal for every $\eps>0$ (see \cite[Equation~(4.5)]{Haag2015} for the massive case and \cite[Lemma~5.2]{Haag2015} for the hollow case).
The pullback metric $G^\eps$ can then be written as the sum of a horizontal and 
a vertical part. More precisely, it is of the form
\begin{equation}
\label{eq:Geps}
G^\eps = \eps^{-2} \underbrace{\bigl(\pi_M^* g_B + \eps h^\eps\bigr)}_{=:G^\eps_\mathrm{hor}} + g_\sV.
\end{equation}
Here, $g_\sV:=\rs{G^\eps}{\sV M}$ is the $\eps$-independent restriction of 
$G^\eps$ to the vertical subbundle. The restriction $g_{M_x}:=\rs{g_\sV}{M_x}$ 
defines a Riemannian metric on each fibre $M_x$ of $M$. Moreover, $h^\eps\in 
\sct(\sT^*M^{\otimes 2})$ is a symmetric (but not necessarily non-degenerate) 
tensor that satisfies $h^\eps(V,\cdot)=0$ for all $V\in\sct(\sV M)$ and encodes corrections to the geometry due to the finite diameter  $\eps$ of the tube $\cT^\eps$. 
Thus, if we 
define for any vector field $X\in\sct(\sT B)$ its horizontal lift  
$X^\sH\in\sct(\sH M)$ as the unique horizontal vector field satisfying $\sT\pi_M (X^\sH) = X$, we have
\begin{align*}
G^\eps(X^\sH,Y^\sH) &=\eps^{-2}\bigl(g_B(X,Y)+\eps h^\eps(X^\sH,X^\sH)\bigr) , \\
G^\eps(X^\sH,V) &= 0 \, , \\
G^\eps(V,W) &= g_\sV(V,W)
\end{align*}
for all $X,Y\in\sct(\sT B)$ and $V,W\in \sct(\sV M)$. The explicit form~\eqref{eq:Geps} shows that the pullback metric $G^\eps$ may be considered as a small perturbation of the metric
\begin{equation}
\label{eq:geps}
g^\eps:= \eps^{-2} \pi_M^* g_B + g_\sV.
\end{equation}
We will refer to this metric as the (rescaled) submersion metric, as it makes 
$\sT \pi_M$ an isometry from $\sH M$ to $\sT B$ (up to a factor $\eps^{-1}$), 
so $g:=g^{\eps=1}$ is a Riemannian submersion.
We finally mention that the horizontal deviation $h^\eps$ can be computed explicitly from the quantities associated with the embeddings $B\hookrightarrow \RR^{b+f}$ and $M \hookrightarrow \sN B$. We have
\begin{align*}
h^\eps(X^\sH,Y^\sH)  =& - 2 g_B\bigl(\varpi^*\sff(\cdot)X,Y\bigr)  + \eps \Bigl[g_B\bigl(\varpi^*\cW(\cdot)X,\varpi^*\cW(\cdot)Y\bigr)  \\
& + \underbrace{ \updelta^{\sN B}\bigl(\cK_{\sN B}\circ\gimel(X),\cK_{\sN B}\circ\gimel(Y)\bigr)}_{\text{appears only in the hollow case}}\Bigr]
\end{align*}
for all $X,Y\in\sct(\sT B)$, where
\begin{itemize}
\item $\cW\in\sct(\sN^* B\otimes\End(\sT B))$ is the Weingarten map of the embedding $B\hookrightarrow \RR^{b+f}$ and $\sff\in\sct(\sN ^*B\otimes \sT^*B^{\otimes 2})$ is the associated second fundamental form $\sff(\cdot)(X,Y):=g_B(\cW(\cdot)X,Y)$,
\item and $\gimel(X)$ 
is a vertical vector that
 measures the difference between the horizontal lifts with respect to the hollow waveguide and its underlying massive waveguide (cf.~\cite[Section 5.1]{Haag2015} and Eq.~\eqref{eq:gimel} below). 
\end{itemize}

\subsection{The Pullback of the Magnetic Potential}  \label{sec:magpot}

In this section we give explicit expressions for the pulled-back magnetic potential $\Psi_\eps^*\cA$, in a specific gauge. The role of this gauge is to make the effect of the $\eps$-scaling of the fibres on the magnetic potential more explicit. The idea is the following: On $\cT^\eps$ perform a Taylor expansion of $\cA$ with respect to the normal directions
\begin{equation*}
 \cA_{c(x) + \eps\nu}=\cA_{c(x)} + \eps\nu \widetilde \cA.
\end{equation*}
By a gauge transformation, we can make the one-form $\cA_{c(B)}$ vanish on normal vectors, turning it into a one-form on the tangent space to $B$. 
When $\cA$ is written in this way, one sees easily that the leading term is intrinsic to $B$. It can thus naturally be integrated into the effective operator and also commutes with any differential operators acting in the normal directions.

\begin{Lemma}\label{lem:gauge}
 Let $r>0$ and $\cA \in C^\infty(\sT^* \cT^r)$ be a one-form with globally bounded derivatives on $\cT^r$. Define the function
\[
\Omega_\eps:\sN B^r\to\RR,\quad \sN_x B^r\ni\nu\mapsto \cA_{c(x)}\bigl(\eps {\nu}\bigr),
\]
and let $\cA_B:=c^*\cA$. Then 
\begin{align}
 \cA^\eps:=&\Psi_\eps^*\cA - \dd \varpi^*\Omega_\eps= \varpi^*\left(\Phi_\eps^*\cA-\dd \Omega_\eps\right)\label{eq:Aeps def}\\
 =&\pi_M^*\cA_B + \eps \cA_\sH^\eps + \eps^2 \cA_\sV^\eps,\notag
\end{align}
where $\cA_\sH^\eps$ vanishes on the vertical vectors $\sV M$,  $\cA_\sV^\eps$ vanishes on $\sH M$ and both have globally bounded derivatives with respect to the metric $g=g^{\eps=1}$, uniformly in $\eps$.
\end{Lemma}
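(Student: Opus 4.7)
The plan is to reduce the problem to an analysis on $\sN B^r$ and then pull back by $\varpi$. Since $\Psi_\eps = \Phi_\eps \circ \varpi$ and the exterior derivative commutes with pullbacks, one has $\Psi_\eps^*\cA - \dd\varpi^*\Omega_\eps = \varpi^*(\Phi_\eps^*\cA - \dd\Omega_\eps)$, which establishes the first equality. So the task becomes showing that $\tilde\cA^\eps := \Phi_\eps^*\cA - \dd\Omega_\eps$ admits the claimed $\eps$-structured decomposition on $\sN B^r$, and then transporting the decomposition to $M$ via the fibre-preserving embedding $\varpi$.

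On $\sN B$ I would exploit the horizontal/vertical splitting induced by $\nabla^{\sN B}$. A vertical vector $V \in \sV_\nu(\sN B) \cong \sN_x B$ satisfies $\dd\Phi_\eps(V) = \eps V$, and for the horizontal lift $X^\sH$ of $X \in \sT_x B$ one computes by parallel transport that $\dd\Phi_\eps(X^\sH) = c_*X + \eps\,\zeta(X,\nu)$ for a smooth Weingarten-type tensor $\zeta$. A first-order Taylor expansion $\cA_{c(x)+\eps\nu} = \cA_{c(x)} + \eps\,\rho_\eps(\nu,\cdot)$, with $\rho_\eps$ smooth and $C^\infty$-bounded because $\cA$ is, gives
\[
(\Phi_\eps^*\cA)(X^\sH) = (\cA_B)_x(X) + O(\eps), \qquad (\Phi_\eps^*\cA)(V) = \eps\,\cA_{c(x)}(V) + O(\eps^2).
\]
Since $\Omega_\eps(\nu) = \eps\,\cA_{c(x)}(\nu)$ is fibre-linear, $\dd\Omega_\eps(V) = \eps\,\cA_{c(x)}(V)$ exactly, while $\dd\Omega_\eps(X^\sH) = O(\eps)$. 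Hence the gauge term cancels the $\eps$-order vertical contribution precisely, leaving
\[
\tilde\cA^\eps = \pi_{\sN B}^*\cA_B + \eps\,\tilde\cA_\sH^\eps + \eps^2\,\tilde\cA_\sV^\eps,
\]
where $\tilde\cA_\sH^\eps$ vanishes on $\sV(\sN B)$ and $\tilde\cA_\sV^\eps$ vanishes on $\sH(\sN B)$. This is the role of the gauge choice $\Omega_\eps$.

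Next I pull back by $\varpi$ and decompose with respect to the intrinsic $M$-splitting $\sT M = \sH M \oplus \sV M$. The commuting diagram yields $\varpi^*\pi_{\sN B}^*\cA_B = \pi_M^*\cA_B$, which is the leading term. For $V \in \sV M$, fibre-preservation forces $\varpi_* V \in \sV(\sN B)$, so $(\cA^\eps - \pi_M^*\cA_B)(V) = \eps^2\,\tilde\cA_\sV^\eps(\varpi_* V) = O(\eps^2)$, defining $\cA_\sV^\eps$. For $X^\sH \in \sH M$ I decompose $\varpi_* X^\sH = Y^\sH + W^\sV$ in $\sH(\sN B) \oplus \sV(\sN B)$; since $Y = \sT\pi_M X^\sH$, the leading part of $\tilde\cA^\eps(\varpi_* X^\sH)$ equals $(\pi_M^*\cA_B)(X^\sH)$, and what remains is $O(\eps)$, defining $\cA_\sH^\eps$.

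The main technical point will be the hollow case: here $\sH M$ is not simply the restriction of $\sH(\sN B)$ to $M$, and $\varpi_*(\sH M)$ picks up a genuine $\sV(\sN B)$-component—this is exactly the $\gimel$-correction of Section~\ref{sec:metric}. One must check that this spurious vertical component is only tested against the $\eps^2$-vertical piece $\tilde\cA_\sV^\eps$ of $\tilde\cA^\eps$, so that it contributes only a bounded $O(\eps)$-term to $\cA_\sH^\eps$ and does not disturb the claimed $\eps$-orders. The uniform $C^\infty$-bounds on $\cA_\sH^\eps$ and $\cA_\sV^\eps$ then follow from the standing assumptions in Definition~\ref{def:QWGbound}: $\cA$ is $C^\infty$-bounded on $\cT^r$, the embeddings $c$ and $\varpi$ as well as the Weingarten data are bounded with all derivatives, and the horizontal lift is uniformly controlled by the bounded-geometry hypothesis on $\pi_M$.
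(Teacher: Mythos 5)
Your proposal is correct and follows essentially the same route as the paper: compute $\sT\Phi_\eps$ on vertical vectors (where it is multiplication by $\eps$, so the fibre-linear gauge term $\dd\Omega_\eps$ cancels the $O(\eps)$ vertical contribution exactly, leaving $\eps(\cA_{c(x)+\eps\nu}-\cA_{c(x)})(\cdot)=O(\eps^2)$) and on horizontal lifts (where $\sT\Phi_\eps(w^\#)=w-\eps\cW(\nu)w$ yields $\cA_B+O(\eps)$), then pull back by $\varpi$. Your extra care with the hollow case, where $\sT\varpi(X^\sH)$ acquires a vertical component that is only tested against the $O(\eps^2)$ piece, is exactly the point the paper relegates to the remark on $\gimel(X)$ after its proof, so the two arguments agree in substance.
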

\begin{proof}
To evaluate $\dd \Omega_\eps$ on vertical vectors, take $v\in \sT_\nu (\sN_x B^r) = \sV_\nu \sN B^r$ and let $\gamma(t)=\nu + t \cK_{\sN B}v$. Then
\begin{equation*}
\dd\Omega_\eps(v)=\eps \rs{\frac{\dd}{\dd t}}{t=0}\cA_{c(x)}(\gamma(t))=\cA_{c(x)}\left( \eps \cK_{\sN B}v \right)
\end{equation*}
By the same reasoning, $\sT\Phi_\eps(v)=\eps \cK_{\sN B}v$, so we have
\begin{align*}
 \left(\Phi_\eps^*\cA-\dd \Omega_\eps\right)(v)&=\eps\left(\cA_{\Phi_\eps(\nu)} - \cA_{c(x)}\right)( \cK_{\sN B}v)\\
 &=\eps\left(\cA_{c(x)+\eps\nu} - \cA_{c(x)}\right)( \cK_{\sN B}v).
\end{align*}
This is clearly of order $\eps^2$, and $\varpi^*$ is a partial isometry. 
To calculate the horizontal component, take $w \in \sT_x B$ and denote by $w^\#$ its horizontal lift to $\sT_\nu N B$ for $\nu \in \sN_x B$. Note that $w^\#=\sT\varpi (w^\sH)$ for massive waveguides, but not in general. We have $\sT \Phi_\eps(w^\#)=w-\eps \cW(\nu)w$, see e.g.~\cite[Section 4.1]{Haag2015}. This gives
\begin{equation*}
 \Phi^*_\eps\cA(w^\#)= (c^*\cA)(w) + \bigl(\cA_{c(x)+\eps\nu} - \cA_{c(x)}\bigr)(w)- \eps \cA_{c(x)+\eps\nu}\bigl(\cW(\nu)w\bigr).
\end{equation*}
The first term is $\cA_B$, while the remaining ones are clearly of order $\eps$. As $\dd \Omega_\eps = \eps \dd \Omega_1$ this proves the claim.
\end{proof}
The explicit expressions for $\cA_\sH^\eps$ and $\cA_\sV^\eps$ are easily obtained from this proof. To leading order we have, for a vertical field $V \in C^\infty(\sV M)$,
\begin{align*}
\rs{\cA_\sV^\eps(V)}{\varpi^{-1}(\nu)}=(\dD_{\nu}\cA)\bigl(\cK_{\sN B}\circ \sT\varpi (V)\bigr) + \cO(\eps),
\end{align*}
where $\dD_{\nu}\cA$ is the Lie derivative of $\cA$ in the direction of the (constant) vector field $\nu$ on $\RR^{b+f}$.
For the horizontal part, we let $X\in C^\infty(\sT B)$
and obtain (for $\nu \in \sN_x B$)
\begin{equation*}
\rs{\cA_\sH^\eps(X^\sH)}{\varpi^{-1}(\nu)}=(\dD_{\nu}\cA)(X) - \cA_{c(x)}\bigl(\cW(\nu)X\bigr) + \cO(\eps).
\end{equation*}
Here, we have used that the vertical vector
\begin{equation}
\label{eq:gimel}
\gimel(X):=\eps^{-1}\bigl(\sT\varpi(X^\sH)  - X^\#\bigr)
\end{equation}
has length of order one, see~\cite[Section 5.1]{Haag2015}.

\subsection{The Induced Operator} \label{sec:operator}

The operator we want to study is the Laplacian with magnetic potential $\eps^{-\sigma}\cA$ of the waveguide. For massive waveguides this means the Laplacian on the tube $\cT^\eps$, with Dirichlet boundary boundary conditions. For hollow waveguides this is the Laplace-Beltrami operator on the submanifold $\Psi_\eps(M)\subset \RR^{b+f}$, with the induced metric.
In both cases, the operator is, up to a global factor $\eps^2$, equivalent to the Laplacian on $L^2(M,\vol_{G^\eps})$ with magnetic potential $\eps^{-\sigma}A^\eps$ (which is gauge-equivalent to $\eps^{-\sigma}\Psi^*_\eps \cA$, see~\eqref{eq:Aeps def}), that is, the operator whose quadratic form is given by
\begin{equation*}
\langle \psi, -\Delta_{G^\eps}^{\eps^{-\sigma}\cA^\eps} \psi \rangle_{L^2(M, \vol_{G^\eps})}
= \int\limits_M G^\eps \left(\overline{(\dd+ \I \eps^{-\sigma}\cA^\eps) \psi},(\dd+ \I \eps^{-\sigma}\cA^\eps) \psi\right)  \vol_{G^\eps}
\end{equation*}
for $\psi\in W^{1,2}_0(M, G^\eps)$ (which equals $W^{1,2}(M,G^\eps)$ in the hollow case). 

It is convenient to work on the $\eps$-independent Hilbert space $\cH:=L^2(M,\vol_g)$, where $g=g^{\eps=1}$ is the unscaled submersion metric. This will also make many effects of the scaling and the geometry appear more explicitly in the operator. The correspondence is given by the unitary map
\begin{equation*}
 {U}_{\rho_\eps} : L^2(M,\vol_{G^\eps}) \to \cH, \quad \psi \mapsto \bigl(\eps^{-b}\rho_\eps\bigr)^{1/2}\psi,
\end{equation*}
where 
\begin{equation}
\label{eq:rhoeps}
\rho_\eps :=  \frac{\vol_{G^\eps}}{\eps^{-b}\vol_g} = \frac{\vol_{G^\eps}}{\vol_{g^\eps}} = \frac{\vol_{G_\mathrm{hor}^\eps}}{\vol_{\pi_M^* g_B}} \stackrel{\eqref{eq:Geps}}{=}  1 - \eps \vartheta_\eps,
\end{equation}
with $\vartheta_\eps=\cO(1)$ in $C^\infty_\mathrm{b}(M)$, is the relative density of the two measures. The factor $\eps^{-b}$ is due to the scaling of the total volume and does not play any role in the operator.

The transformed operator can be split into an horizontal differential operator, a vertical operator (with Dirichlet conditions), and an $\eps$-dependent potential.
\begin{align}
\label{eq:H}
H^\sigma:=& \; {U}_{\rho_\eps} \bigl(-\Delta_{G^\eps}^{\eps^{-\sigma}\cA^\eps}\bigr){U}_{\rho_\eps}^\dagger \\
=&\; \eps^2\fL\left(G^\eps_\mathrm{hor},\eps^{-\sigma}\cA^\eps_\mathrm{hor}\right) 
+\fL\left(g_\sV,\eps^{2-\sigma}\cA^\eps_\sV\right) 
+ V_{\rho_\eps} \notag.
\end{align}
The objects appearing in \eqref{eq:H} are defined as follows:
\begin{itemize}
\item For a symmetric two-tensor $\fg$ and a one-form $\fA$ on $M$ the differential operator $\fL(\fg,\fA) $ is defined by   its quadratic form 
\begin{equation}\label{eq:Deltahor}
\scpro{\Psi}{\fL(\fg,\fA) \Psi}_\cH
=\int_M \fg
\bigl(\overline{\nabla^\fA\Psi},
\nabla^\fA \Psi\bigr)  \vol_{g},\quad \nabla^\fA:= \dd+\I\fA
\end{equation}
defined on $W^{1,2}_0(M,g)$.
\item 
  The horizontal magnetic potential is given by 
(cf. Equation~\eqref{eq:Aeps def}) 
\begin{equation*}
 \cA_\mathrm{hor}^\eps:= \pi_M^*\cA_B + \eps \cA^\eps_\sH.
\end{equation*}
The factor $\eps^2$ in front of the horizontal operator $\fL\left(G^\eps_\mathrm{hor},\eps^{-\sigma}\cA^\eps_\mathrm{hor}\right)$  is due to the fact that, on the co-vectors $\sT^* M$, $G^\eps=\eps^2 G^\eps_{\rm hor} + g_\sV$.
\item The restriction of the vertical operator to the fibre of $M_x$ over $x$ acts as (minus) the Laplacian associated to the metric $\rs{g_\sV}{M_x}=g_{M_x}$ and the magnetic potential $\eps^{2-\sigma} \cA_\sV^\eps$, that is:
\[
\rs{\fL\bigl(g_\sV,\eps^{2-\sigma}\cA_\sV^\eps\bigr)}{M_x}
= -\Delta_{g_{M_x}}^{\eps^{2-\sigma}\cA_\sV^\eps},
\]
note our sign convention. It is self-adjoint on the Dirichlet domain 
\begin{equation*}
 \cD(M_x):=W^{2,2}(M_x, g_{M_x})\cap W^1_0(M_x, g_{M_x}).
\end{equation*}

\item The geometric potential $V_{\rho_\eps}$ arises from the change of the volume measure and captures effects of the extrinsic curvature of the embedding $B\hookrightarrow \RR^{b+f}$. It takes the form 
\begin{align*}
V_{\rho_\eps} &= \tfrac{1}{2}\Delta_{G^\eps} \ln\rho_\eps - \tfrac{1}{4} G^\eps(\dd\ln\rho_\eps,\dd\ln\rho_\eps) \\
& = \tfrac{1}{2}(-\eps^2 \fL(\pi^*_M g_B, 0) - \fL(g_\sV, 0))\ln\rho_\eps + \tfrac{1}{4} g_\sV(\dd\ln\rho_\eps,\dd\ln\rho_\eps) + \cO(\eps^4),
\end{align*}
which is the same as for $\cA=0$, see~\cite[Equation~(2.5)]{Haag2015}. 
This potential is of order $\eps^2$ for massive waveguides and of order~$\eps$ for hollow waveguides.

\end{itemize}



The operator $H^\sigma$ is self-adjoint on the Dirichlet domain 
\[
\dom(H^\sigma)=W^{2,2}(M,g)\cap W^{1,2}_0(M,g)\,.
\] 
%
\begin{Remark} \label{rem:vertical}
Due to the fact that the magnetic potential in the vertical operator is of order $\eps^{2-\sigma}$, we will be able to treat this perturbatively, even for $\sigma=1$. 
We have
\begin{equation*}
 \Delta_{g_{M_x}}^{\eps^{2-\sigma}\cA_\sV^\eps}= \Delta_{g_{M_x}} + \cO(\eps^{2-\sigma}),
\end{equation*}
and the error will be uniform in $x$ under our hypothesis.
Physically speaking, this effect occurs because the shrinking of the initial tube $\cT^\eps$ in the transversal directions leaves the magnitude of the magnetic field unchanged, while the influence of the vertical differential operator increases.
The perturbation by $\cA_\sV^\eps$ will play a role in the effective operator, if we aim for high precision when $\sigma=0$, and in general when $\sigma=1$.
\end{Remark}
\begin{Remark}
We will exclusively work with $L^2$-Sobolev spaces, so we will drop the corresponding superscript from now on and write $W^{k,2}=W^k$.
When these spaces are associated with the Riemannian manifolds $(M,g)$, $(B,g_B)$ and $(M_x,g_{M_x})$, for $x\in B$, we will omit the dependence on the metric in the notation.
In our setting, these manifolds (with boundary) are of bounded geometry in the sense of \cite{Schick1996}, and the Sobolev norms are defined by summing local norms~\cite[Defintion~3.23]{Schick1996}:
\[
\norm{\psi}^2_{W^k}:= \sum_{\nu\in\ZZ} \norm{(\chi_\nu\psi)\circ\kappa_\nu^{-1}}^2_{W^k(\kappa_\nu(U_\nu))},
\] 
where $\{U_\nu\}_{\nu\in\ZZ}$ is an atlas of normal coordinate charts $\kappa_\nu:U_\nu\to\RR^n$, adapted to the metric, with subordinate partition of unity $\chi_\nu$. 
\end{Remark}

\section{Adiabatic perturbation theory} \label{chap:results}
%
%
%

In this section we discuss the approximation of the spectrum of $H^\sigma$ by that of the adiabatic operator $\Ha^\sigma$, which is an operator on $L^2(B)$. Our results rely on the general construction of (super-)adiabatic approximations  in~\cite{Haag2016b}, and we will refer to that paper for most of the proofs. The general strategy of adiabatic perturbation theory (for pseudo-differential operators) is reviewed, e.g., in \cite{Martinez2007}.

Before going into the technical details, let us briefly sketch the basic ideas of our approach. 
We rewrite  the Schr\"{o}dinger operator~\eqref{eq:H} in the form 
\begin{align*}
H^\sigma =& \; \eps^2\fL\left(G^\eps_\mathrm{hor},\eps^{-\sigma}\cA^\eps_\mathrm{hor}\right) 
+\fL\left(g_\sV,\eps^{2-\sigma}\cA^\eps_\sV\right) 
+ V_{\rho_\eps} \\
=& \;  \eps^2 \fL\left(G_\mathrm{hor}^\eps,\eps^{-\sigma}\cA_\mathrm{hor}^\eps\right) + \eps^2 V_\mathrm{bend} + H^{\sV}_\cA + \cO(\eps^4)
\end{align*}
where $H^{\sV}_\cA$ is a  fibrewise vertical operator
 that is a perturbation of $\fL(g_\sV, 0)$ and $V_\mathrm{bend}$ is a potential, see Equations \eqref{eq:HAV} and \eqref{eq:Vbend}. Let, for $x\in B$, 
\begin{equation*}
 \lambda_\cA(x):=\min\sigma(H^{\sV}_\cA(x))
\end{equation*}
be the ground-state energy of its restriction to $L^2(M_x)$. Let $P_\cA(x)$ be the projection to the associated eigenspace. Since, under our assumptions, the ground-state is a simple eigenvalue, the ranges of $P_\cA(x)$ define a line bundle over $B$. This line bundle is in fact trivial, because the range of $P_\cA(x)$ is close to the ground-state eigenspace of $\fL(g_\sV, 0)\vert_{M_x}$.
The projection $P_\cA$ defines an operator on $\cH$ by setting, for $x\in B$ and $y\in M_x$, $(P_\cA\psi)(y)=(P_\cA(x)\psi)(y)$. The adiabatic operator is then given by
\begin{equation}
\label{eq:Ha}
\Ha^\sigma := P_\cA H^\sigma P_\cA = \eps^2 P_\cA \bigl( \fL\left(G_\mathrm{hor}^\eps,\eps^{-\sigma}\cA_\mathrm{hor}^\eps\right) + V_\mathrm{bend}\bigr) P_\cA + \lambda_\cA P_\cA
\end{equation}
By the triviality of the line bundle associated with $P_\cA$, we may consider this as an operator on $L^2(B)$. 
Consider the difference 
\begin{equation}
(H^\sigma -\Ha^\sigma) P_\cA 
= \eps^2 \bigl[ \fL\left(G_\mathrm{hor}^\eps,\eps^{-\sigma}\cA_\mathrm{hor}^\eps\right) + V_\mathrm{bend},P_\cA\bigr]P_\cA.
\label{eq:comm0}
\end{equation}
If we can show that this is small, the adiabatic operator provides an approximation of $H^\sigma$ on the range of $P_\cA$.

For $\sigma=0$ one sees rather easily that
\begin{equation*} 
[\eps (\dd + \I \cA^\eps_{\mathrm{hor}}), P_\cA]= [\eps (\dd + \I \pi^*_M\cA_B + \eps\cA^\eps_{\sH}), P_\cA]= \cO(\eps),
\end{equation*}
as $\I\eps  \cA^\eps_{\mathrm{hor}}$ is of order $\eps$ by itself and $[\dd, P_\cA]$ can be bounded uniformly in $\eps$, as the eigenfunctions of $H^\sV_\cA(x)$ vary smoothly with $x$. To obtain the same result for $\sigma=1$ it is
important to recall that $\eps^{-1}\cA^\eps_{\mathrm{hor}}=\eps^{-1}\pi^*_M\cA_B + \cA^\eps_{\sH}$, and then note that $\pi^*_M\cA_B$ commutes with the projection $P_\cA$, because it does not depend on the fibre-coordinates.
%
%
This implies that~\eqref{eq:comm0} is of order $\eps$, since $\fL\left(G_\mathrm{hor}^\eps,\eps^{-\sigma}\cA_\mathrm{hor}^\eps\right)$ is a quadratic expression in $\dd + \I \eps^{-\sigma}\cA^\eps_{\mathrm{hor}}$.
So in both cases~\eqref{eq:comm0} is of order $\eps$, as an operator from $\dom(H^\sigma)$ to $\cH$ (it is not of order $\eps^2$, as the graph-norm of $H^\sigma$ is also $\eps$-dependent).
Consequently, the adiabatic operator provides an approximation of $H^\sigma$, with errors of order $\eps$. 

However, since the magnetic and geometric terms in the operator $H^\sigma$ are of order $\eps^{1-\sigma}$ and $\eps^2$, respectively, it is desirable to construct a slightly tilted super-adiabatic projection $P_\cA^\eps = P_\cA+\cO(\eps)$, satisfying $[H^\sigma,P_\cA^\eps]P_\cA^\eps=\cO(\eps^N)$, in a suitable sense, and an intertwining unitary operator $U^\sigma_\eps$ on $\cH$ with $P_\cA^\eps U_\eps = U_\eps P_\cA$, such that the effective operator
\[
\Heff^\sigma := \bigl(U_\eps^\sigma\bigr)^\dagger P_\cA^\eps H^\sigma P_\cA^\eps U^\sigma_\eps = P_\cA \bigl(U_\eps^\sigma\bigr)^\dagger H^\sigma U^\sigma_\eps P_\cA
\]
provides a better approximation of $H^\sigma$ than $\Ha^\sigma$ does. The details of the construction of $P_\cA^\eps$ can be found in~\cite{Haag2016b}.

In the construction just described the choice of a vertical operator $H^{\sV}_\cA$ is not unique.
We make the choice of including all vertical differential operators in $H^{\sV}_\cA$ and, for the case of hollow wave guides, include also parts of the potential $V_{\rho_\eps}$, 
\begin{equation}\label{eq:HAV}
H^{\sV}_\cA := \begin{cases} 
\fL(g_\sV,\eps^{2-\sigma}\cA_\sV^\eps),&\quad \text{mas.} \\
\fL(g_\sV,\eps^{2-\sigma}\cA_\sV^\eps)- \tfrac{1}{2}\fL(g_\sV, 0)\ln\rho_\eps + \tfrac{1}{4}g_\sV(\dd\ln\rho_\eps,\dd\ln\rho_\eps) &\quad\text{hol.} \end{cases}\,,
\end{equation}
and  
\begin{equation}
\label{eq:Vbend}
\eps^2 V_\mathrm{bend} = \begin{cases} 
V_{\rho_\eps},&\quad \text{mas.}\\
-\tfrac{1}{2}\eps^2 \fL(\pi_M^*g_B,0)\ln\rho_\eps&\quad\text{hol.} 
\end{cases}\,.
\end{equation}
We consider $H^\sV_\cA$ as a fibrewise perturbation of the non-magnetic operator $H^\sV_0 := \rs{H^\sV_\cA}{\cA_\sV^\eps=0}$.
In the hollow case we include potential terms into  $H^\sV_\cA$ that arise from the change of volume measure in the vertical Laplacian and thus do not change the ground state eigenvalue of $H^\sV_0$.

For all $x\in B$ the operator $H^\sV_\cA(x)$ defines a self-adjoint operator on $L^2(M_x)$ with Dirichlet domain $\cD(M_x)$ due to the Kat\={o}-Rellich theorem. The compactness of the fibres $(M_x,g_{M_x})$ yields that $\sigma(H^\sV_\cA(x))$ solely consists of discrete eigenvalues of finite multiplicity accumulating at infinity. 
Let $\lambda_0$ be the ground-state energy of $H^\sV_0$ with a spectral projection $P_0$. That is, for $x\in B$, (cf.~\cite[Section~5.2]{Haag2016a}):
\begin{itemize}
\item \emph{massive quantum waveguides:} \\
The smallest eigenvalue  $\lambda_0>0$ of $\fL(g_\sV, 0)$ with Dirichlet boundary conditions on $M_x$. The associated eigenfunction $\phi_0(x)$ can be chosen real and positive, and $P_0(x)$ projects to the space spanned by this function. 
%
\item \emph{hollow quantum waveguides:} \\
It holds that $\lambda_0\equiv 0$ with ($\eps$-dependent) ground state
\begin{equation}
\label{eq:phi0hol}
\phi_0(x) = \frac{\rho_\eps^{1/2}}{\norm{\rho_\eps}_{L^2(M_x)}} = \Vol_{g_{M_x}}(M_x)^{-1/2} + \cO(\eps).
\end{equation}
\end{itemize}
The latter is the reason why, in the hollow case, we added additional terms to $H^\sV_\cA$. These terms are of order $\eps$, but they have a trivial effect on $\lambda_0$, which would be somewhat obscured if we treat them as a part of the bending potential. 

To first order in perturbation theory, $\lambda_\cA-\lambda_0$ is given by
\[
- 2 \eps^{2-\sigma} \Im\left(\int_{M_x} g_{M_x}\bigl(\dd\phi_0,\cA_\sV^{\eps=0}\phi_0)\vol_{g_{M_x}}\right) = 0.
\]
This equals zero because $\phi_0$ is real-valued. Thus, the asymptotic expansion of the magnetic ground state eigenband reads
\[
\lambda_\cA(x) = \begin{cases} \lambda_0(x) + \cO(\eps^{4-2\sigma}),& \quad  \text{mas.} \\
\cO(\eps^{4-2\sigma}),&\quad\text{hol.} \end{cases}
\]
with $x$-uniform errors due to Definition~\ref{def:QWGbound}. 

The bounded geometry of the waveguide $M$ implies that $\lambda_0(x)$ is separated from the rest of $\sigma(H^\sV_0(x))$ by an $x$-uniform gap~\cite[Proposition~4.1]{Lampart2016}. Thus $\lambda_\cA(x)$ also has a gap, of size $\delta>0$ uniformly in $x$ and $\eps$, for $\eps$ small enough (see \cite[Lemma~5.13]{Haag2016a}). Moreover $P_\cA=P_0+\cO(\eps^{2-\sigma})$ and $\norm{P_\cA(x) \phi_0}_{L^2(M_x)}\geq C$.
We can thus write the entire normalised ground state of $H^\sV_\cA$ as
\[
\phi_\cA(x) := \frac{P_\cA(x)\phi_0(x)}{\norm{P_\cA \phi_0}_{L^2(M_x)}}.
\]
%
This gives a natural choice for the isomorphism of the range of $P_\cA$ with $L^2(B)$, $\Psi\mapsto \scpro{\phi_\cA(x)}{\rs{\Psi}{M_x}}_{L^2(M_x)}$.

In  general, $\Ha^\sigma$ and $\Heff^\sigma$ approximate $H^\sigma$ only on the range of $P_\cA$. If we restrict ourselves to energies below
\begin{equation*}                                                                                                                                               
\Lambda:= \inf_{x\in B} \min(\sigma(H^\sV_0(x))\backslash\{\lambda_0(x)\})                                                                                                                                   \end{equation*}
this restriction is not necessary. This is due to the fact that, for a function $\psi\in L^2(M)$ in the space $\id_{(-\infty, \Lambda]}(H^\sigma)\cH$, the vertical mode $\rs{\psi}{M_x}$ has to concentrate in the  ground state of $H^\sV_\cA$. These considerations lead to the following theorem:

\begin{Theorem} \label{thm:main1}
Let $(M,\cA)$ be a magnetic quantum waveguide of bounded geometry and set $\Lambda:= \inf_{x\in B} \min(\sigma(H^\sV_0(x))\backslash\{\lambda_0(x)\})$. Then for all $N\in\NN$ there exist a projection $P_\cA^\eps$ and a unitary operator $U_\eps^\sigma\in\cL(\cH)\cap\cL(\dom(H^\sigma))$, intertwining $P_\cA$ and $P_\cA^\eps$, such that for every cut-off $\chi\in C^\infty_0((-\infty,\Lambda],[0,1])$, satisfying $\chi^p\in C^\infty_0((-\infty,\Lambda],[0,1])$ for all $p>0$, we have
\[
\norm{H^\sigma \chi(H^\sigma) - U_\eps^\sigma \Heff^\sigma\chi(\Heff^\sigma) \bigl(U_\eps^\sigma\bigr)^\dagger}_\cH = \cO(\eps^N).
\]
In particular, the Hausdorff distance between the spectra of $H^\sigma$ and $\Heff^\sigma$ is small, \iec it holds for every $\delta>0$ that
\[
\dist_\mathrm{H} \bigl(\sigma(H^\sigma)\cap(-\infty,\Lambda-\delta],\sigma(\Heff^\sigma)\cap(-\infty,\Lambda-\delta]\bigr)
\]
is of order $\eps^N$.
\end{Theorem}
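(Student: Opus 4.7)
The plan is to reduce the claim to the general super-adiabatic construction of \cite{Haag2016b}, and then carry out a separate energy-localization argument that extends the approximation from the range of $P_\cA^\eps$ to the spectral subspace $\chi(H^\sigma)\cH$ for all cut-offs supported below $\Lambda$.

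For the first step I would check that the hypotheses of \cite{Haag2016b} hold in the present setting. Definition~\ref{def:QWGbound} yields the uniformly bounded fibre-bundle geometry, and it was established above that $\lambda_\cA(x)$ is separated from $\sigma(H^\sV_\cA(x))\setminus\{\lambda_\cA(x)\}$ by a gap of size $\delta>0$ uniformly in $x\in B$ and in $\eps$, via $\|P_\cA - P_0\| = \cO(\eps^{2-\sigma})$. Invoking the construction of \cite{Haag2016b} I then obtain, for each prescribed $N\in\NN$, a super-adiabatic projection $P_\cA^\eps = P_\cA + \cO(\eps)$, an intertwining unitary $U_\eps^\sigma\in\cL(\cH)\cap\cL(\dom(H^\sigma))$ with $U_\eps^\sigma P_\cA = P_\cA^\eps U_\eps^\sigma$, and the commutator bound
\[
\bnorm{[H^\sigma,P_\cA^\eps]P_\cA^\eps}_{\dom(H^\sigma)\to\cH} = \cO(\eps^N).
\]
Because $U_\eps^\sigma \Heff^\sigma (U_\eps^\sigma)^\dagger = P_\cA^\eps H^\sigma P_\cA^\eps$, this commutator estimate directly implies the desired operator-norm bound when applied to any state in $P_\cA^\eps \cH$.

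The second and more delicate step is the confinement estimate
\[
\bnorm{(1-P_\cA^\eps)\chi(H^\sigma)}_\cH = \cO(\eps^N),
\]
which allows us to remove the projection from the approximation and deal with $\chi(H^\sigma)\cH$ directly. The underlying mechanism is that on the range of $1-P_\cA$ the fibrewise operator $H^\sV_\cA(x)$ has spectrum above $\Lambda-o(1)$, and after the super-adiabatic correction the analogous statement holds for $P_\cA^\eps$ modulo $\cO(\eps^N)$. Concretely, I would represent $\chi(H^\sigma)$ by a Helffer--Sjöstrand formula and iteratively commute $(1-P_\cA^\eps)$ through each resolvent $(H^\sigma-z)^{-1}$: each step eats one commutator, contributing a factor $\eps$, and leaves behind a factor of $\chi_1(H^\sigma)\chi_2(H^\sigma)$ obtained by factoring the cut-off. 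The hypothesis that every power $\chi^p$ lies in $C_0^\infty((-\infty,\Lambda],[0,1])$ is exactly what ensures the set of admissible cut-offs is stable under this factorisation, so that the iteration can be continued to any order $N$. The main book-keeping difficulty here, and the principal obstacle, is that the graph norm of $H^\sigma$ is itself $\eps$-dependent: horizontal derivatives come with an explicit factor $\eps$, so one must check that the commutator bounds are weighted with matching powers of $\eps$ in order for the iteration to close with a net gain of $\eps$ at each step.

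Combining the two steps yields the operator-norm estimate by inserting $\id = P_\cA^\eps + (1-P_\cA^\eps)$ between $H^\sigma$ and $\chi(H^\sigma)$ (and likewise on the effective side via $U_\eps^\sigma$), using the first step on the $P_\cA^\eps$-part and the confinement estimate on the complement. The Hausdorff spectral statement then follows by a standard approximate-eigenfunction argument: any $E\in\sigma(H^\sigma)\cap(-\infty,\Lambda-\delta]$ admits a Weyl sequence, and applying the norm estimate with a $\chi$ equal to one in a neighbourhood of $E$ and supported in $(-\infty,\Lambda-\delta/2]$ produces an approximate eigenfunction of $U_\eps^\sigma \Heff^\sigma(U_\eps^\sigma)^\dagger$ with error $\cO(\eps^N)$, and conversely; the roles of $H^\sigma$ and $\Heff^\sigma$ being symmetric under the unitary $U_\eps^\sigma$, this gives the two-sided inequality required for the Hausdorff distance.
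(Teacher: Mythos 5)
Your overall architecture is sound, but it amounts to re-deriving the general machinery of \cite{Haag2016b} rather than applying it: the paper's proof is a short reduction to \cite[Corollary~2.3]{Haag2016b}, taking the trivial line bundle $\cE=M\times\CC$ with a carefully chosen connection and treating the remaining terms as perturbations. Your sketch of the super-adiabatic construction and of the confinement estimate $\norm{(1-P_\cA^\eps)\chi(H^\sigma)}_\cH=\cO(\eps^N)$ via Helffer--Sj\"ostrand and iterated commutators is a reasonable account of what happens inside that reference, and your remark on the role of the hypothesis $\chi^p\in C^\infty_0((-\infty,\Lambda],[0,1])$ is on target.

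The genuine gap is in your ``check that the hypotheses hold'' step, specifically for $\sigma=1$. The operator $H^{\sigma=1}$ contains the horizontal magnetic potential $\eps^{-1}\cA^\eps_{\mathrm{hor}}=\eps^{-1}\pi_M^*\cA_B+\cA^\eps_\sH$, whose leading part diverges as $\eps\to 0$; it is therefore \emph{not} a small (or even bounded) perturbation of a fixed connection Laplacian, and the framework of \cite{Haag2016b} does not apply to it naively. Your asserted bound $[H^\sigma,P_\cA^\eps]P_\cA^\eps=\cO(\eps^N)$ is exactly the point at issue: without further input the term $\I\pi_M^*\cA_B$ only gives $[\eps(\dd+\I\eps^{-1}\cA^\eps_{\mathrm{hor}}),P_\cA]=\cO(1)$. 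The resolution --- and the actual content of the paper's proof --- is that $\pi_M^*\cA_B$ is constant along the fibres and hence commutes with $P_\cA$ (and with the whole super-adiabatic hierarchy), so the strong part can be absorbed into the horizontal Laplacian over the base; one then takes the connection $\nabla^{M\times\CC}=\dd+\I\cA_\sH^{\eps=0}$ on $\cE$ for $\sigma=1$ (versus $\nabla^{M\times\CC}=\dd+\I\pi_M^*\cA_B$ for $\sigma=0$) and treats only the genuinely higher-order corrections perturbatively. Your proposal never distinguishes the two values of $\sigma$ and never specifies the reference connection, so as written it establishes the theorem only for moderate fields.
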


\begin{proof}
For $\sigma=0$ this is an immediate consequence of \cite[Corollary~2.3]{Haag2016b}. More presicely (in the notation of the mentioned work), one takes the trivial line bundle $\cE=M\times\CC$ endowed with the connection $\nabla^{M\times\CC}=\dd+\I\pi_M^*\cA_B$ and treats the corrections of the metric and the connection as a perturbation.

For $\sigma=1$ we use the connection $\nabla^{M\times\CC}=\dd+\I\cA_\sH^{\eps=0}$ and treat the higher order corrections again as a perturbation. The contribution of the remaining strong part $\eps^{-1}\pi_M^*\cA_B$ of the magnetic potential may be added to the horizontal Laplacian, since it does not depend on the fibre coordinates (see \cite[Lemma~5.19]{Haag2016a} for the details).
\end{proof}
For $N=1$ we can choose $P_\cA^\eps=P_\cA$, so at first sight the approximation of $H^\sigma$ by $\Ha^\sigma$ (which is much simpler compared to $\Heff^\sigma$ since it does neither incorporate the complicated super-adiabatic projection $P_\cA^\eps$ nor the unitary $U_\eps^\sigma$) yields errors of order $\eps$. More careful inspection shows that for $N>1$ we have $\Ha^\sigma-\Heff^\sigma=\cO(\eps^2)$ as an operator from $W^2(B)$ to $L^2(B)$, so the statement on the spectrum holds for $\Ha^\sigma$ with an error of order $\eps^2$. 

\begin{Corollary} \label{cor:main1}
Let $(M,\cA)$ be a magnetic quantum waveguide of bounded geometry and set $\Lambda:= \inf_{x\in B} \min(\sigma(H^\sV_0(x))\backslash\{\lambda_0(x)\})$. Then there exists a a unitary operator $U_\eps^\sigma\in\cL(\cH)\cap\cL(\dom(H^\sigma))$ such that for every cut-off function $\chi\in C^\infty_0((-\infty,\Lambda],[0,1])$, satisfying $\chi^p\in C^\infty_0((-\infty,\Lambda],[0,1])$ for all $p>0$, we have
\[
\norm{H^\sigma \chi(H^\sigma) - U_\eps^\sigma \Ha^\sigma\chi(\Ha^\sigma) \bigl(U_\eps^\sigma\bigr)^\dagger}_\cH = \cO(\eps^2).
\]
and, in particular, it holds for every $\delta>0$ that
\[
\dist_\mathrm{H} \bigl(\sigma(H^\sigma)\cap(-\infty,\Lambda-\delta],\sigma(\Ha^\sigma)\cap(-\infty,\Lambda-\delta]\bigr)=\cO(\eps^2).
\]
\end{Corollary}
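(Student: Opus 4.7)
The plan is to deduce the corollary from Theorem~\ref{thm:main1} applied with $N=2$, together with a quantitative comparison between the super-adiabatic effective operator $\Heff^\sigma$ and the simpler adiabatic operator $\Ha^\sigma$. The core technical input, to be read off from the explicit construction of $P_\cA^\eps$ and $U_\eps^\sigma$ in~\cite{Haag2016b}, is the bound
\[
\bnorm{(\Heff^\sigma - \Ha^\sigma)\psi}_{L^2(B)} \leq C\,\eps^2\, \norm{\psi}_{W^2(B)},\qquad \psi \in W^2(B),
\]
with $C$ uniform in $\eps$. This is a genuine improvement on the naive $\cO(\eps)$ bound that one would get from $P_\cA^\eps - P_\cA = \cO(\eps)$ and $U_\eps^\sigma - \id = \cO(\eps)$ alone. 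It rests on the observation that the $\eps$-linear corrections in the super-adiabatic construction are chosen precisely to cancel the leading commutator in~\eqref{eq:comm0}, so that the terms that survive at first order all involve the horizontal second-order operator and therefore carry the explicit $\eps^2$-prefactor from $\eps^2\fL\bigl(G^\eps_\mathrm{hor},\eps^{-\sigma}\cA^\eps_\mathrm{hor}\bigr)$. In the hollow case, care has to be taken because $V_{\rho_\eps}$ is only of order $\eps$; the point is that the contribution that does not commute with the projection has already been absorbed into $H^\sV_\cA$ via the choice~\eqref{eq:HAV}, so that only $\eps^2 V_\mathrm{bend}$ remains in the horizontal part.

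The second step converts this $W^2\to L^2$ bound into an operator-norm bound under the low-energy cut-off. The support property of $\chi$, combined with the ellipticity of $\Heff^\sigma$ and the compact support of $\chi$ well below~$\Lambda$, controls the relevant Sobolev norms of vectors in $\chi(\Heff^\sigma)L^2(B)$ uniformly in $\eps$, so that
\[
\bnorm{(\Heff^\sigma - \Ha^\sigma)\chi(\Heff^\sigma)}_{\cL(L^2(B))} = \cO(\eps^2).
\]
To replace $\chi(\Heff^\sigma)$ by $\chi(\Ha^\sigma)$ on the right I would write
\[
\Heff^\sigma\chi(\Heff^\sigma) - \Ha^\sigma\chi(\Ha^\sigma) = (\Heff^\sigma - \Ha^\sigma)\chi(\Heff^\sigma) + \Ha^\sigma\bigl(\chi(\Heff^\sigma) - \chi(\Ha^\sigma)\bigr),
\]
and estimate the second summand via a Helffer--Sj\"ostrand representation of $\chi$ combined with the resolvent identity, which reduces matters to the same $W^2\!\to\! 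L^2$ bound at the level of the resolvents $(z - \Heff^\sigma)^{-1}$ for $z$ away from the spectrum.

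Composing with the unitary $U_\eps^\sigma$ provided by Theorem~\ref{thm:main1} then yields the claimed $\cO(\eps^2)$ bound on $H^\sigma\chi(H^\sigma) - U_\eps^\sigma\Ha^\sigma\chi(\Ha^\sigma)(U_\eps^\sigma)^\dagger$. The Hausdorff-distance statement on the spectra below $\Lambda - \delta$ follows from the standard fact that, for self-adjoint operators $H$ and $K$ and a cut-off $\chi$ with $\chi \equiv 1$ on $(-\infty,\Lambda-\delta]$, an $\cO(\eps^2)$ bound on $H\chi(H) - K\chi(K)$ forces the Hausdorff distance of the spectra in this window to be $\cO(\eps^2)$ (via the Weyl criterion applied to approximate eigenfunctions). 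The principal obstacle throughout is the first item: the $\eps^2$-bound on $\Heff^\sigma - \Ha^\sigma$ is not a consequence of abstract perturbation theory but requires tracking the explicit first-order cancellation built into the super-adiabatic construction of~\cite{Haag2016b}, which is where most of the work lies.
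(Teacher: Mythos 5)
Your proposal is correct and follows essentially the same route as the paper: the authors deduce the corollary from Theorem~\ref{thm:main1} (with $N>1$) together with the bound $\Ha^\sigma-\Heff^\sigma=\cO(\eps^2)$ as an operator from $W^2(B)$ to $L^2(B)$, which they attribute to the structure of the super-adiabatic construction in~\cite{Haag2016b} exactly as you describe. The paper leaves the conversion of this $W^2\to L^2$ bound into the cut-off operator-norm statement implicit, so your Helffer--Sj\"ostrand step is just a fleshing-out of the same argument rather than a different approach.
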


We may further improve the adiabatic approximation if we focus on the low-lying part of the spectrum near $\inf \sigma(H^\sigma)$. 
The behaviour in this energy regime clearly corresponds to that of the low-lying part of the adiabatic operator $\Ha^\sigma$ and therefore is dominated by the characteristics of the unperturbed ground state band $\lambda_0=\min \sigma(H^\sV_0)$:
\begin{itemize}
\item If $x\mapsto \lambda_0(x)$ has a unique non-degenerate minimum $\Lambda_0=\lambda_0(x_0)$ on $B$, the results obtained in \cite{Simon1983} suggest that the leading part of the adiabatic operator behaves like an harmonic oscillator close to $x_0$, schematically
\[
\Ha^\sigma - \Lambda_0 = - \eps^2 \Delta_B + c \bigl(\dist_{g_B }(x,x_0)\bigr)^2 + \dots,\quad c>0,
\]
with eigenvalue spacing of order $\eps$, so the interesting scale for small energies is $\alpha=1$.  In this case there is no spectrum of $\Ha^\sigma$ in the interval $(-\infty, \Lambda_0 + C \eps^2]$, for $\eps$ small enough, which will imply $\sigma(H^\sigma)\cap(-\infty,\Lambda_0+C\eps^2]=\emptyset$.
\item If $\lambda_0(x)= \Lambda_0$ is constant, the adiabatic operator
is given by
\[
\Ha^\sigma - \Lambda_0 =  \eps^2(-\Delta_B + \dots)
\]
and its eigenvalues, if they exist, scale as $\eps^2$. We will see that the latter approximate those of $H^\sigma$ up to errors of order $\eps^4$.
\end{itemize}
More generally we will investigate energies of order $\eps^\alpha$ above the bottom $\Lambda_0:=\inf_{x\in B}\lambda_0(x)$ of the vertical operator for $0<\alpha\leq 2$. The following theorem states that the mutual approximation in this regime is better, compared to Corollary~\ref{cor:main1}, by a factor $\eps^{\alpha/2}$ and even by a factor $\eps^\alpha$ if one merely considers eigenvalues. For moderate magnetic fields ($\sigma=0$) this can be done without further assumptions, for strong magnetic fields ($\sigma=1$) the additional hypothesis that $\cA_B=c^*\cA$ may be gauged away is needed. If $B$ is simply connected, this is equivalent to the vanishing of the pulled-back magnetic field, $\cB_B:= c^* \dd \cA  = 0$. The latter always vanishes if $b=1$. If $\cA_B$ is not exact, then we expect that the adiabatic Hamiltonian $\Ha^\sigma$  needs to be modified by additional terms to achieve the same precision as in the  following theorem.
%
%
%

\begin{Theorem} \label{thm:main2}
Let $(M,\cA)$ be a magnetic quantum waveguide of bounded geometry and $0<\alpha\leq 2$. 
If $\sigma=1$, assume additionally that $\cA_B$ is an exact one-form on $B$.
Then, for every $C>0$, it holds that
\[
\dist_\mathrm{H} \bigl(\sigma(H^\sigma)\cap(-\infty,\Lambda_0+C\eps^\alpha],\sigma(\Ha^\sigma)\cap(-\infty,\Lambda_0+C\eps^\alpha]\bigr)
\]
is of order $\eps^{2+\alpha/2}$. 

If, moreover,  $\Lambda_0+C\eps^\alpha$ is strictly below the essential spectrum of $\Ha^\sigma$, in the sense that for some $\delta>0$ the spectral projection $\id_{(-\infty,\Lambda_0+(C+\delta)\eps^\alpha]}(\Ha^\sigma)$ has finite rank, for $\eps>0$ small enough. Then, if $\mu_0^\sigma < \mu_1^\sigma\leq\dots\leq\mu_K^\sigma$ are all the eigenvalues of $\Ha^\sigma$ below $\Lambda_0+C\eps^\alpha$, $H^\sigma$ has at least $K+1$ eigenvalues $\nu_0^\sigma<\nu_1^\sigma\leq\dots\leq \nu_K^\sigma$ below its essential spectrum and
\[
\abs{\mu_j^\sigma-\nu_j^\sigma} = \cO(\eps^{2+\alpha}),
\]
for $j\in\{0,\dots,K\}$.
\end{Theorem}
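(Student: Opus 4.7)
By Theorem~\ref{thm:main1}, the spectra $\sigma(H^\sigma)$ and $\sigma(\Heff^\sigma)$ agree modulo $\cO(\eps^N)$ inside $(-\infty,\Lambda-\delta]$, an error strictly weaker than the target $\cO(\eps^{2+\alpha/2})$. Since $\Lambda_0+C\eps^\alpha<\Lambda$ for $\eps$ small enough, the problem reduces to comparing $\Ha^\sigma$ with $\Heff^\sigma$ in this low-energy window. When $\sigma=1$, the exactness of $\cA_B=\dd f$ permits the gauge transformation $\psi\mapsto\mathrm{e}^{-\I\eps^{-1}\pi_M^*f}\psi$, which cancels the singular term $\eps^{-1}\pi_M^*\cA_B$ in $\cA_\mathrm{hor}^\eps$. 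After this gauge the horizontal connection is of order $\eps^0$ and the argument proceeds uniformly in $\sigma\in\{0,1\}$.

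\textbf{Kinetic-energy bound on low-energy states.} The quadratic-form expression
\begin{equation*}
 \scpro{\phi}{\Ha^\sigma\phi}_\cH = \eps^2 \snorm{\nabla^{\eps^{-\sigma}\cA_\mathrm{hor}^\eps}\phi}_{G^\eps_\mathrm{hor}}^2 + \eps^2 \scpro{\phi}{V_\mathrm{bend}\phi}_\cH + \scpro{\phi}{\lambda_\cA \phi}_\cH,
\end{equation*}
combined with the uniform lower bound $\lambda_\cA \geq \Lambda_0 - \cO(\eps^{4-2\sigma})$ and the $\cO(\eps^2)$ size of the bending term, shows that every $\phi$ in the range of $\id_{(-\infty,\Lambda_0+C\eps^\alpha]}(\Ha^\sigma)$ satisfies
\begin{equation*}
 \snorm{\eps \nabla^{\eps^{-\sigma}\cA_\mathrm{hor}^\eps}\phi}_{G^\eps_\mathrm{hor}} \leq C\eps^{\alpha/2}\norm{\phi}_\cH.
\end{equation*}
A parallel bound on eigenfunctions of $H^\sigma$ below $\Lambda_0+C\eps^\alpha$ follows from Theorem~\ref{thm:main1} together with the uniform spectral gap of $H^\sV_\cA$: the $P_\cA^\eps$-orthogonal component of such an eigenfunction costs an energy gap and is therefore of order $\eps^N$ in $\cH$, hence negligible for the horizontal kinetic energy as well.

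\textbf{Enhanced comparison of $\Ha^\sigma$ and $\Heff^\sigma$.} From the construction in \cite{Haag2016b}, the correction $\Heff^\sigma - \Ha^\sigma$ decomposes schematically as
\begin{equation*}
\Heff^\sigma - \Ha^\sigma = \eps^2 (\eps\nabla_\mathrm{hor})^* S (\eps\nabla_\mathrm{hor}) + \eps^2 \bigl(T \,\eps\nabla_\mathrm{hor} + \mathrm{h.c.}\bigr) + \cO(\eps^3),
\end{equation*}
with $S,T$ bounded on $\cH$: the potential-type order-$\eps^2$ corrections (Born--Huang and extrinsic-curvature terms) are already absorbed in $\Ha^\sigma$ through the choice of $H^\sV_\cA$, $V_\mathrm{bend}$ and $\lambda_\cA$ made in Section~\ref{chap:results}. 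Combining this structure with the kinetic-energy bound gives the one-sided operator-norm estimate $\snorm{(\Heff^\sigma-\Ha^\sigma)\id_{(-\infty,\Lambda_0+C\eps^\alpha]}(\Ha^\sigma)}_\cH =\cO(\eps^{2+\alpha/2})$, and, using the bound on both sides in the quadratic form, $\sabs{\scpro{\phi}{(\Heff^\sigma-\Ha^\sigma)\phi}_\cH}=\cO(\eps^{2+\alpha})\norm{\phi}_\cH^2$ for low-energy $\phi$. A standard Weyl-sequence argument then yields the Hausdorff bound, because every approximate eigenfunction of one operator at energy $\leq\Lambda_0+C\eps^\alpha$ is a low-energy state for the other. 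The sharper individual eigenvalue comparison follows from the min--max principle applied with trial spaces spanned by eigenfunctions of the respective other operator (both low-energy), so that the two-sided quadratic-form estimate applies; the finite-rank hypothesis ensures that the first $K+1$ eigenvalues of $\Ha^\sigma$ have matched counterparts in $\sigma(H^\sigma)$ below its essential spectrum.

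\textbf{Main obstacle.} The delicate technical step is propagating the kinetic-energy bound from $\Ha^\sigma$ to $H^\sigma$, which requires showing that the $P_\cA^\eps$-orthogonal component of a true eigenfunction is small in the horizontal energy norm (not only in $\cH$). This couples the super-adiabatic commutator estimate $[H^\sigma,P_\cA^\eps]P_\cA^\eps=\cO(\eps^N)$ of \cite{Haag2016b} with the uniform gap of $H^\sV_\cA$ above $\lambda_\cA$. In the $\sigma=1$ case the gauge reduction is crucial at precisely this point: without it, the singular factor $\eps^{-1}$ in $\cA_\mathrm{hor}^\eps$ would appear in the commutators and degrade every bound by one power of $\eps$, breaking the argument in the whole regime $\alpha>0$.
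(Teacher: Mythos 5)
Your overall strategy coincides with the paper's: the published proof is essentially a citation of the corresponding results in the thesis \cite{Haag2016a}, but the mechanism it singles out is exactly the one you use --- the super-adiabatic corrections to $\Ha^\sigma$ are built from horizontal derivatives, each factor $\eps\nabla^{\pi_M^*\cA_B}_{X^\sH}$ costs $\eps^{\alpha/2}$ on the range of $\id_{(-\infty,\Lambda_0+C\eps^\alpha]}$, and for $\sigma=1$ the exactness of $\cA_B$ is used to gauge the connection into the same form as in the moderate case. Your reduction to $\Heff^\sigma$ via Theorem~\ref{thm:main1}, the kinetic-energy bound (the analogue of \eqref{eq:Kalpha est}), the Weyl-sequence argument for the Hausdorff distance and the min--max argument for individual eigenvalues all match the intended proof.

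There is, however, one quantitative step that does not close as written. Your schematic decomposition $\Heff^\sigma-\Ha^\sigma=\eps^2(\eps\nabla_\mathrm{hor})^*S(\eps\nabla_\mathrm{hor})+\eps^2\bigl(T\,\eps\nabla_\mathrm{hor}+\mathrm{h.c.}\bigr)+\cO(\eps^3)$ with $T$ merely \emph{bounded} gives, on low-energy states, only $\abs{\scpro{\phi}{(\Heff^\sigma-\Ha^\sigma)\phi}}=\cO(\eps^{2+\alpha/2})\norm{\phi}^2$ from the middle term, not the $\cO(\eps^{2+\alpha})$ you assert and need for the eigenvalue asymptotics; likewise the bare $\cO(\eps^3)$ remainder is not $\cO(\eps^{2+\alpha})$ once $\alpha>1$. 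The repair is that the corrections arise as products of \emph{two} off-diagonal blocks: since $P_\cA^\perp$ annihilates any pullback from $B$ multiplied by $\phi_\cA$, one has $P_\cA^\perp H^\sigma P_\cA=\eps B_1(\eps\nabla_\mathrm{hor})+\eps^2 B_2$ (plus terms with yet more derivatives), so the first-order-in-$\eps\nabla_\mathrm{hor}$ part of $\Heff^\sigma-\Ha^\sigma$ carries an extra power of $\eps$ and contributes $\cO(\eps^{3+\alpha/2})$, while the derivative-free remainder is $\cO(\eps^4)$; both are $\cO(\eps^{2+\alpha})$ precisely because $\alpha\le 2$. With this sharpening your argument goes through and reproduces the proof the paper delegates to \cite{Haag2016a}.
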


\begin{proof}
The statement for the moderate case ($\sigma=0$) is an immediate application of \cite[Proposition 4.14]{Haag2016a} and \cite[Theorem 4.15]{Haag2016a}, where again $\cE=M\times \CC$ and $\nabla^{M\times\CC} = \dd + \I \pi_M^*\cA_B$ (plus higher order corrections that are again treated as a perturbation). The basic idea of this improvement to work is the observation that the super-adiabatic corrections to $\Ha^{\sigma=0}$ (which must also be included in order to obtain a better approximation) essentially consist of horizontal differential operators. But such derivatives $\nabla_{\eps X^\sH}^{\pi_M^*\cA_B}$ for $X\in\sct_\mathrm{b}(\sT B)$ are of order $\eps^{\alpha/2}$ and are therefore small on this $\eps$-dependent energy scale, \iec on the image of $\id_{(-\infty,\Lambda_0+C\eps^\alpha]}(H^{\sigma=0})$ (see \cite[Lemma B.1]{Haag2016a} for the precise statement).
	
For the  strong case ($\sigma=1$) we mention that -- if $\cA_B$ is exact and may be gauged away -- the relevant connection on $M\times\CC$ is given by $\nabla^{M\times\CC}=\dd+\I\cA_\sH^{\eps=0}$ plus higher order corrections and thus has basically the same form as in the moderate case.
\end{proof}

We remark that results can be obtained also for energies higher than $\Lambda$ and projections to other eigenbands than the ground state. The relevant condition is that they are separated from the rest of the spectrum of $H^\sV_\cA$ by a local gap $\delta$, which guarantees regularity of the corresponding spectral projection. As pointed out above, the approximation of spectra is not mutual as for low energies, but there is always spectrum of $H^\sigma$ near that of 
$\Heff^\sigma$ (see~\cite[Corollary~1.2]{Haag2016b}).

\section{Moderate Magnetic Fields (\texorpdfstring{$\sigma=0$}{sigma = 0})} \label{chap:unscaled}

This section is dedicated to the calculation of the adiabatic 
operator~\eqref{eq:Ha} for the case of moderate magnetic fields. To simplify the 
notation, we will frequently drop the superscript $\sigma$. Since we may apply 
both Corollary~\ref{cor:main1} and Theorem~\ref{thm:main2} without further 
restrictions, we treat both cases at once  and aim for the approximation of the 
spectrum of $H$ by that of $\Ha$ for energies of the order $\eps^\alpha$ above 
$\Lambda_0=\inf \lambda_0$, for $\alpha\in[0,2]$.

We need to compute $\Ha$ up to errors of order $\eps^{2+\alpha}$ on the 
appropriate energy scale. This is implemented by estimating the errors in 
$\cL(K_\alpha(B),L^2(B))$, with the $\eps$ and $\alpha$-dependent spaces
\[
K_\alpha(B):= \im\bigl(\chi_{[\Lambda_0,\Lambda_0+C\eps^\alpha]}(-\eps^2\Delta_{g_B}+\lambda_0)\bigr)\subset L^2(B),\quad \alpha\in[0,2]
\]
equipped with the restriction of the $L^2$-norm, where $C$ is an arbitrary constant for $\alpha>0$ and $C<\Lambda-\Lambda_0$ for $\alpha=0$.
Note that $K_{\alpha_1}(B)\subset K_{\alpha_2}(B)$ is a closed subspace for 
$\alpha_1>\alpha_2$.

We will see below (see also~\cite[Theorem 2.5]{WaTe2013}) that the 
non-magnetic adiabatic operator may be written in the form
\begin{equation}\label{eq:Ha_nm}
 \Ha^\mathrm{nm} = \eps^2 \underline{\fL} \bigl(\expval{ G^\eps_\mathrm{hor}}_0, 
0\bigr)+ \lambda_0 + \eps^2 V_\mathrm{BH}+ \eps^2 \expval{V _\mathrm{bend}}_0
\end{equation}
%
%
%
The quantities in this equation are the following:
\begin{itemize}
 \item The Laplacian $\underline{\fL}$ on the base $B$ which is defined by substituting the
 non-magnetic effective metric, given for $t_1,t_2\in \sT_x B$ by
\begin{align}
\label{eq:Geff}
\langle G^\eps_\mathrm{hor} \rangle_0(t_1,t_2):=&
\scpro{\phi_0}{G_\mathrm{hor}^\eps(t_1^\sH,t_2^\sH)\phi_0}_{L^2(M_x)} \\
=& g_B(t_1,t_2) + \eps 
\scpro{\phi_0}{h^\eps(t_1^\sH,t_2^\sH)\phi_0}_{L^2(M_x)},\nonumber
\end{align}
into the quadratic from
\begin{equation*}
 \langle \psi, \underline{\fL}(\fg,\fA) \psi \rangle_{L^2(B)}
=\int_B \fg
\bigl(\overline{\nabla^\fA\psi},
\nabla^\fA \psi\bigr)  \vol_{g_B}.
\end{equation*}
The underline $\underline{\fL}$ is used here to distinguish the so-defined operator on the base from its analogue $\fL$ on $M$, defined by~\eqref{eq:Deltahor}.
\item The non-magnetic Born-Huang potential is given by
\begin{equation*}
V_\mathrm{BH}(x) :=  \int_{M_x} G_\mathrm{hor}^\eps(\dd \phi_0, \dd \phi_0) 
 - \mathrm{div}_g \bigl(\phi_0 G_\mathrm{hor}^\eps(\dd \phi_0, \cdot ) \bigr)
\vol_{g_{M_x}}.
\end{equation*}
Equivalently, it may be defined by its quadratic form
\begin{align}\label{eq:VBH}
  \scpro{\psi}{V_\mathrm{BH} \psi}_{L^2(B)}    =& \int_{M} \abs{\pi_M^*\psi}^2  
G_\mathrm{hor}^\eps(\dd \phi_0, \dd \phi_0)  \\
&  + 2 \Re \bigl(\phi_0 (\pi_M^*\psi)
G_\mathrm{hor}^\eps(\dd \pi_M^*\overline{\psi}, \dd \phi_0)\bigr)\vol_g.\nonumber
\end{align}
We emphasise that this defines a potential, and not a derivation, due 
to the real part.   
\item The averaged bending potential is
\begin{equation*}
 \expval{V _\mathrm{bend}}_0(x):= \scpro{\phi_0}{V _\mathrm{bend} \phi_0}_{L^2(M_x)}.
\end{equation*}
\end{itemize}


\begin{Remark}
The notation $\expval{\cdot}_0$ is used here to denote averaging over the fibre with weight $\abs{\phi_0}^2$, we will later use the notation $\expval{\cdot}_\cA$ when the weight is given by $\abs{\phi_\cA}^2$.
\end{Remark}

\begin{Remark} \label{rem:VBH}
	The leading order of the non-magnetic Born-Huang potential $V_\mathrm{BH}$ is given by
	\begin{align*}
	&\int_{M_x} \pi_M^*g_B(\dd \phi_0, \dd \phi_0) 
	- \mathrm{div}_g \bigl(\phi_0\, \pi_M^*g_B(\dd \phi_0, \cdot ) \bigr)
	\vol_{g_{M_x}} \\
	&\ = \int_{M_x} \pi_M^*g_B(\dd \phi_0, \dd \phi_0)\vol_{g_{M_x}} - \dive_{g_B} \int_{M_x} \phi_0\, \pi_M^*g_B(\dd \phi_0, \cdot )\vol_{g_{M_x}}
	\end{align*}
	and equals the adiabatic potential found in \cite[Equation~(12)]{Haag2015}. Here, we used the facts that $\phi_0$ vanishes on the boundary $\partial M_x$ and that $ \pi_M^*g_B(\dd \phi_0, \cdot )$ is a horizontal vector field. Moreover, it is shown in \cite[Section 5.2]{Haag2015} that the second integral coincides with $\tfrac{1}{2}\expval{\eta}_0$, where $\eta\in\sct(\sH M)$ stands for the mean curvature vector of the fibres $M_x\hookrightarrow M$. Thus, if we deal with massive quantum waveguides (for which $\eta$ vanishes), the non-magnetic Born-Huang potential has the expansion
	\[
	V_\mathrm{BH}(x) = \int_{M_x} \hspace{-2pt}\pi_M^*g_B(\dd \phi_0, \dd \phi_0)\vol_{g_{M_x}} + \,\cO(\eps) =: \norm{\dd\phi_0}^2_{L^2(\sH M^*\vert_{M_x})} + \,\cO(\eps).
	\]
\end{Remark}

In order to simplify later calculations, we establish the following Lemma:
\begin{Lemma}\label{lem:Berry triv}
 Let $\phi\in \{\phi_0, \phi_\cA\}$, $\nabla$ be a connection on the trivial line bundle $B\times \CC$, $\varphi, \psi\in W^1(B)$ and $\xi \in C^\infty(\sT^*B \otimes \CC)$ and $\mu \in C^\infty(\sT^*M \otimes \CC)$  complex one-forms.
 Then the quadratic form
\begin{align*}
Q_\xi(\phi, \psi)= 
\int_M & G^\eps_{\mathrm{hor}}\left(\overline{\phi\pi_M^*(\nabla +\xi )\varphi}, \phi\pi_M^*(\nabla +\xi )\psi)\right)\\
& + G^\eps_{\mathrm{hor}}\left(\overline{(\mu - \phi\pi_M^*\xi )\pi_M^*\varphi}, \phi\pi_M^*(\nabla +\xi )\psi)\right)\\
&+ G^\eps_{\mathrm{hor}}\left(\overline{\phi\pi_M^*(\nabla +\xi )\varphi}, (\mu - \phi\pi_M^*\xi )\pi_M^*\psi)\right)\\
&+ G^\eps_{\mathrm{hor}}\left(\overline{(\mu - \phi\pi_M^*\xi )\pi_M^*\varphi}, (\mu - \phi\pi_M^*\xi )\pi_M^*\psi)\right)\vol_g
\end{align*}
is independent of $\xi$.
\end{Lemma}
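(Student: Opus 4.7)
The key observation is that although $Q_\xi$ is written as a sum of four separate integrals, each integrand is a value of the same symmetric tensor $G^\eps_{\mathrm{hor}}$ evaluated on pairs of complex one-forms. Since $G^\eps_{\mathrm{hor}}$ is bilinear (extended $\CC$-bilinearly to complex one-forms), the four summands will collapse into a single pairing, and it is this single pairing in which the $\xi$-dependence disappears by a direct algebraic cancellation.

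Concretely, I would introduce the complex one-forms on $M$
\[
\Xi_\varphi := \phi\,\pi_M^*(\nabla+\xi)\varphi, \qquad \Xi'_\varphi := \bigl(\mu - \phi\,\pi_M^*\xi\bigr)\pi_M^*\varphi,
\]
and define $\Xi_\psi,\Xi'_\psi$ analogously with $\psi$ in place of $\varphi$. The integrand of $Q_\xi$ is then exactly
\[
G^\eps_{\mathrm{hor}}\bigl(\overline{\Xi_\varphi},\Xi_\psi\bigr) + G^\eps_{\mathrm{hor}}\bigl(\overline{\Xi'_\varphi},\Xi_\psi\bigr) + G^\eps_{\mathrm{hor}}\bigl(\overline{\Xi_\varphi},\Xi'_\psi\bigr) + G^\eps_{\mathrm{hor}}\bigl(\overline{\Xi'_\varphi},\Xi'_\psi\bigr),
\]
which, by bilinearity of $G^\eps_{\mathrm{hor}}$, equals $G^\eps_{\mathrm{hor}}\bigl(\overline{\Xi_\varphi+\Xi'_\varphi},\;\Xi_\psi+\Xi'_\psi\bigr)$.

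The cancellation of $\xi$ is now transparent: the two contributions proportional to $\phi\,\pi_M^*\xi$ in $\Xi_\varphi$ and $\Xi'_\varphi$ cancel, so
\[
\Xi_\varphi + \Xi'_\varphi \;=\; \phi\,\pi_M^*\nabla\varphi \;+\; \mu\,\pi_M^*\varphi,
\]
and likewise for $\Xi_\psi+\Xi'_\psi$. Both of these expressions are manifestly independent of $\xi$, so
\[
Q_\xi(\phi,\psi) = \int_M G^\eps_{\mathrm{hor}}\Bigl(\overline{\phi\,\pi_M^*\nabla\varphi + \mu\,\pi_M^*\varphi},\;\phi\,\pi_M^*\nabla\psi + \mu\,\pi_M^*\psi\Bigr)\vol_g,
\]
proving the claim.

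There is no real obstacle in this proof; the only point demanding a brief comment is that $G^\eps_{\mathrm{hor}}$ is \emph{a priori} a real symmetric two-tensor on $\sT M$, and one has to invoke its canonical $\CC$-bilinear extension to complex co-vectors so that the four cross-terms genuinely combine as above. Once this is noted, the statement is purely algebraic and pointwise, requiring neither integration by parts nor any property of $\phi$, $\nabla$, or the measure $\vol_g$.
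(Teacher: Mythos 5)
Your proof is correct and takes essentially the same route as the paper's, whose entire argument is the one-line observation that upon expanding the expression the terms containing $\xi$ cancel each other out, so that $Q_\xi=Q_0$. Your reorganisation of the four summands into the single pairing $G^\eps_{\mathrm{hor}}\bigl(\overline{\Xi_\varphi+\Xi'_\varphi},\,\Xi_\psi+\Xi'_\psi\bigr)$ simply makes that cancellation explicit, and your remark on the $\CC$-bilinear extension of $G^\eps_{\mathrm{hor}}$ is a reasonable (if minor) point of care.
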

\begin{proof}
 Expanding the expression, one sees that the terms containing $\xi$ cancel each other out, and $Q_\xi=Q_0$ for all $\xi$.
\end{proof}

The adiabatic operator for moderate magnetic fields is obtained from the non-magnetic adiabatic operator $\Ha^\mathrm{nm}$ by minimal coupling to an effective magnetic potential $\cA_\mathrm{eff}^\eps$:

\begin{Theorem} \label{thm:weak}
The adiabatic operator for moderate magnetic fields is given by the non-magnetic adiabatic operator, minimally coupled to the effective magnetic potential $\cA_\mathrm{eff}^\eps := \cA_B + \eps \cA_1$, where $\cA_B:=c^*\cA$ and 
\[
\cA_1(X):=\expval{\cA_\sH^{\eps=0}(X^\sH)}_0
\]
for $X\in\sct(\sT B)$, up to errors of order $\eps^{2+\alpha}$ in $\cL(K_\alpha(B),L^2(B))$ for $\alpha\in[0,2]$.
That is, 
\begin{equation*}
 \Ha^{\sigma=0} = \eps^2 \underline{\fL} \bigl(\expval{G^\eps_\mathrm{hor}}_0, \cA_\mathrm{eff}^\eps\bigr) +  \lambda_0 + 
 \eps^2 V_\mathrm{BH}+ \eps^2 \expval{V _\mathrm{bend}}_0 + \cO(\eps^{2+\alpha}).
\end{equation*}
\end{Theorem}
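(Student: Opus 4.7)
The plan is to unfold $\Ha^{\sigma=0}=P_\cA H^{\sigma=0}P_\cA$, reduce the dressed objects $P_\cA$ and $\lambda_\cA$ to the bare $P_0$ and $\lambda_0$, and then rewrite the horizontal kinetic quadratic form via Lemma~\ref{lem:Berry triv} so that the effective connection $\dd+\I\cA^\eps_\mathrm{eff}=\dd+\I(\cA_B+\eps\cA_1)$ appears on the base while every non minimal-coupling contribution is absorbed into the error. Since $\sigma=0$ the vertical magnetic potential enters $H^\sV_\cA$ only as $\eps^2\cA^\eps_\sV$, and the leading Rayleigh-Schr\"odinger correction to $\lambda_0$ vanishes because $\phi_0$ is real; combined with the uniform spectral gap separating $\lambda_0$ from the rest of $\sigma(H^\sV_0)$ this yields $\lambda_\cA=\lambda_0+\cO(\eps^4)$ and $P_\cA=P_0+\cO(\eps^2)$ uniformly in $x\in B$. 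Since the kinetic and bending blocks of $\Ha^{\sigma=0}$ carry an $\eps^2$ prefactor, the replacements $P_\cA\to P_0$ and $\lambda_\cA\to\lambda_0$ cost at most $\cO(\eps^4)$, and together with the rank-one identity $P_0 V_\mathrm{bend}P_0=\expval{V_\mathrm{bend}}_0$ the problem reduces to evaluating $\eps^2 P_0\fL(G^\eps_\mathrm{hor},\cA^\eps_\mathrm{hor})P_0$ on vectors of the form $\phi_0\pi_M^*\psi_B$.

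I will apply Lemma~\ref{lem:Berry triv} with the choices $\phi=\phi_0$, $\nabla=\dd$, $\mu=\dd\phi_0+\I\phi_0\pi_M^*\cA_B+\I\eps\phi_0\cA^\eps_\sH$ and $\xi=\I(\cA_B+\eps\cA_1)$. A direct check shows that the full expression $\phi\pi_M^*(\nabla+\xi)\psi_B+(\mu-\phi\pi_M^*\xi)\pi_M^*\psi_B$ equals $(\dd+\I\cA^\eps_\mathrm{hor})(\phi_0\pi_M^*\psi_B)$, while the new decomposition has kinetic piece $\phi_0\pi_M^*(\dd+\I\cA^\eps_\mathrm{eff})\psi_B$ and residual $(\dd\phi_0+\I\eps\phi_0(\cA^\eps_\sH-\pi_M^*\cA_1))\pi_M^*\psi_B$. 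The definition $\cA_1(X):=\expval{\cA^{\eps=0}_\sH(X^\sH)}_0$ is tailored so that $\expval{\cA^\eps_\sH-\pi_M^*\cA_1}_0=\cO(\eps)$. Expanding the four resulting blocks, the diagonal kinetic block fibre-integrates by Fubini to the quadratic form of $\underline{\fL}(\expval{G^\eps_\mathrm{hor}}_0,\cA^\eps_\mathrm{eff})$; the diagonal residual block produces $G^\eps_\mathrm{hor}(\dd\phi_0,\dd\phi_0)|\pi_M^*\psi_B|^2$ plus an $\eps^2$ variance potential, its imaginary cross cancelling by the symmetry of $G^\eps_\mathrm{hor}$ and the reality of $\phi_0$; the cross block yields the $2\Re$-term that completes $V_\mathrm{BH}$, its mixed $\cA^\eps_\mathrm{eff}$-terms cancelling by the same symmetry and its remaining residual $\cA^\eps_\sH-\pi_M^*\cA_1$ pieces carrying the additional $\eps$ from the fibre average.

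To bound the errors in $\cL(K_\alpha(B),L^2(B))$ I will use the standard spectral estimate $\snorm{\eps\dd\psi_B}_{L^2(B)}\leq C\eps^{\alpha/2}\snorm{\psi_B}_{L^2(B)}$ on $K_\alpha(B)$. The variance potential is multiplicative of order $\eps^2$, hence contributes $\cO(\eps^4)\leq\cO(\eps^{2+\alpha})$ after the $\eps^2$ prefactor; the residual first-order operator, combining the extra $\eps$ from $\expval{\cA^\eps_\sH-\pi_M^*\cA_1}_0$ with $\snorm{\dd\psi_B}_{L^2(B)}\leq C\eps^{\alpha/2-1}\snorm{\psi_B}_{L^2(B)}$ and the $\eps^2$ prefactor, is bounded by $\cO(\eps^{3+\alpha/2})\leq\cO(\eps^{2+\alpha})$ for $\alpha\in[0,2]$; and the perturbative replacements contribute $\cO(\eps^3)$ or better. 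The main obstacle is the cross term between $\dd\psi_B$ and $\cA^\eps_\sH$: without the Berry-trivialization provided by Lemma~\ref{lem:Berry triv} this term sits at $\cO(\eps^{2+\alpha/2})$ and would violate the claimed error bound on energy scales $\alpha\in(0,2]$; the precise choice $\xi=\I(\cA_B+\eps\cA_1)$ is what absorbs its fibre mean into the effective connection $\cA^\eps_\mathrm{eff}$, leaving only the fibre variance of $\cA^\eps_\sH$, which contributes at the admissible order $\cO(\eps^{2+\alpha})$.
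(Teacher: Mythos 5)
Your proposal is correct and follows essentially the same route as the paper's proof: reduce $\phi_\cA,\lambda_\cA$ to $\phi_0,\lambda_0$ using the gap and the vanishing first-order correction, split the horizontal quadratic form into diagonal and off-diagonal blocks, invoke Lemma~\ref{lem:Berry triv} to reorganise the connection so that the fibre mean of $\cA^\eps_\sH$ joins $\cA_B$ in the effective potential, identify the remaining blocks with $V_\mathrm{BH}$, and control everything via $\snorm{\eps\dd\psi}_{L^2}=\cO(\eps^{\alpha/2})$ on $K_\alpha(B)$. The only difference is cosmetic: you apply the lemma with $\nabla=\dd$ and $\xi=\I\cA^\eps_\mathrm{eff}$ (absorbing the replacement $\expval{\cA^\eps_\mathrm{hor}}_\cA\to\cA^\eps_\mathrm{eff}$ into the choice of $\xi$), whereas the paper takes $\xi$ to be the Berry connection term and performs that replacement afterwards; both yield the same $\cO(\eps^{3+\alpha/2})$ error.
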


\begin{proof}
We proceed by expanding the quadratic form of $\Ha^\sigma$ and bounding the error terms in the appropriate norm.
By definition, we have 
\begin{equation*}
\langle \varphi, \Ha^\sigma \psi \rangle_{L^2(B)}
=\scpro{(\pi_M^*\varphi) \phi_\cA}{ H^\sigma \phi_\cA (\pi_M^*\psi)}_{L^2(M)},
\end{equation*}
where we choose $\varphi\in W^1(B)$ and $\psi\in K_\alpha(B)$.
Since $\lambda_\cA=\lambda_0 + \cO(\eps^4)$ and $\phi_\cA=\phi_0+\cO(\eps^2)$ for weak magnetic fields, we immediately see that 
\[
\eps^2 \langle V_\mathrm{bend}\rangle_\cA = \eps^2\langle V_\mathrm{bend}\rangle_0+\cO(\eps^4).
\]

It now remains to analyse the quadratic form of the projected horizontal 
Laplacian. We need to estimate the errors by $\eps^{2+\alpha} \norm{\varphi}_{L^2(B)} \norm{\psi}_{K_\alpha(B)}$. To achieve this, we will make use of the fact (see Equation~\eqref{eq:Kalpha est} below) that $\norm{\eps \dd \psi}_{L^2(\sT^* B)}=\cO(\eps^{\alpha/2})$ for $\psi\in K_\alpha(B)$. When derivatives act on $\varphi$, we integrate by parts, so that they act on $\psi$ and we can use the same estimate. We will not perform this step in detail, but just remark that integration by parts is always justified due to the factors of $\phi_0$ or $\phi_\cA$ in the integrals, which vanish on the boundary of a massive waveguide.

We split the quadratic form using the projection $P_\cA$:
\begin{subequations}
\label{eq:P}
\begin{align}
& \scpro{(\pi_M^*\varphi) \phi_\cA, \eps^2 \fL(G_\mathrm{hor}^\eps}{\cA_\mathrm{hor}^\eps) (\pi_M^*\psi) \phi_\cA}_{L^2(M)}\nonumber \\
%
&\ = \int_M G_\mathrm{hor}^\eps\Bigl(\overline{P_\cA \eps \nabla^{\cA_\mathrm{hor}^\eps}
(\pi_M^*\varphi)\phi_\cA},P_\cA \eps \nabla^{\cA_\mathrm{hor}^\eps}(\pi_M^*\psi)\phi_\cA\Bigr)\vol_g \label{eq:P1} \\
&\ \hphantom{=}\ + \int_M G_\mathrm{hor}^\eps\Bigl(\overline{P^\bot_\cA\eps\nabla^{\cA_\mathrm{hor}^\eps} (\pi_M^*\varphi)\phi_\cA},P^\bot_\cA\eps\nabla^{\cA_\mathrm{hor}^\eps}(\pi_M^*\psi)\phi_\cA\Bigr)\vol_g \label{eq:P2} \\
&\ \hphantom{=}\ + \int_M G_\mathrm{hor}^\eps\Bigl(\overline{P_\cA\eps\nabla^{\cA_\mathrm{hor}^\eps} (\pi_M^*\varphi)\phi_\cA},P^\bot_\cA\eps\nabla^{\cA_\mathrm{hor}^\eps}(\pi_M^*\psi)\phi_\cA\Bigr)\vol_g \label{eq:P3} \\
&\ \hphantom{=}\ + \int_M G_\mathrm{hor}^\eps\Bigl(\overline{P^\bot_\cA\eps\nabla^{\cA_\mathrm{hor}^\eps} (\pi_M^*\varphi)\phi_\cA},P_\cA\eps\nabla^{\cA_\mathrm{hor}^\eps}(\pi_M^*\psi)\phi_\cA\Bigr)\vol_g \label{eq:P4},
\end{align}
\end{subequations}
where we used the abbreviation $\nabla^{\cA_\mathrm{hor}^\eps}=\dd + \I \cA_\mathrm{hor}^\eps$.

We start by simplifying this expression  using Lemma~\ref{lem:Berry triv}. We have
\begin{multline}
 P_\cA\eps \nabla^{\cA_\mathrm{hor}^\eps}(\pi_M^*\psi)\phi_\cA   =\pi_M^*\left(\eps\big(\dd + \I \expval{\cA_{\mathrm{hor}}^\eps}_\cA 
+\scpro{\phi_\cA}{\dd\phi_\cA}_{L^2(M_x)}\big)\psi \right)\phi_\cA \notag
\end{multline}
and, because $[\cA_B(X),P_\cA]=0$ for any vector field,
\begin{multline}
\notag  
P^\bot_\cA \eps\nabla^{\cA_\mathrm{hor}^\eps}(\pi_M^*\psi)\phi_\cA = \eps\Bigl(\dd\phi_\cA -  \scpro{\phi_\cA}{\dd\phi_\cA}_{L^2(M_x)} + \I\eps P_\cA^\bot \cA_\sH^{\eps}\phi_\cA\Bigr)(\pi_M^*\psi).
\end{multline}
We can thus apply Lemma~\ref{lem:Berry triv} with $\nabla=\dd + \I \langle \cA_{\mathrm{hor}}^\eps \rangle_\cA$, $\xi=\scpro{\phi_\cA}{\dd\phi_\cA}_{L^2(M_x)}$ and $\mu=\dd\phi_\cA+ \I\eps P_\cA^\bot \cA_\sH^{\eps}\phi_\cA$, which means that we can continue as if $\xi=0$.
In particular, this proves the formula~\eqref{eq:Ha_nm} when $\cA=0$.

We now show that, in Equation~\eqref{eq:P1} with $\xi=0$, $\expval{ \cA_{\mathrm{hor}}^\eps}_\cA$ can be replaced by $\cA_\mathrm{eff}^\eps$ and $\langle G_\mathrm{hor}^\eps \rangle_\cA$ with $\expval{G_\mathrm{hor}^\eps}_0$, up to errors of order $\eps^{3+\alpha/2}$.
To achieve this, observe that
\begin{align*}
\int_B \eps^2 g_B(\overline{\dd\psi}, \dd\psi)\vol_{g_B} \nonumber
 &\leq \scpro{\psi}{\bigl(-\eps^2\Delta_{g_B} +(\lambda_0-\Lambda_0)\bigr)\psi}_{L^2(B)}\\
  &\leq C \eps^{\alpha} \norm{\psi}_{K_\alpha(B)}^2,
\end{align*}
which implies that
\begin{equation}\label{eq:Kalpha est}
 \int_B \eps^2 g_B\Bigl(\overline{\nabla^{\cA_\mathrm{eff}^\eps}\psi}, \nabla^{\cA_\mathrm{eff}^\eps}\psi\Bigr)\vol_{g_B}=\cO\bigl(\eps^\alpha \norm{\psi}_{K_\alpha}^2\bigr).
\end{equation}
Now expand
\begin{equation*}
 \expval{\cA_{\mathrm{hor}}^\eps}_\cA
 = \cA_B + \eps \expval{\cA_\sH^{\eps}}_\cA
 =\cA_\mathrm{eff}^\eps + \cO(\eps^3)
\end{equation*}
 and
 \begin{equation*}
  \langle G_\mathrm{hor}^\eps \rangle_\cA=g_B + \eps \langle h^\eps 
\rangle_\cA= \langle G_\mathrm{hor}^\eps \rangle_0 + \mathcal{O}(\eps^3)
 \end{equation*}
(where the error is measured by $g_B$). 
Together with Equation~\eqref{eq:Kalpha est} and integration by parts this shows that
\begin{align*}
&  \int_M \abs{\phi_\cA}^2 G_\mathrm{hor}^\eps
\Bigl(\eps\pi_M^*\overline{\nabla^{\expval{\cA_{\mathrm{hor}}^\eps}_\cA}\varphi}, 
\eps\pi_M^*\nabla^{\expval{\cA_{\mathrm{hor}}^\eps}_\cA}\psi\Bigr)\vol_g \\
& =\ \scpro{\varphi}{\underline{\fL}(\expval{G_\mathrm{hor}^\eps}_0,\cA_\mathrm{eff}^\eps)\psi}_{L^2(B)}
+ \cO\bigl(\eps^{3+\alpha/2}\norm{\varphi}_{L^2(B)}\norm{\psi}_{K_\alpha(B)}\bigr).
\end{align*}

For the remaining terms~\eqref{eq:P2}--\eqref{eq:P4}, we use the expansion 
\begin{equation*}
 \dd \phi_\cA + \eps P_\cA^\perp \cA_\sH^\eps \phi_\cA= \dd \phi_0 + \eps P_0^\perp \cA_\sH^{\eps=0} \phi_0 + \cO(\eps^2),
\end{equation*}
which yields, by the same reasoning as above and neglecting $\xi$, 
\begin{align}
 &\eqref{eq:P2} + \eqref{eq:P3}\nonumber \\
 &\ \cong \int_M  G_\mathrm{hor}^\eps \Bigl(\eps\overline{\phi_0 \pi_M^*\nabla^{\cA_{\mathrm{eff}}^\eps}\varphi}, \eps(\dd\phi_0 +\I \eps P_0^\bot \cA_\sH^{\eps=0}\phi_0)\pi_M^*\psi \Bigr)\vol_g  \label{eq:P23_new} \\
 &\ \hphantom{\cong}\ +\int_M  G_\mathrm{hor}^\eps \Bigl(\eps\overline{ (\dd\phi_0 +\I \eps P_0^\bot \cA_\sH^{\eps=0}\phi_0 )\pi_M^*\varphi},\eps \phi_0\pi_M^*\nabla^{\cA_{\mathrm{eff}}^\eps}\psi \Bigr)\vol_g \nonumber \\
 &\ \hphantom{\cong}\ +\cO\bigl(\eps^{3+\alpha/2}\norm{\varphi}_{L^2(B)}\norm{\psi}_{K_\alpha(B)}\bigr). \nonumber
\end{align}
The terms involving $\I \eps P_0^\bot \cA_\sH^{\eps=0}\phi_0$ are also of order $\eps^{3+\alpha/2}$, because 
\begin{align*}
& \int_{M_x} (\pi_M^*g_B)\left(\overline{\phi_0 \pi_M^*\nabla^{\cA_{\mathrm{eff}}^\eps}\varphi}, P_0^\bot \cA_\sH^{\eps=0}\phi_0\right)\vol_{g_{M_x}} \\
&\  =g_B\Bigl(\overline{\nabla^{\cA_{\mathrm{eff}}^\eps}\varphi}, \scpro{\phi_0}{P_0^\bot \cA_\sH^{\eps=0}\phi_0}_{L^2(M_x)} \Bigr)=0.
\end{align*}
Furthermore, the remaining terms in Equation~\eqref{eq:P23_new} that contain $\I \cA_\mathrm{eff}$ yield
\begin{equation}\label{eq:A odd}
 \int_M  -2\pi_M^*\overline{\varphi}\psi G_\mathrm{hor}^\eps \bigl(\Im(\phi_0 \cA_{\mathrm{eff}}^\eps), \dd\phi_0 \bigr)\vol_g=0 
\end{equation}
because $\phi_0$  and $\cA_{\mathrm{eff}}^\eps$ are real. The relevant contribution of~\eqref{eq:P2},~\eqref{eq:P3} thus reduces to
\begin{multline}\label{eq:P23_final}
  \eps^2 \int_M G_\mathrm{hor}^\eps\bigl(\phi_0 \pi_M^*\overline{\dd\varphi}, (\dd\phi_0)\pi_M^*\psi \bigr)
 + \int_M G_\mathrm{hor}^\eps\bigl((\dd\phi_0)\overline{\pi_M^*\varphi}, \phi_0 \pi_M^*\dd\psi \bigr) \vol_g.
\end{multline}

The contribution of the last term,~\eqref{eq:P4}, is just
\begin{align}\label{eq:P4_new}
 \eps^2 \int_M G_\mathrm{hor}^\eps (\dd\phi_0 , \dd\phi_0)\pi_M^*(\overline{\varphi}\psi) \vol_g  + \cO\bigl(\eps^4\norm{\varphi}_{L^2(B)}\norm{\psi}_{L^2(B)}\bigr), \nonumber
\end{align}
as the imaginary part of  $G_\mathrm{hor}^\eps(\dd\phi_0 ,  P_0^\bot \cA_\sH^{\eps=0}\phi_0)$ vanishes.
Combining this with~\eqref{eq:P23_final} gives exactly the quadratic form of $V_\mathrm{BH}$.
\end{proof}

We remark that this proof shows that the error is actually of order~$\eps^{3+\alpha/2}$, which is slightly better than $\eps^{2+\alpha}$. But this is not very relevant, as the error of the original approximation of $H$ by $\Ha$ is $\eps^{2+\alpha}$.

\section{Strong Magnetic Fields (\texorpdfstring{$\sigma=1$}{sigma = 1})} \label{chap:scaled}

In this section we compute the adiabatic operator~\eqref{eq:Ha}. We will again drop the superscript $\sigma$, most of the time.

\begin{Theorem}\label{thm:strong}
The adiabatic operator for $\sigma=1$ is given by
\[
\Ha^{\sigma=1} = \eps^2 \underline{\fL}\bigl(\expval{G^\eps_\mathrm{hor}}_\cA, \eps^{-1} \cA_B+ \cA_P\bigr)
 + \lambda_\cA  +  \eps^2 \expval{V_\mathrm{bend}}_\cA + \eps^2 V_\mathrm{BH}^\cA + \eps^2 D_\mathrm{mag},
\]
with the abbreviations $\cA_P:=\expval{\cA_\sH^\eps}_\cA$, $\cA_{P^\perp}:=P_\cA^\perp \cA_\sH^\eps \phi_\cA$, where  the magnetic Born-Huang potential is given by
\begin{align*}
V_\mathrm{BH}^\cA (x) & :=  \int_{M_x} G_\mathrm{hor}^\eps\bigl(\overline{\dd \phi_\cA+ \I \cA_{P^\perp}}, \dd \phi_\cA+ \I \cA_{P^\perp}\bigr) \\
&\hphantom{:=  \int_{M_x}}\  - \dive_g \bigl(\overline{\phi_\cA} G_\mathrm{hor}^\eps(\dd \phi_\cA+ \I \cA_{P^\perp}, \cdot ) \bigr)
\vol_{g_{M_x}},
\end{align*}
and the first-order differential operator
\[
(D_\mathrm{mag}\psi)(x) =2 \hspace*{-4pt}\int\limits_{M_x} \hspace*{-4pt} G_\mathrm{hor}^\eps\Bigl(\Im \bigl(\overline{\phi_\cA}(\dd \phi_\cA+ \I \cA_{P^\perp})\bigr),  \pi_M^*(-\I\dd + \eps^{-1} \cA_B+ \cA_P) \psi \Bigr)\vol_{g_{M_x}}.
\]
\end{Theorem}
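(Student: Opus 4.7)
The plan is to follow the template of the proof of Theorem~\ref{thm:weak}, tracking the two new features arising in the strong regime: $\phi_\cA$ is complex-valued, and the tangential magnetic potential $\cA_B$ enters with a singular prefactor $\eps^{-1}$. From the decomposition
\[
H^{\sigma=1} = \eps^2\fL\bigl(G_\mathrm{hor}^\eps,\eps^{-1}\cA_\mathrm{hor}^\eps\bigr) + \eps^2 V_\mathrm{bend} + H^\sV_\cA + \cO(\eps^4),
\]
the vertical operator yields $\lambda_\cA$ on the range of $P_\cA$, the bending term yields $\eps^2\expval{V_\mathrm{bend}}_\cA$, and the analysis reduces to the quadratic form of the horizontal Laplacian tested against $\phi_\cA\pi_M^*\psi$.

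The key computation rests on
\[
\eps\nabla^{\eps^{-1}\cA_\mathrm{hor}^\eps}\bigl(\phi_\cA\pi_M^*\psi\bigr) = \phi_\cA\pi_M^*\bigl(\eps\dd\psi + \I\cA_B\psi\bigr) + \eps\bigl(\dd\phi_\cA + \I\cA_\sH^\eps\phi_\cA\bigr)\pi_M^*\psi,
\]
the essential point being that $\cA_B=\pi_M^* c^*\cA$ commutes with $P_\cA$, so the singular term $\eps^{-1}\cA_B$ stays in the $P_\cA$-invariant subspace without any $\eps^{-1}$ loss. I then insert $1=P_\cA+P_\cA^\perp$ on both sides of the horizontal quadratic form, producing a four-block split analogous to \eqref{eq:P1}--\eqref{eq:P4}. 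The $P_\cA$-action generates the Berry one-form $\xi=\scpro{\phi_\cA}{\dd\phi_\cA}_{L^2(M_x)}$ together with the averaged potential $\cA_P$, while $P_\cA^\perp$ leaves $\mu=\dd\phi_\cA+\I\cA_{P^\perp}$ up to a copy of $\phi_\cA\xi$. Lemma~\ref{lem:Berry triv}, applied with $\nabla=\dd+\I(\eps^{-1}\cA_B+\cA_P)$ together with the above $\xi$ and $\mu$, eliminates $\xi$ from the total form.

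After this cancellation, the $P_\cA P_\cA$ block averages in $\abs{\phi_\cA}^2$ to $\eps^2\underline{\fL}(\expval{G_\mathrm{hor}^\eps}_\cA,\eps^{-1}\cA_B+\cA_P)$; the $P_\cA^\perp P_\cA^\perp$ block yields the first integrand of $V_\mathrm{BH}^\cA$; and the two cross blocks pair $\mu$ against $\pi_M^*(\eps\dd+\I\cA_B+\I\eps\cA_P)\psi$ and split into real and imaginary parts. The real part, integrated by parts on the fibre (legitimate because $\phi_\cA$ vanishes on $\partial M_x$ in the massive case and the fibre is boundaryless in the hollow case), assembles into the missing divergence term $-\dive_g(\overline{\phi_\cA}G_\mathrm{hor}^\eps(\dd\phi_\cA+\I\cA_{P^\perp},\cdot))$ of $V_\mathrm{BH}^\cA$. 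The imaginary part, which in the moderate case vanished identically via \eqref{eq:A odd} because $\phi_0$ and $\cA_\mathrm{eff}^\eps$ were both real, survives here: once rewritten as $2\Im(\overline{\phi_\cA}(\dd\phi_\cA+\I\cA_{P^\perp}))$ paired with the effective horizontal connection $-\I\dd+\eps^{-1}\cA_B+\cA_P$ acting on $\psi$, it yields exactly $\eps^2 D_\mathrm{mag}$.

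The main obstacle I anticipate is the bookkeeping of the cross terms: one must cleanly separate the real and imaginary components of $\overline{\phi_\cA}(\dd\phi_\cA+\I\cA_{P^\perp})$ and recognise the former as a fibrewise divergence, a step that was trivial in Theorem~\ref{thm:weak} only because $\phi_0$ was real. Error bounds are inherited from the moderate case: under the hypothesis of Theorem~\ref{thm:main2} that $\cA_B$ is exact, the singular part $\eps^{-1}\cA_B$ is removed by a unitary gauge transformation, after which the remaining perturbative estimates have the same structure as for $\sigma=0$ and produce errors of order $\eps^{2+\alpha}$ in $\cL(K_\alpha(B),L^2(B))$.
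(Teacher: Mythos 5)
Your proposal follows essentially the same route as the paper: split off $\lambda_\cA$ and $\eps^2\expval{V_\mathrm{bend}}_\cA$ as exact projections, insert $P_\cA+P_\cA^\perp$ into the horizontal quadratic form, eliminate the Berry one-form $\xi=\scpro{\phi_\cA}{\dd\phi_\cA}_{L^2(M_x)}$ via Lemma~\ref{lem:Berry triv} with $\mu=\dd\phi_\cA+\I\cA_{P^\perp}$, and identify the diagonal blocks with $\eps^2\underline{\fL}(\expval{G^\eps_\mathrm{hor}}_\cA,\eps^{-1}\cA_B+\cA_P)$ and the first integrand of $V_\mathrm{BH}^\cA$ while the cross blocks produce, after fibrewise integration by parts, the divergence potential together with $D_\mathrm{mag}$. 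This is correct; only your closing paragraph on $\eps^{2+\alpha}$ error bounds is superfluous here, since the theorem is an exact identity and those estimates belong to Theorem~\ref{thm:main2} rather than to this statement.
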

This statement is exact, with no further errors, but the terms in the adiabatic operator may of course be expanded using perturbation theory for $\lambda_\cA$ and $\phi_\cA$, as well as an expansion of $\cA_\sH^\eps$. In particular, when $\cA_B$ is an exact form, Theorem~\ref{thm:main2} applies and one can obtain a more explicit form of the effective Hamiltonian on $\eps^\alpha$-energy scales, cf.\ Section~\ref{sec:652}.  

%
%
\begin{proof}
The potentials $\lambda_\cA$ and $\eps^2 \expval{V_\mathrm{bend}}_\cA$ are  the exact projections of $H_\sV^\cA$ and the bending term, so there is nothing to prove here and we are again left to calculate the projection of $\fL\left( G_\mathrm{hor}^\eps, \eps^{-1}\cA_\mathrm{hor}^\eps \right)$ with $P_\cA$. 

As in the proof of Theorem~\ref{thm:weak}, we start by applying Lemma~\ref{lem:Berry triv}, for which we need
\begin{align}
& P_\cA\eps \nabla^{\eps^{-1} \cA_\mathrm{hor}^\eps}(\pi_M^*\psi)\phi_\cA \nonumber \\
&\ = \pi_M^*\Bigl(\bigl(\eps \dd + \I \cA_B +  \eps \I \cA_P
+\eps \scpro{\phi_\cA}{\dd\phi_\cA}_{L^2(M_x)}\bigr)\psi \Bigr)\phi_\cA \label{eq:aux1_strong}
\end{align}
and
\[
P^\bot_\cA \eps\nabla^{\eps^{-1}\cA_\mathrm{hor}^\eps}(\pi_M^*\psi)\phi_\cA
= \eps\bigl(\dd\phi_\cA -  \scpro{\phi_\cA}{\dd\phi_\cA}_{L^2(M_x)}+ \I \cA_{P^\perp}\bigr)(\pi_M^*\psi).
\]

Writing out the quadratic form of $\eps^2 P_\cA \fL( G_\mathrm{hor}^\eps, \eps^{-1}\cA_\mathrm{hor}^\eps)P_\cA$ as in~\eqref{eq:P1} -- \eqref{eq:P4}, we can thus neglect $\xi=\scpro{\phi_\cA}{\dd\phi_\cA}_{L^2(M_x)}$ by Lemma~\ref{lem:Berry triv}.
The term with the projection $P_\cA$ applied twice then yields exactly the quadratic form of $\eps^2 \underline{\fL}(\expval{ G^\eps_\mathrm{hor}}_\cA, \eps^{-1} \cA_B+ \cA_P)$.

Using that $\cA_P$ is real and integration by parts, one sees that the terms with exactly one $P_\cA^\perp$ yield the quadratic form of the operator 
\[
\int_{M_x} \dive_g \bigl(\overline{\phi_\cA} G_\mathrm{hor}^\eps(\dd \phi_\cA+ \I \cA_{P^\perp}, \cdot ) \bigr)\vol_{g_{M_x}} + D_\mathrm{mag}.
\]
Note that this operator is self-adjoint but $D_\mathrm{mag}$ and the complex potential separately are not.
The first term in this expression is completed by the one with two occurrences of $P_\cA^\perp$ to the potential $V^\cA_{\mathrm{BH}}$.
\end{proof}

In view of Corollary~\ref{cor:main1} there is no additional restriction on $\cA$ and we need to calculate $\Ha$ up to errors of order $\eps^2$:

\begin{Corollary}[$\alpha=0$] \label{cor:strong}
The adiabatic operator for strong magnetic fields is given by
\[
\Ha^{\sigma=1} = \eps^2 \underline{\fL}\bigl(\expval{G^\eps_\mathrm{hor}}_\cA, \eps^{-1}\cA^\eps_{\mathrm{eff}}\bigr) + \lambda_0  +  \cO(\eps^2)
\]
with errors of order $\eps^2$ in $\cL(K_0(B),L^2(B))$, where the effective magnetic potential $\cA_\mathrm{eff}^\eps$ is the same as in Theorem~\ref{thm:weak}.
\end{Corollary}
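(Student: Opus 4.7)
The plan is to begin from the exact identity in Theorem~\ref{thm:strong} and to estimate each of the four corrections that separate it from the stated expression. Under the bounded-geometry hypothesis of Definition~\ref{def:QWGbound} these are: (i) the scalar difference $\lambda_\cA - \lambda_0$; (ii) the potentials $\eps^2\expval{V_\mathrm{bend}}_\cA$ and $\eps^2 V_\mathrm{BH}^\cA$; (iii) the replacement of $\cA_P$ by $\cA_1$ in the horizontal operator; and (iv) the first-order differential operator $\eps^2 D_\mathrm{mag}$. The target is to show that each contribution is of order $\eps^2$ in $\cL(K_0(B),L^2(B))$.

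Items (i) and (ii) are essentially immediate: the excerpt already records $\lambda_\cA-\lambda_0 = \cO(\eps^{4-2\sigma}) = \cO(\eps^2)$ at $\sigma=1$, and both $\expval{V_\mathrm{bend}}_\cA$ and $V_\mathrm{BH}^\cA$ are uniformly bounded multiplication operators on $L^2(B)$ because the integrands in their definitions consist of quantities with uniformly bounded derivatives. For (iii) I would invoke first-order perturbation theory for $H^\sV_\cA = H^\sV_0 + \I\eps W + \cO(\eps^2)$, where $W$ acts real-valued-to-real-valued since $\cA_\sV^{\eps=0}$ is real; because $\phi_0$ is real, the resolvent correction $\phi_\cA^{(1)} = -\I(H^\sV_0-\lambda_0)^{-1}P_0^\perp W\phi_0$ is purely imaginary, so $\phi_\cA = \phi_0 + \I\eps\tilde\phi_1 + \cO(\eps^2)$ with $\tilde\phi_1$ real. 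In particular $|\phi_\cA|^2 = |\phi_0|^2 + \cO(\eps^2)$ pointwise, and combining this with the expansion $\cA_\sH^\eps = \cA_\sH^{\eps=0} + \cO(\eps)$ from Lemma~\ref{lem:gauge} yields $\cA_P - \cA_1 = \cO(\eps)$. Writing the operator difference as cross terms of the shape $\eps^2(\cA_P - \cA_1)\cdot\expval{G^\eps_\mathrm{hor}}_\cA\cdot(-\I\dd+\eps^{-1}\cA_B+\cA_1)\psi$ together with their adjoints and a quadratic piece, the $K_0$-bound $\snorm{\eps\dd\psi}_{L^2(B)} = \cO(1)$ from~\eqref{eq:Kalpha est} and $\snorm{\cA_B\psi}_{L^2(B)}=\cO(1)$ imply $\snorm{(-\I\dd+\eps^{-1}\cA_B+\cA_1)\psi}_{L^2(B)} = \cO(\eps^{-1})$, so each cross term is controlled by $\eps^2\cdot\cO(\eps)\cdot\cO(\eps^{-1}) = \cO(\eps^2)$ and the quadratic piece by $\cO(\eps^4)$.

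The genuine obstacle is step (iv). A direct pointwise estimate on the one-form $\omega := \Im\bigl(\overline{\phi_\cA}(\dd\phi_\cA+\I\cA_{P^\perp})\bigr)$ only gives $\cO(1)$, so coupled with the bound $\snorm{(-\I\dd+\eps^{-1}\cA_B+\cA_P)\psi}_{L^2(B)} = \cO(\eps^{-1})$ this produces the insufficient $\cO(\eps)$. The saving feature is that the second argument in the integrand of $D_\mathrm{mag}$ is $\pi_M^*(\cdots)\psi$ and hence constant along every fibre $M_x$. Decomposing $G^\eps_\mathrm{hor} = g_B + \eps h^\eps$, the leading $g_B$-contribution to $D_\mathrm{mag}\psi$ therefore only involves the fibre integral $\int_{M_x}\omega\,\vol_{g_{M_x}}$: the $\I\cA_{P^\perp}$-piece of that integral vanishes identically by the projection identity $\langle\phi_\cA,P_\cA^\perp\,\cdot\,\rangle_{M_x}=0$, while the $\dd\phi_\cA$-piece is $\cO(\eps)$ because $\Im(\overline{\phi_\cA}\dd\phi_\cA)$ sees only the purely imaginary $\I\eps\tilde\phi_1$ correction to $\phi_0$. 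The subleading $\eps h^\eps$-part lacks such fibrewise cancellation but already carries an explicit $\eps$. In both cases the combined bound is $\eps^2 \cdot \cO(\eps) \cdot \cO(\eps^{-1}) = \cO(\eps^2)$, and collecting the four estimates completes the proof.
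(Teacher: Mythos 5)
Your proposal is correct and follows essentially the same route as the paper: start from the exact formula of Theorem~\ref{thm:strong}, absorb $\lambda_\cA-\lambda_0=\cO(\eps^2)$ and the bounded potentials, use $\cA_P-\cA_1=\cO(\eps)$ together with the $K_0$-bound $\snorm{(\eps\dd+\I\cA_\mathrm{eff}^\eps)\psi}=\cO(1)$ for the cross terms, and control $\eps^2 D_\mathrm{mag}$ by showing the fibre-integrated one-form $\Im\bigl(\overline{\phi_\cA}(\dd\phi_\cA+\I\cA_{P^\perp})\bigr)$ is $\cO(\eps)$ via the realness of $\phi_0$ and the orthogonality $\scpro{\phi_\cA}{P_\cA^\perp\,\cdot\,}_{L^2(M_x)}=0$. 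Your treatment of step (iv) merely spells out in more detail (the $\pi_M^*g_B+\eps h^\eps$ split and the purely imaginary first-order correction to $\phi_\cA$) what the paper compresses into its reference to Equation~\eqref{eq:A odd}.
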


\begin{proof}
We have $\lambda_\cA=\lambda_0 + \cO(\eps^2)$ and $\eps^2\expval{ V_{\mathrm{bend}}}_\cA=\cO(\eps^2)$, so it remains to show that $\eps^2 D_\mathrm{mag}=\cO(\eps^2)$, and that $\cA_P$ can be replaced by $\expval{\cA_\sH^{\eps=0}}_0$, with appropriate errors.
The latter is a consequence of Equation~\eqref{eq:aux1_strong}, as $\eps(\cA_P - \expval{\cA_\sH^{\eps=0}}_0)=\cO(\eps^2)$.
 
To see that $D_\mathrm{mag}$ is bounded uniformly in $\eps$ from $K_0(B)$ to $L^2(B)$, it is sufficient to show that
\[
\int_{M_x} G_\mathrm{hor}^\eps\Bigl(\Im\bigl(\overline{\phi_\cA}(\dd \phi_\cA+ \I \cA_{P^\perp})\bigr), \cdot \Bigr)\vol_{g_{M_x}}=\cO(\eps), 
\]
as a one-from on $(B,g_B)$ (cf.~Equation~\eqref{eq:Kalpha est}). This holds because $\phi_\cA=\phi_0 + \cO(\eps)$ with real $\phi_0$ and Equation~\eqref{eq:A odd}.
\end{proof}

\section{Application to Quantum Tubes} \label{chap:conv}

This section is devoted to the explicit computation of the magnetic adiabatic operators of massive or hollow quantum tubes, that is waveguides modelled along closed or infinite open curves in $\RR^3$. We will discuss the leading contribution of the magnetic effects for moderate   and  strong magnetic fields in Section~\ref{sec:unscaled} and Section~\ref{sec:scaled}, respectively.

\subsection{The Geometry} \label{subsec:geom}

Let $c:B\to \RR^3$ be either an infinite curve, \iec $B=\RR$, or a closed curve of length $2L$, \iec $B=\RR/2L\ZZ \cong [-L,L)$, that is smoothly embedded in $(\RR^3,\updelta)$, bounded with all its derivatives and parametrised by arc length, so $g_B=\dd x \otimes \dd x$. In both cases the normal bundle $\sN B$ is diffeomorphic to the trivial bundle $B\times \RR^2$. For the following explicit computations we choose coordinates induced by   parallel orthonormal frames in $\sN B$. While for $B=\RR$ such frames exist globally, for $B=\RR/2L\ZZ$ they might be discontinuous at one point. Hence, in this case, we restrict the computations to $B\setminus \{x_0\}\cong (-L,L)$. In order to not overburden the notation, we do not make this explicit in every step, but keep in mind that $B$ needs to be replaced by $B\setminus \{x_0\}$ whenever necessary. The relevant geometric objects will always be well defined on all of $B$, see for example Remark~\ref{rem:independent}.

Explicitly, a parallel orthonormal frame can be constructed as follows: At $ 0\in B$ we pick an orthonormal basis $\{\tau_0,e_{1,0},e_{2,0}\}$ of $\sT_{c(0)}\RR^3\cong\RR^3$ such that $\tau_0=c'(0)$ is tangent and~$(e_{1,0},e_{2,0})$ are normal to the curve at $c(0)$. One then obtains 
a frame $(\tau,e_1,e_2)$ of $\rs{\sT \RR^3}{c(B)}$ as the solution of the coupled system of differential equations 
\[
\frac{\dd}{\dd x}\begin{pmatrix} \tau(x) \\ e_1(x) \\ e_2(x) \end{pmatrix} = \begin{pmatrix} 0 & \kappa_1(x) & \kappa_2(x) \\ -\kappa_1(x) & 0 & 0 \\ -\kappa_2(x) & 0 & 0 \end{pmatrix}
\begin{pmatrix} \tau(x) \\ e_1(x) \\ \e_2(x) \end{pmatrix}
\]
with initial data $(\tau( 0),e_1( 0),e_2( 0))=(\tau_0,e_{1,0},e_{2,0})$, 
where
\[
\kappa_j: B\to \RR,\quad x\mapsto \updelta\bigl(c''(x),e_j(x)\bigr)\text{ for $j\in\{1,2\}$}
\]
are the mean curvatures of the curve, see for instance \cite{Bishop1975}. This is an orthonormal frame of $(\rs{\sT\RR^3}{c(B)},\updelta)$ with $\tau(x)=c'(x)$ for all $x\in B$ which we identify with an orthonormal frame of $\sT B\oplus \sN B$ by means of~\eqref{eq:TBNB}. Put differently, this frame is obtained by the parallel transport of the orthonormal basis $(\tau_0,e_{1,0},e_{2,0})$  along the curve $B$ with respect to the induced connection $c^*\nabla^\updelta$. Consequently, $x\mapsto\tau(x)=\partial_x$ is a trivialisation of the tangent bundle $\sT B$, whereas $x\mapsto (e_1(x),e_2(x))$ yields bundle coordinates  
 \[
\Xi: B \times \RR^2 \to \sN B,\quad (x,n_1,n_2) \mapsto n_1 e_1(x)+n_2 e_2(x)  
\]
that come along with coordinate vector fields $\partial_x^\#$, $\partial_{n_1}$ and $\partial_{n_2}$, where $\partial_x^\#$ is the horizontal lift (which coincides with the product lift) of ${\partial_x} $ to $\sN B$.

%

%
%

  \subsection{The Pullback of the Riemannian Metric}

As an intermediate result we obtain a family of pullback metrics $\{G_\mathrm{int}^\eps:=\Phi_\eps^*(\eps^{-2}\updelta)\}_{0<\eps\leq 1}$ on $\sN B^r:= \Xi(B\times \mathbb{B}^2_{r}(0)) $ induced by
\[
\Phi_\eps : (x,n_1,n_2)\mapsto c(x) + \eps \bigl(n_1 e_1(x) + n_2 e_2(x)\bigr)\,.
\]
 It is given by (see \cite[Example 5.6]{Haag2016a})
\begin{align*}
\rs{G_\mathrm{int}^\eps}{(x,n)} &= \eps^{-2}\bigl(1 -\eps \scpro{n}{\kappa(x)}_{\RR^2}\bigr)^2\, \dd x^\#\hspace{-2pt} \otimes\hspace{-1pt} \dd x^\#   + \dd n_1\hspace{-2pt}\otimes\hspace{-1pt} \dd n_1 + \dd n_2\hspace{-2pt}\otimes\hspace{-1pt} \dd n_2,
\end{align*}
where we used the notation $\dd x^\# := \pi_{\sN B}^*\dd x$.  

\begin{Remark}\label{rem:independent}
Note that even in the case, where our coordinates are defined only on $(B\setminus\{x_0\})\times \RR^2$, the metric $G_\mathrm{int}^\eps$ is well defined on all of $\sN B^r$. This is because the bases $(e_1(-L), e_2(-L))$ and $(e_1(L), e_2(L))$ differ at most by a transformation in $\mathrm{SO}(2)$, while the appearing expressions are invariant under rotations of the fibres. 
\end{Remark}
%

The final form of the family $\{G^\eps:=\varpi^*G_\mathrm{int}^\eps\}_{0<\eps\leq 1}$ now depends on the concrete realisation of the waveguide induced by the embedding $\varpi:M\to\sN B^r$. Note that for $B=\RR$ we can without loss of generality pick $M=\RR\times F$, while for $B=\RR/2L\ZZ$  the bundle $M\xrightarrow{\pi_M} B$ need not be trivial. As we will see in the following example, the global structure of $M$ is then encoded in the map $\varpi$.

\begin{Example} \label{ex:geoconv}
Let us consider two illustrative choices of the embedding $\varpi$ (see \cite[Example~5.7]{Haag2016a} and \cite[Example~5.9]{Haag2016a}):
\begin{enumerate}
\item \emph{massive quantum tube, $\dim(F)=2$:} \\
Let $F\subset\RR^2$ be a connected (in general not rotationally invariant) bounded domain and $R>0$ such that $F\subset\mathbb{B}_R(0)$. Then for $y=(y_1,y_2)\in F$ the mapping
\[
\varpi :(x,y_1,y_2) \mapsto \bigl(\ell(x)\fr(x)y\bigr)_1 e_1(x) + \bigl(\ell(x)\fr(x)y\bigr)_2 e_2(x),
\]
with
\[
\fr:  x \mapsto \begin{pmatrix} \cos\bigl(\varphi(x)\bigr) & - \sin\bigl(\varphi(x)\bigr) \\ \sin\bigl(\varphi(x)\bigr) & \cos\bigl(\varphi(x)\bigr) \end{pmatrix}
\]
models a waveguide whose cross-section is the fixed set $F$ that twists around the curve with respect to an $x$-dependent angle $\varphi:B\to \RR$   and scales with an $x$-dependent factor $\ell\in C^\infty_\mathrm{b}(B,[l_-,l_+])$ such that $0<l_- < l_+ < r/R$.  For $B=\RR$ we require that $\varphi\in\sct_\mathrm{b}(\RR)$ and for $B=\RR/2L\ZZ\cong[-L,L)$ the appropriate condition is $\varphi\in\sct_\mathrm{b}((-L,L))$ such that the cross-sections as subsets of $\sN B^r$ match smoothly, \iec 
\[
\bigl\{ (\fr(-L)y)_j e_j(-L):\ y\in F\bigr\} =  \bigl\{ (\fr(L)y)_j e_j(L):\ y\in F\bigr\}
\]
and $\varphi'\in \sct_\mathrm{b}([-L,L))$.
While the coordinates $(x,y_1,y_2)$ again provide coordinate vector fields $\partial_x^{\pr}$, $\partial_{y_1}$ and $\partial_{y_2}$, it is more convenient use to the horizontal lift $\partial_x^\sH$ of $\partial_x$ to $M$ instead of $\partial_x^{\pr}$. This lift satifies $\sT\varpi(\partial_x^\sH)=\partial_x^\#$ and is given by
\begin{equation}
\label{eq:xhor}
\partial_x^\sH = \partial_x^{\pr} - (\ln \ell)' y \cdot \nabla_y - \varphi' y \times \nabla_y,\quad \nabla_y=(\partial_{y_1},\partial_{y_2})
\end{equation}
with the intuitive notation for the cross product in $\RR^2$. It incorporates both the effect induced by the variation of the scaling factor (in terms of $\ell'$) and the effect of the twist (in terms of $\varphi'$) along the curve. Finally, the associated family of pullback metrics on $M$ reads
\begin{align*}
 \rs{G^\eps}{(x,y)} & = \eps^{-2}\underbrace{\Bigl(1 -\eps \scpro{\ell(x)\fr(x)y}{\kappa(x)}_{\RR^2}\Bigr)^2\, \dd x^\sH\otimes \dd x^\sH}_{=G_\mathrm{hor}^\eps} \\
 &\hphantom{=}\ + \underbrace{\ell^2(x)\,\bigl(\dd y_1\otimes\dd y_1 + \dd y_2\otimes \dd y_2\bigr)}_{=g_{M_x}}
\end{align*}
with $\dd x^\sH:= \pi_M^*\dd x$.
\item \emph{hollow quantum waveguide, $\dim(F)=1$:} \\
We take the cross-section to be a circle $\mathbb{S}^1=\RR/2\uppi\ZZ$ with an $x$-dependent radius $\ell\in C^\infty_\mathrm{b}(B,[l_-,l_+])$ such that $0<l_- < l_+ < r$. The associated embedding is given by
\[
\varpi :(x,y) \mapsto \ell(x) \bigl(\cos y e_1(x) + \sin y e_2(x)\bigr).
\]
As above, we express the metric  in the basis given by $\partial_y$  and the horizontal lift of $\partial_x$, which in this case is given by
\[
\partial_x^\sH = \partial_x^{\pr} - (\ln\ell)'\,\partial_y.
\]
This eventually leads to the following family of pullback metrics on $M$:
\begin{align*}
 \rs{G^\eps}{(x,y)} &= \eps^{-2} \underbrace{\left[\bigl(1\hspace{-1pt}-\hspace{-1pt}\eps\ell(x)\varkappa(x,y)\bigr)^2 \hspace{-1pt}+ \hspace{-1pt}\bigl(\eps\ell'(x)\bigr)^2\right]  \hspace{-1pt}\dd x^\sH \hspace{-2pt}\otimes\hspace{-1pt}\dd x^\sH}_{=G_\mathrm{hor}^\eps}  + \underbrace{\ell^2(x)\,\dd y \hspace{-2pt}\otimes\hspace{-1pt}\dd y}_{=g_{M_x}},
\end{align*}
where $\varkappa(x,y):=\kappa_1(x)\cos y + \kappa_2(x)\sin y$ is the radial projection of the mean curvature vector.
\end{enumerate}
These examples illustrate the great generality of our approach, \iec even more complex realisations of quantum waveguides may be represented by rather simple embeddings.
\end{Example}

We finally remark that in general the horizontal part of the pullback metric takes the form
\[
G_\mathrm{hor}^\eps(\partial_x^\sH,\partial_x^\sH) = \rho_\eps^2 \stackrel{\eqref{eq:rhoeps}}{=} (1 - \eps \vartheta_\eps)^2,\quad \vartheta_{\eps=0}(x,y) = \scpro{\varpi_x(y)}{\kappa(x)}_{\RR^2},
\]
where we understand $\varpi_x$ as the fibrewise mapping, see Fig.~\ref{fig:embedding},
\begin{equation*}
 \varpi_x: M_x \to N_xB \cong \RR^2
\end{equation*}
From this we immediately infer
\begin{equation}
\label{eq:Ghor0dx}
\expval{G_\mathrm{hor}^\eps}_0(\dd x,\dd x) = 1 + 2 \eps \expval{\vartheta_{\eps=0}}_0 + \cO(\eps^2)
\end{equation}
for the effective metric on covector fields.

\subsection{The Pullback of the Magnetic Potential}

When dealing with quantum tubes, an appropriate modification of the gauge function $\Omega_\eps$ introduced in Lemma~\ref{lem:gauge} leads to a very convenient gauge for the intermediate magnetic potential $\cA_\mathrm{int}^\eps$ (pulled back via $\Phi_\eps$) on $\sN B^r$ in the case of quantum tubes. We will perform all calculations using  (local) coordinates $(x,n_1,n_2)$ -- and therefore a (local) orthonormal frame $\{\tau,e_1,e_2\}$ of $\sT B\oplus \sN B$ -- as introduced in Subsection~\ref{subsec:geom}. Nevertheless, 
 the magnetic potential $\cA_\mathrm{int}^\eps$ is actually a globally well-defined one-form and thus independent of the choice of coordinates.
Let 
\[
\cA_x(x,n) = \cA_{\Phi(x,n)}\bigl(\tau(x)\bigr)\quad,\quad \cA_{n_j}(x,\eps n) = \cA_{\Phi(x,n)}\bigl(e_j(x)\bigr)
\]
for $j\in\{1,2\}$ be the components of the magnetic potential, where $\tau(x)$ and $e_j(x)$ are considered as vectors in $\RR^3$. Then the pulled-back magnetic potential with respect to these coordinates reads
\[
\rs{\Phi_\eps^* \cA}{(x,n)} = \cA_x(x,\eps n)\, \dd x^\# + \eps\bigl( \cA_{n_1}(x,\eps n)\, \dd n_1 + \cA_{n_2}(x,\eps n)\,\dd n_2\bigr).
\]
We set
\[
\Omega_\eps(x,n) := \underbrace{ \sum_{j=1}^2\cA_{n_j}(x,0)n_j}_{\substack{\text{transformation} \\ \text{of Lemma~\ref{lem:gauge}}}} - \frac{\eps^2}{2}\sum_{j,j'=1}^2\frac{\partial \cA_{n_j}(x,0)}{\partial n_{j'}} n_j n_{j'}
\]
which implies
\begin{align*}
\rs{\dd \Omega_\eps}{(x,n)} =& \sum_{j=1}^2 \left[\eps \bigl(\cA_{n_j}(x,0)\bigr)' n_j - \frac{\eps^2}{2} \sum_{j'=1}^2 \left(\frac{\partial\cA_{n_j}(x,0)}{\partial n_{j'}}\right)' n_j n_{j'}\right]\,\dd x^\# \\
& + \sum_{j=1}^2 \left[\eps \cA_{n_j}(x,0) - \frac{\eps^2}{2} \sum_{j'=1}^2\left(\frac{\partial\cA_{n_j}(x,0)}{\partial n_{j'}} +  \frac{\partial\cA_{n_{j'}}(x,0)}{\partial n_j}\right) n_{j'}\right]\dd n_j.
\end{align*}
A Taylor expansion of the coefficients $\cA_x(x,\eps n)$ and $\cA_{n_j}(x,\eps n)$ up to errors of order $\eps^2$ eventually leads to
\begin{align*}
\rs{\cA_\mathrm{int}^\eps}{(x,n)} &:= \rs{(\Phi_\eps^*\cA - \dd\Omega_\eps)}{(x,n)} \nonumber \\
\begin{split}
&\hphantom{:}= \underbrace{\cA_x(x,0)\, \dd x^\#}_{=\pi_{\sN B}^*\cA_B}  + \eps \bigl[\cB_\bot(x) \times n + \cO(\eps^2) \bigr]\, \dd x^\# \\
&\hphantom{:=}\ + \eps^2 \bigl[\tfrac{1}{2}\cB^\bot(x)n + \cO(\eps)\bigr] \times \dd n.
\end{split}
\end{align*}
Here, we introduced the physically relevant magnetic field evaluated on the curve,
$\cB = \dd\cA \in \sct_\mathrm{b}(\rs{\Uplambda^2\sT\RR^3}{\cT^r})$,  which consists of one parallel component
\[
\cB^\parallel(x) := \cB_{c(x)}\bigl(e_1(x),e_2(x)\bigr) = \rs{\left(\frac{\partial \cA_{n_2}}{\partial n_1} - \frac{\partial \cA_{n_1}}{\partial n_2}\right)}{(x,0)}
\]
and two perpendicular components
\begin{align*}
\cB^\bot_1(x) &:=\cB_{c(x)}\bigl(e_2(x),c'(x)\bigr) = \rs{\left(\frac{\partial \cA_x}{\partial n_2} - \frac{\partial \cA_{n_2}}{\partial x}\right)}{(x,0)}, \\
\cB^\bot_2(x) &:=\cB_{c(x)}\bigl(c'(x),e_1(x)\bigr) = \rs{\left(\frac{\partial \cA_{n_1}}{\partial x} - \frac{\partial \cA_x}{\partial n_1}\right)}{(x,0)}
\end{align*}
with respect to the curve $c(B)\subset\RR^3$ (with respect to the local orthonormal frame $\{\tau,e_1,e_2\}$). While the component $\cB^\parallel$ parallel to the curve clearly is unaffected by the choice of transversal directions, both $\cB_\bot\times n$ and $n \times \dd n$ are invariant under a change of a (local) orthonormal frame. Thus, the above expression for $\cA_\mathrm{int}^\eps$ is well defined.

\begin{Example} \label{ex:cAconv}
Let us examine the induced magnetic potential $\cA^\eps$ for the geometric situations illustrated in Example~\ref{ex:geoconv}. Detailed calculations of $\cA^\eps$ in these cases can be found in \cite[Example~5.10]{Haag2016a}.
\begin{enumerate}
\item The horizontal part of $\cA^\eps$ reduces to $\pi_M^*\cA_B + \eps \cA_\sH^\eps$ with
\[
\cA_\sH^{\eps=0}(\partial_x^\sH) = \cB^\bot\times \ell\fr y ,
\]
whereas the vertical part $\eps^2 \cA_\sV^\eps$ is given at leading order by
\[
\cA_\sV^{\eps=0}(\partial_{y_1}) = -\tfrac{1}{2} \cB^\parallel \ell^2 y_2\quad,\quad
\cA_\sV^{\eps=0}(\partial_{y_2}) = \tfrac{1}{2} \cB^\parallel \ell^2 y_1.
\]
Consequently, the magnetic potential of this massive quantum waveguide reads
\[
\cA^\eps = \pi_M^*\cA_B + \eps \bigl(\cB^\bot\times \ell\fr y + \cO(\eps)\bigr)\,\dd x^\sH + \eps^2\Bigl(\tfrac{1}{2} \ell^2 \cB^\parallel y + \cO(\eps)\Bigr)\times \dd y.
\] 
\item Although $\gimel(\partial_x)$ does not vanish in the hollow case, its contribution to the horizontal part $\cA_\sH^\eps$ is of lower order and may thus be neglected, \iec
\[
\cA_\sH^{\eps=0}(\partial_x^\sH) = \cB^\bot \times \begin{psmallmatrix} \ell \cos y \\ \ell \sin y\end{psmallmatrix}.
\]
As far as the vertical part is concerned, we obtain
\[
\cA_\sV^{\eps=0}(\partial_y) = \tfrac{1}{2} \cB^\parallel \begin{psmallmatrix}  \ell \cos y \\ \ell \sin y\end{psmallmatrix} \times \begin{psmallmatrix} - \ell \sin y \\ \ell \cos y\end{psmallmatrix} = \tfrac{1}{2} \cB^\parallel \ell^2
\]
and therefore
\[
\cA^\eps = \pi_M^*\cA_B + \eps \Bigl(\cB^\bot\times \begin{psmallmatrix} \ell \cos y \\ \ell \sin y\end{psmallmatrix}+ \cO(\eps)\Bigr)\,\dd x^\sH + \eps^2\Bigl(\tfrac{1}{2} \ell^2 \cB^\parallel + \cO(\eps)\Bigr)\, \dd y.
\]
\end{enumerate}
\end{Example}

\begin{Remark} \label{rem:cAB}
The intermediate magnetic potential $\cA_\mathrm{int}^\eps$ -- and therefore the resulting pulled-back magnetic potential $\cA^\eps$ -- may be further simplified if one deals with infinite curves ($B=\RR$) with a trivial topology. In this case, we may additionally gauge away the leading part $\pi_{\sN B}^*\cA_B$ by   subtracting the differential of the gauge transformation
\[
(x,n)\mapsto \int_0^x \cA_x(x',0)\ \dd x'.
\]
Here, we set $\rs{\cA}{c(0)}=0$ as reference point without loss of generality. Put differently, the magnetic potential $\cA$ restricted to the infinite curve $c(\RR)\subset\RR^3$ can
be gauged away completely in both the horizontal and vertical directions.
\end{Remark}

\subsection{Moderate Magnetic Fields (\texorpdfstring{$\sigma=0$}{sigma = 0})} \label{sec:unscaled}

We now come to the examination of the adiabatic operator that is associated with quantum tubes in the presence of moderate magnetic fields. Therefore, Theorem~\ref{thm:weak} states that the only possible influence of the magnetic potential arises both from $\cA_B$ and from 
\begin{equation}
\label{eq:cAH}
\rs{\cA_\sH^{\eps=0}(\partial_x^\sH)}{(x,y)} = \rs{\cA_\mathrm{int}^{\eps=0}(\partial_x^\#)}{\varpi(x,y)} = \cB^\bot(x)\times \varpi_x(y)
\end{equation}
averaged against the non-magnetic ground state $\phi_0$, $\cA_1(\partial_x)=\expval{\cA_\sH^{\eps=0}(\partial_x^\sH)}_0$. While the first part $\cA_B$ only depends on the topology of the submanifold $B$ ($\cA_B=0$ for $B=\RR$ due to Remark~\ref{rem:cAB}, but $\cA_B\neq 0$ in general for $B=\mathbb{S}^1$), the second part $\cA_1$ is affected by the concrete modelling of the waveguide (fibres) in terms of the embedding $\varpi$. The following corollary yields easily verifiable conditions   on the geometry which cause a vanishing of the latter potential $\cA_1$:

\begin{Corollary} \label{cor:weak}
Assume that
\begin{enumerate} 
\item massive quantum tubes, $\dim(F)=2$: \\
the map $\varpi_x$ is the restriction of a linear map and that the non-negative ground state $\phi_0(x)$ is centred for all $x\in B$, \iec $\expval{y_1}_0 = 0 = \expval{y_2}_0$,
\item hollow quantum tubes, $\dim(F)=1$: \\
the centre of mass of each cross-section lies in the origin, \iec the integrals
\[
\int_{M_x} \bigl(\varpi_x(y)\bigr)_1\vol_{g_{M_x}}(y)\quad,\quad \int_{M_x} \bigl(\varpi_x(y)\bigr)_2\vol_{g_{M_x}}(y)
\]
vanish for all $x\in B$.
\end{enumerate}
Then the adiabatic operator is given by
\[
\Ha^{\sigma=1} = \eps^2 \underline{\fL} \bigl(\expval{G^\eps_\mathrm{hor}}_0, \cA_B\bigr) +  \lambda_0 + 
\eps^2 V_\mathrm{BH}+ \eps^2 \expval{V _\mathrm{bend}}_0 + \cO(\eps^{2+\alpha})
\]
with errors in $\cL(K_\alpha(B),L^2(B))$ for all $\alpha\in[0,2]$. In particular, if $B=\RR$ is an infinite curve, there are no magnetic effects within the adiabatic operator, \iec
\[
\Ha^{\sigma=0} = \Ha^\mathrm{nm} + \cO(\eps^{2+\alpha}).
\]
\end{Corollary}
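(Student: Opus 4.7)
The plan is to invoke Theorem~\ref{thm:weak} and reduce everything to showing that the first-order magnetic correction $\cA_1(X) = \expval{\cA_\sH^{\eps=0}(X^\sH)}_0$ either vanishes identically or is uniformly of order $\eps$. Since $B$ is one-dimensional it suffices to evaluate $\cA_1$ on $\partial_x$. Using~\eqref{eq:cAH} we have $\cA_\sH^{\eps=0}(\partial_x^\sH)(x,y) = \cB^\bot(x) \times \varpi_x(y)$, and as $\cB^\bot(x)$ is constant along each fibre, bilinearity of the two-dimensional cross product gives
\[
\cA_1(\partial_x)(x) = \cB^\bot(x) \times \expval{\varpi_x}_0(x).
\]
Everything thus reduces to controlling the weighted first moment of $\varpi_x$ on each cross-section.

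In case (i), linearity of $\varpi_x$ lets us pull the map through the fibrewise expectation, so $\expval{\varpi_x}_0(x) = \varpi_x\bigl(\expval{y_1}_0, \expval{y_2}_0\bigr) = \varpi_x(0) = 0$ by the centring assumption; hence $\cA_1 \equiv 0$ and the conclusion is immediate from Theorem~\ref{thm:weak}. For case (ii) we invoke the ground-state expansion $\phi_0 = \Vol_{g_{M_x}}(M_x)^{-1/2} + \cO(\eps)$ from~\eqref{eq:phi0hol}, which yields
\[
\expval{\varpi_x}_0(x) = \Vol_{g_{M_x}}(M_x)^{-1}\int_{M_x}\varpi_x(y)\,\vol_{g_{M_x}}(y) + \cO(\eps) = \cO(\eps)
\]
by the centre-of-mass hypothesis. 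The final assertion for $B = \RR$ then follows from Remark~\ref{rem:cAB}, which provides an additional gauge transformation along the curve that eliminates $\cA_B$ entirely, leaving only the non-magnetic terms.

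The main obstacle is case (ii): unlike in the massive situation, $\cA_1$ need not vanish exactly, only to leading order. One must therefore verify that the residual term $\eps\cA_1 = \cO(\eps^2)$ inside $\cA_\mathrm{eff}^\eps = \cA_B + \eps\cA_1$ contributes to $\eps^2\underline{\fL}(\expval{G_\mathrm{hor}^\eps}_0, \cA_\mathrm{eff}^\eps)$ only within the allowed error. Expanding the quadratic form, replacing $\cA_\mathrm{eff}^\eps$ by $\cA_B$ produces two cross terms of schematic form $\eps^3\cA_1 \cdot \nabla^{\cA_B}$ together with a squared term of order $\eps^4\cA_1^2$. Using the estimate $\snorm{\eps \dd \psi}_{L^2(B)} = \cO(\eps^{\alpha/2})\snorm{\psi}_{K_\alpha(B)}$ from~\eqref{eq:Kalpha est}, one checks that the cross terms contribute $\cO(\eps^{3+\alpha/2})\snorm{\varphi}_{L^2(B)}\snorm{\psi}_{K_\alpha(B)}$ and the squared term $\cO(\eps^6)$; both are absorbed into the $\cO(\eps^{2+\alpha})$ error of Theorem~\ref{thm:weak} for every $\alpha \in [0,2]$, giving the claimed bound.
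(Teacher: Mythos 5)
Your proposal is correct and follows essentially the same route as the paper: invoke Theorem~\ref{thm:weak}, use~\eqref{eq:cAH} to reduce everything to the fibrewise first moment of $\varpi_x$, kill it exactly by linearity plus centring in the massive case, and use the leading-order $y$-independence of $\phi_0$ from~\eqref{eq:phi0hol} plus the centre-of-mass hypothesis in the hollow case. The only difference is that you make explicit the absorption of the residual $\cA_1=\cO(\eps)$ into the $\cO(\eps^{2+\alpha})$ error via~\eqref{eq:Kalpha est}, a step the paper's two-line proof leaves implicit; your bookkeeping there is correct.
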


\begin{proof}
The vanishing of $\cA_1(\partial_x)=\expval{\cA_\sH^{\eps=0}(\partial_x^\sH)}_0$ reduces by Eq.~\eqref{eq:cAH} to
\[
\expval{\bigl(\varpi_x(y)\bigr)_1}_0 = 0 = \expval{\bigl(\varpi_x(y)\bigr)_2}_0.
\]	
While this trivially reduces to (i) if $\varpi_x$ is linear, we note for (ii) that $\phi_0(x)$ is $y$-independent at leading order by~\eqref{eq:phi0hol} and may thus be extracted from the integral.
\end{proof}

Applying this corollary to the two geometrical situations introduced in Example~\ref{ex:geoconv}, we find the following:

\begin{enumerate}
\item \emph{massive example:} \\
We have $M_x=F$ and $\phi_0(x)=\Phi_0$ for all $x\in B$, where  $\Phi_0$ is the normalised ground state of the Dirichlet Laplacian $-\Delta_y$ on $L^2(F,\dd y)$. Consequently, the condition of Corollary~\ref{cor:weak}(i) reduces to the $x$-independent claim on $\Phi_0$ to be centred, \iec 
\begin{equation}
\label{eq:centredPhi}
\expval{y_1}_{\Phi_0} = 0 = \expval{y_2}_{\Phi_0},
\end{equation}
where $\expval{g(y)}_{\Phi_0} := \scpro{\Phi_0}{g \Phi_0}_{L^2(F)}$ 
for any integrable function $g$ on $F$. One possible realisation thereof is a circular disc $F=\DD^2$ or more generally an ellipse $F=\mathbb{E}^2$.
\item \emph{hollow example:} \\
We have $M_x=\mathbb{S}^1=\RR/2\uppi\ZZ$ for all $x\in B$, and the condition of Corollary~\ref{cor:weak}(ii) is satisfied since
\[
\bigl(\varpi_x(y)\bigr)_1 = \ell(x)\cos y
\]
which implies
\begin{equation}
\label{eq:centredS1}
\int_{M_x} \bigl(\varpi_x(y)\bigr)_1\vol_{g_{M_x}}(y) = \ell^2(x) \int_0^{2\uppi} \cos y\ \dd y = 0,
\end{equation}
and similarly for the other integral. Consequently, if we assume $B=\RR$, the resulting adiabatic operator is independent of any magnetic effects up to errors of order $\eps^{2+\alpha}$ in $\cL(K_\alpha(\RR),L^2(\RR))$. It is discussed in great detail within \cite[Section 5.2]{Haag2015}.
\end{enumerate}

Let us finally compare our adiabatic operator with the effective operator in \cite[Definition~2.5 with $\delta=0$]{Krejcirik2014}. The geometric situation considered there corresponds to that of Example~\ref{ex:geoconv}(i) with $\ell\equiv 1$. We refer to this as \emph{rigid} massive quantum tubes. In particular, this implies that $\lambda_0(x)=\lambda_0$ is constant and we retrieve the case $\alpha=2$, where $\norm{\dd\psi(\partial_x)}_{L^2(B)}=\cO(1)$ for $\psi\in W^2(B)$. Therefore, the operator $\Ha^{\sigma=0}-\lambda_0$ is of order $\eps^2$, hence we may divide it by the same factor and get by Theorem~\ref{thm:weak}:
\begin{align*}
H_\mathrm{rig}^{\sigma=0} &:= \eps^{-2} \bigl(\Ha^{\sigma=0}-\lambda_0\bigr)  =  \underline{\fL} \bigl(\expval{G^\eps_\mathrm{hor}}_0, \cA_\mathrm{eff}^\eps\bigr) +  V_\mathrm{BH}+ \expval{V _\mathrm{bend}}_0 + \cO(\eps^2).
\end{align*}
Its eigenvalues eventually approximate those of the initial magnetic Laplacian $-\Delta_\updelta^\cA$ of the conventional quantum tube $\cT^\eps\subset\RR^3$ up to errors of order $\eps^2$ by Theorem~\ref{thm:main2}. We now compute explicit (local, frame invariant) formulas for the individual terms appearing in $H_\mathrm{rig}^{\sigma=0}$ which itself is a local operator.

\subsubsection*{The Laplacian}

Using the expressions~\eqref{eq:Ghor0dx} and~\eqref{eq:cAH} for the horizontal metric and magnetic potential, respectively, the expansion of the quadratic form of the Laplacian within the above Hamiltonian equals 
\begin{align*}
& \scpro{\psi}{\underline{\fL} \bigl(\expval{G^\eps_\mathrm{hor}}_0, \cA_\mathrm{eff}^\eps\bigr)\psi}_{L^2(B)} \\
&\ = \scpro{\psi}{-\Delta_x^{\cA_B}\psi}_{L^2(B)}   + 2 \eps \biggl( \scpro{\nabla_{\partial_x}^{\cA_B}\psi}{\scpro{\fr\expval{y}_{\Phi_0}}{\kappa}_{\RR^2}\nabla_{\partial_x}^{\cA_B}\psi}_{L^2(B)} \\
&\ \hphantom{=\ + 2 \eps \biggl(} + \Im \scpro{\psi}{\cB^\bot\times\fr\expval{y}_{\Phi_0}\nabla_{\partial_x}^{\cA_B}\psi}_{L^2(B)} \biggr) + \cO\bigl(\eps^2\norm{\psi}_{W^1(B)}^2\bigr).
\end{align*}

\subsubsection*{The Born-Huang Potential}

As far as the non-magnetic Born-Huang potential is concerned, we observe that
\[
G_\mathrm{hor}^\eps(\dd\phi_0,\dd\phi_0) = \bigl(1 + 2 \eps \scpro{\fr y}{\kappa}_{\RR^2}+\cO(\eps^2)\bigr) \bigl(\varphi' (y\times\nabla_y)\Phi_0\bigr)^2
\]
and
\begin{align*}
  \mathrm{div}_g \bigl(\phi_0 G_\mathrm{hor}^\eps(\dd \phi_0, \cdot ) \bigr)  =&\,  \dive_g\Bigl(\Phi_0\bigl(1+2\eps\scpro{\fr y}{\kappa}_{\RR^2}+ \cO(\eps^2)\bigr)(\partial_x^\sH\Phi_0)\,\partial_x^\sH\Bigr) \\
  =  \partial_x^\sH (\Phi_0 \partial_x^\sH \Phi_0) &  + \;2 \eps \Bigl[\varphi' (y\times\nabla_y)\bigl( \scpro{\fr y}{\kappa}_{\RR^2} \Phi_0 \varphi'(y\times\nabla_y)\Phi_0\bigr) \\ 
& - \bigl(\scpro{\fr y}{\kappa}_{\RR^2}\varphi'\bigr)' \Phi_0 (y\times\nabla_y)\Phi_0\Bigr]  + \cO(\eps^2)
\end{align*}
by means of~\eqref{eq:xhor} with $\ell\equiv 1$. In view of Remark~\ref{rem:VBH} the first term doesn't contribute to $V_{\mathrm{BH}}$
and we have 
\begin{align*}
V_\mathrm{BH} &= {\varphi'}^2 \norm{L_y\Phi_0}^2_{L^2(F)} + 2 \eps \expval{ \Bigl[\bigl(\scpro{\fr y}{\kappa}_{\RR^2}\varphi'\bigr)' - L_y \scpro{\fr y}{\kappa}_{\RR^2}\Bigr]L_y}_{\Phi_0} \\
&\hphantom{=}\ + \cO(\eps^2),
\end{align*}
where $L_y:=y\times\nabla_y$ stands for the (skew-symmetric) vertical angular momentum operator. Again, all introduced objects do not depend on the explicitly chosen orthonormal frame.

\subsubsection*{The Bending Potential}

Using the fact that $\rho_\eps = 1 - \eps \scpro{\fr y}{\kappa}_{\RR^2}$, the formulas of \cite[Section 4.3]{Haag2015} lead to
\[
\eps^2\fL(\pi_M^*g_B,0)\ln\rho_\eps = \eps^3 \scpro{\fr y}{\kappa''} + \cO(\eps^4)
\]
and
\begin{align*}
\eps^2\fL(g_\sV,0) \ln\rho_\eps &= g_\sV(\dd\ln\rho_\eps,\dd\ln\rho_\eps) \\
&= \eps^2 \bigl(1+2\eps\scpro{\fr y}{\kappa}_{\RR^2}\bigr)\norm{\kappa}_{\RR^2}^2 + \cO(\eps^4) .
\end{align*}
Thus, the bending potential reads
\[
V_\mathrm{bend} = -\tfrac{1}{4}\norm{\kappa}^2_{\RR^2} - \tfrac{1}{2}\eps\Bigl( \norm{\kappa}^2_{\RR^2}\scpro{\fr y}{\kappa}_{\RR^2} + \scpro{\fr y}{\kappa''}_{\RR^2}\Bigr) + \cO(\eps^2).
\]
and hence
\begin{align*}
\expval{V_\mathrm{bend}}_0 &= -\tfrac{1}{4}\norm{\kappa}^2_{\RR^2} - \tfrac{1}{2}\eps\Bigl( \norm{\kappa}^2_{\RR^2}\scpro{\fr \expval{y}_{\Phi_0}}{\kappa}_{\RR^2} + \scpro{\fr \expval{y}_{\Phi_0}}{\kappa''}_{\RR^2}\Bigr) \\
& \hphantom{=}\ + \cO(\eps^2).
\end{align*}

\subsubsection*{Conclusion}

We conclude that the leading contribution of $H_\mathrm{rig}^{\sigma=0}$ equals
\[
- \Delta^{\cA_B}_x + {\varphi'}^2 \norm{L_y\Phi_0}^2_{L^2(F)} - \tfrac{1}{4}\norm{\kappa}_{\RR^2}^2
\]
which is in accordance with \cite[Theorem~2.6 with $\delta=0$]{Krejcirik2014} for infinite curves. The subsequent order of $H_\mathrm{rig}^{\sigma=0}$ incorporates magnetic effects in terms of $\cA_B$ and $\cB^\bot$. If $\Phi_0$ is centred~\eqref{eq:centredPhi}, this $\cO(\eps)$-contribution to the Hamiltonian reduces to the potential
\[
2\eps\scpro{\Phi_0}{\Bigl[\bigl(\scpro{\fr y}{\kappa}_{\RR^2}\varphi'\bigr)' - L_y \scpro{\fr y}{\kappa}_{\RR^2}\Bigr]L_y \Phi_0}_{L^2(F)}
\]
Here, in particular, we observe that magnetic effects induced by $\cB^\bot$ are not occurrent at this order as stated by Corollary~\ref{cor:weak}(i). Moreover, if in addition the cross-section~$F$ (and, therefore, the non-magnetic ground state $\Phi_0$) is rotationally invariant, it holds that $L_y\Phi_0=0$ and this potential vanishes.

\subsection{Strong Magnetic Fields (\texorpdfstring{$\sigma=1$}{sigma = 1})} \label{sec:scaled}

We continue with the analysis of the adiabatic operator for the framework of quantum tubes with strong magnetic fields. 

\subsubsection{The Case $\alpha=0$}

While hollow quantum waveguides always possess spectrum below this energy regime (more precisely, the constant non-magnetic ground state band in the hollow cases induces spectrum of order $\eps^2$ above the bottom), the regime $\alpha=0$ corresponds to the low-lying part of the spectrum of generic massive quantum waveguides. Therefore, we expand $\Ha^{\sigma=1}$ from Corollary~\ref{cor:strong} for massive quantum tubes as an operator in $\cL(K_0(B),L^2(B))$, \iec on states $\psi\in W^2(B)$ with $\norm{\eps\dd\psi(\partial_x)}_{L^2(B)}=\cO(1)$.

The effective magnetic potential $\cA_\mathrm{eff}^\eps=\cA_B + \eps \cA_1$ is (locally) given by
\[
\rs{\cA_\mathrm{eff}^\eps(\partial_x)}{x} \stackrel{\eqref{eq:cAH}}{=} \underbrace{\cA_{c(x)}\bigl(c'(x)\bigr)}_{=\cA_B(\partial_x)} + \eps \underbrace{\cB^\bot(x)\times \bexpval{\varpi_x(y)}_0}_{=\cA_1(\partial_x)=\expval{\cA_\sH^{\eps=0}(\partial_x^\sH)}_0}.
\]
Moreover, the incorporated metric (evaluated on one-forms) reads
\begin{align*}
\expval{G_\mathrm{hor}^\eps}_\cA(\dd x,\dd x) &= 1 + 2 \eps \expval{\vartheta_{\eps=0}}_0 + \cO(\eps^2) \\
&= 1 + 2 \eps \scpro{\bexpval{\varpi_x(y)}_0}{\kappa}_{\RR^2} + \cO(\eps^2)
\end{align*}
by Equation~\eqref{eq:Ghor0dx} and the comment after Corollary~\ref{cor:strong}. Thus, we get the following expansion for the adiabatic operator:
\begin{align*}
H_\mathrm{a}^{\sigma=1} &  =- \nabla_{\eps\partial_x}^{\eps^{-1}\cA_\mathrm{eff}^\eps} \Bigl(\bigl(1 + 2 \eps \expval{\vartheta_{\eps=0}}_0\bigr)\nabla_{\eps\partial_x}^{\eps^{-1}\cA_\mathrm{eff}^\eps}\cdot\Bigr) + \lambda_0 + \cO(\eps^2) \\
&= - \eps^2 \bigl(1 + 2 \eps \expval{\vartheta_{\eps=0}}_0\bigr)\Delta_x^{\eps^{-1}\cA_\mathrm{eff}^\eps} + \lambda_0+ \cO(\eps^2).
\end{align*}
The relevant quantities herein, that correspond to the geometric setting of Example~\ref{ex:geoconv}(i) and Example~\ref{ex:cAconv}(i), are given by
\begin{equation}
\label{eq:cA1}
\expval{\vartheta_{\eps=0}}_0 = \ell\scpro{\fr \expval{y}_0}{\kappa}_{\RR^2}\quad,\quad \cA_1(\partial_x) = \cB^\bot \times \ell \fr \expval{y}_0
\end{equation}
and thus vanish for a centred non-magnetic ground state $\phi_0$, eventually leading to
\[
\Ha^{\sigma=1} = - \eps^2 \Delta_x^{\eps^{-1}\cA_B} + \lambda_0 + \cO(\eps^2)
\]
is this case.

\subsubsection{The Case $\alpha=2$}\label{sec:652}

We intend to discuss the main magnetic effects within the adiabatic operator if the unperturbed ground state band $\lambda_0(x)=\lambda_0$ is constant. Therefore, in order to apply Theorem~\ref{thm:main2}, we restrict ourselves to infinite curves ($B=\RR$), where $\cA_B=0$. It turns out that the resulting operators are of order $\eps^2$ in $\cL(K_2(\RR),L^2(\RR))$, where $\norm{\dd\psi(\partial_x)}_{L^2(\RR)}=\cO(1)$ for $\psi\in W^2(\RR)$.
In view of Theorem~\ref{thm:strong}, we obtain
\begin{align*}
&  \eps^{-2} \bigl(\Ha^{\sigma=0}-\lambda_0\bigr) \nonumber \\
& \ =\underline{\fL}\bigl(\expval{G^\eps_\mathrm{hor}}_\cA,\cA_P\bigr)
+ \lambda_{0,2}  +  \expval{V_\mathrm{bend}}_\cA + V_\mathrm{BH}^\cA + D_\mathrm{mag} + \cO(\eps^2),
\end{align*}
where $\lambda_{0,2}=\lim_{\eps\to 0} \eps^{-2}(\lambda_\cA-\lambda_0)$ is the second-order correction of the magnetic ground state band to $\lambda_0$. By Theorem~\ref{thm:main2} we could achieve an accuracy of order $\eps^2$ for the approximation of the initial Laplacian~$-\Delta_\updelta^{\eps^{-1}\cA}$ of the conventional quantum tube $\cT^\eps\subset\RR^3$. We, however, focus on the leading contribution of these adiabatic operators and hence on the main magnetic effects.
\begin{itemize}
\item The leading order of the Laplacian is given by $-\Delta_x^{\cA_1}$. In view of~\eqref{eq:Vbend}, the expansion of the averaged bending potential reads
\[
\expval{V_\mathrm{bend}}_\cA = \begin{cases} 
-\tfrac{1}{4}\norm{\kappa}^2_{\RR^2} + \cO(\eps),&\quad \text{mas.}\\
  \cO(\eps),&\quad\text{hol.}
\end{cases}.
\]
\item Standard perturbation theory reveals that (see for example~\cite[Section II – \S\,2]{Kato1980})
\begin{equation}
\label{eq:phi01}
\lambda_{0,2}(x) = \norm{\cA_\sV^{\eps=0} \phi_0}^2_{L^2(\sT^*M_x)} - \scpro{\phi_0}{H^\sV_1 \phi_{0,1}}_{L^2(\sT^*M_x)},
\end{equation}
where
\[
\phi_{0,1} = - (H^\sV_0 - \lambda_0)^{-1} P_0^\bot H^\sV_1 \phi_0
\]
and
\[
H^\sV_1 =  \dd^*\bigl(\I\cA_\sV^{\eps=0}\bigr) + \bigl(\I\cA_\sV^{\eps=0}\bigr)^*\dd.
\]
We mention that $\lambda_{0,2}$ is non-negative due to the diamagnetic inequality \cite{Fournais2010} and may be expressed more concretely in the following sense: Let $\{\zeta_k(x)\}_{k\geq 0}$ be an orthonormal basis of the vertical non-magnetic Laplacian $-\Delta_\sV(x)=-\Delta_{g_{M_x}}$ on~$L^2(M_x)$ with eigenvalues $\{\lambda_k(x)\}_{k\geq 0}$, then one has
\begin{align}
\label{eq:lambda02}
& \lambda_{0,2}  (x) =  \norm{\cA_\sV^{\eps=0}}_{L^2(\sT^*M_x)} \\
  & + \sum_{k>0} \frac{\abs{\scpro{\dd\zeta_k}{\cA_\sV^{\eps=0}\zeta_0}_{L^2(\sT^*M_x)}
- \scpro{\cA_\sV^{\eps=0}\zeta_k}{\dd\zeta_0}_{L^2(\sT^*M_x)}}^2}{\lambda_k - \lambda_0}\nonumber 
\end{align}
We note that, in the hollow case, the corrections to $\lambda_{0,2}$ that arise from $H_0^\sV=-\Delta_\sV+\cO(\eps)$ (inducing an $\cO(\eps)$-deviation between the corresponding eigenfunctions) are of higher order and may thus be neglected.

\item In view of Remark~\ref{rem:VBH}, the expansion of the magnetic Born-Huang potential equals
\begin{align*}
V_\mathrm{BH}^\cA = \norm{\dd\phi_0+\I \cA_{P^\perp}^0}^2_{L^2(\rs{\sH^*M}{M_x})} - \tfrac{1}{2}\bigl(\expval{\hat{\eta}}_0\bigr)' + \cO(\eps),
\end{align*}
where $\cA_{P^\perp}^0:=P_0^\perp \cA_\sH^{\eps=0}\phi_0\in\sct(\sH^*M)\otimes P_0^\perp \cH$ and $\hat{\eta}\in C^\infty(M)$ is the projected mean curvature vector along $\partial_x^\sH$. Herein, we used the fact that $\scpro{\phi_0}{\cA_{P^\perp}^0(\partial_x^\sH)}_{L^2(M_x)}=0$. Furthermore, the first term of this leading order contribution can be rewritten as follows:
\begin{align*}
& \norm{\dd\phi_0 + \I \cA_{P^\perp}^0}_{L^2(\rs{\sH^*M}{M_x})}^2 \\
&\ = \norm{\dd\phi_0}_{L^2(\rs{\sH^*M}{M_x})}^2 + \norm{\cA_\sH^{\eps=0}\phi_0}^2_{L^2(\rs{\sH^*M}{M_x})} - \Bigl(\underbrace{\scpro{ \phi_0}{\cA_\sH^{\eps=0}(\partial_x^\sH)\phi_0}_{L^2(M_x)}}_{=\cA_1(\partial_x)}\Bigr)^2.
\end{align*}
\item The first-order differential operator
\[
D_\mathrm{mag} = - 2\I \Bigl(\underbrace{\scpro{\phi_0}{\cA_{P^\perp}^0(\partial_x^\sH)}_{L^2(M_x)}}_{=0} + \cO(\eps)\Bigr) \nabla_{\partial_x}^{\cA_1}
\]
vanishes at leading order.
\end{itemize}
Consequently, we conclude that
\begin{align} \label{eq:Ha2} 
   \eps^{-2} \bigl(\Ha^{\sigma=0}-\lambda_0\bigr)  =& -\Delta_x^{\cA_1} + \lambda_{0,2}  -\tfrac{1}{4} \norm{\kappa}_{\RR^2}^2 + \norm{\dd\phi_0}_{L^2(\rs{\sH^*M}{M_x})}^2 - \tfrac{1}{2}\bigl(\expval{\hat{\eta}}_0\bigr)' \nonumber \\
& + \norm{\cA_\sH^{\eps=0}\phi_0}^2_{L^2(\rs{\sH^*M}{M_x})}  - \bigl(\cA_1(\partial_x)\bigr)^2  + \cO(\eps)
\end{align}
with errors in $\cL(W^2(\RR),L^2(\RR))$, where the $\kappa$-potential is not present for hollow quantum waveguides.

\subsubsection*{Massive Waveguides}

Let us evaluate the resulting operator~\eqref{eq:Ha2} in more detail for rigid massive quantum tubes ($\eta=0$) as discussed at the end of the previous section, corresponding to the geometric framework of Example~\ref{ex:geoconv}(i) with $\ell\equiv 1$. We recall that  $\cA_\sH^{\eps=0}(\partial_x^\sH)=\cB^\bot\times \fr y$ in this case and thus $\cA_1(\partial_x) = \cB^\bot\times\fr\expval{y}_{\Phi_0}$ by~\eqref{eq:cA1}, where $\Phi_0$ stands for the normalised ground state of the Dirichlet Laplacian $-\Delta_y$ on~$L^2(F,\dd y)$. Consequently, we retrieve with $\partial_x^\sH=\partial_x^{\pr}- \varphi' L_y$:
\begin{align}\label{KRHam}
H_\mathrm{rig}^{\sigma=1} & :=\eps^{-2}\bigl(\Ha^{\sigma=1}-\lambda_0\bigr) \\ \nonumber
&\hphantom{:}= - \Delta_x^{\cB^\bot\times \fr \expval{y}_{\Phi_0}} + \lambda_{0,2} - \tfrac{1}{4}\norm{\kappa}^2_{\RR^2} + {\varphi'}^2 \norm{L_y\Phi_0}^2_{L^2(F)} \\\nonumber
&\hphantom{:=}\ + \norm{(\cB^\bot \times \fr y)\Phi_0}^2_{L^2(F)} - \bigl(\cB^\bot\times \fr\expval{y}_{\Phi_0}\bigr)^2 + \cO(\eps) \\\nonumber
&\hphantom{:}= \scpro{\Phi_0}{-\Delta_\sH^{\cB\times\fr y}(\cdot\Phi_0)}_{L^2(F)} + \lambda_{0,2} - \tfrac{1}{4}\norm{\kappa}^2_{\RR^2} + \cO(\eps).
\end{align}
The last transformation follows from a straightforward calculation, using the fact that
\[
0 = \dd\Bigl(\underbrace{\norm{\Phi_0}^2_{L^2(F)}}_{=1}\Bigr) = 2 \scpro{\Phi_0}{\dd\Phi_0(\partial_x^\sH)}_{L^2(F)} = 2 \expval{L_y}_{\Phi_0}.
\]
This operator exactly coincides with the effective operator in \cite[Definition~2.5 with $\delta=1$]{Krejcirik2014}. We also find the same representation for the correction $\lambda_{0,2}$ of the ground state band:
\[
\lambda_{0,2} \stackrel{eqref{eq:phi01}}{=} {\cB^\parallel}^2 \left(\tfrac{1}{4}\expval{\norm{y}_{\RR^2}^2}_{\Phi_0} - \scpro{\Phi_0}{-\I L_y \Phi_{0,1}}_{L^2(F)}\right).
\]
We eventually note that this operator reduces to
\begin{align*}
H_\mathrm{rig}^{\sigma=1} & = - \partial^2_x + {\varphi'}^2 \norm{L_y\Phi_0}^2_{L^2(F)} - \tfrac{1}{4}\norm{\kappa}^2_{\RR^2} \\
&\hphantom{=}\ + {\cB^\parallel}^2 \left(\tfrac{1}{4}\expval{\norm{y}_{\RR^2}^2}_{\Phi_0} - \scpro{\Phi_0}{-\I L_y \Phi_{0,1}}_{L^2(F)}\right) \\
&\hphantom{=}\ + \norm{(\cB^\bot \times \fr y)\Phi_0}^2_{L^2(F)}  + \cO(\eps)
\end{align*}
if $\Phi_0$ is centred~\eqref{eq:centredPhi}. We finally consider the case where $F$ is rotationally invariant and without loss of generality $\varphi\equiv 0$ (\iec $\fr=\id_{2\times 2}$). This yields that $\Phi_0$ is rotationally invariant and, in particular, centred. The rotational invariance implies $L_y\Phi_0=0$ and thus
\begin{align*}
\norm{(\cB^\bot\times y)\Phi_0}^2_{L^2(F)} &= (\cB_1^\bot)^2 \expval{y_2^2}_{\Phi_0} + (\cB_2^\bot)^2 \expval{y_1^2}_{\Phi_0} - 2 \cB_1^\bot \cB_2^\bot \underbrace{\expval{y_1 y_2}_{\Phi_0}}_{=0} \\
&= \tfrac{1}{2} \norm{\cB^\bot}^2_{\RR^2} \expval{\norm{y}_{\RR^2}^2}_{\Phi_0}.
\end{align*}
Therefore, using the fact that
\[
\scpro{\Phi_0}{-\I L_y \Phi_{0,1}}_{L^2(F)} = \scpro{-\I L_y\Phi_0}{\Phi_{0,1}}_{L^2(F)} = 0
\]
due to integration by parts and $\rs{\Phi_0}{\partial F}=0$, the corresponding adiabatic operator furthermore reduces to
\[
H_\mathrm{rig}^{\sigma=1} = -\partial_x^2 - \tfrac{1}{4}\norm{\kappa}^2_{\RR^2} + \tfrac{1}{4}\expval{\norm{y}^2_{\RR^2}}_{\Phi_0}\bigl({\cB^\parallel}^2 + 2 \norm{\cB^\bot}^2_{\RR^2}\bigr) + \cO(\eps).
\]

\begin{Example}
If we assume that $F$ is the circular disc with radius $\varrho$, a small calculation shows that
\[
\expval{\norm{y}^2_{\RR^2}}_{\Phi_0} = \frac{j_{01}^2-2}{3j_{01}^2} \varrho^2 \approx 0.218 \varrho^2,
\]
where $j_{01}\approx 2.405$ denotes the first zero of the zeroth Bessel function $\cJ_0$ of first kind.
\end{Example}

\subsubsection*{Hollow Waveguides}

We continue the analysis of~\eqref{eq:Ha2} for hollow quantum tubes with $F=\mathbb{S}^1=\RR/2\uppi\ZZ$ as introduced in Example~\ref{ex:geoconv}(ii). We first note that
\[
\phi_0(x) \stackrel{\eqref{eq:phi0hol}}{=} \Vol_{g_{M_x}}(\mathbb{S}^1) + \cO(\eps) = \tfrac{1}{\sqrt{2\uppi\ell(x)}}+\cO(\eps),\quad g_{M_x} =\ell^2(x) \dd y\otimes\dd y
\]
which entails $\dd\phi_0(\partial_x^\sH)=- \tfrac{1}{2}(\ln \ell)'\phi_0 + \cO(\eps)$. This together with  $\hat{\eta} = -(\ln\ell)'$ then gives
\begin{align*}
\norm{\dd\phi_0}_{L^2(\rs{\sH^*M}{M_x})}^2 - \tfrac{1}{2}\bigl(\expval{\hat{\eta}}_0\bigr)' & = \tfrac{1}{4} {(\ln\ell)'}^2 + \tfrac{1}{2} (\ln\ell)'' + \cO(\eps) \\
& = \tfrac{1}{2}\tfrac{\ell''}{\ell} - \tfrac{1}{4} \left(\tfrac{\ell'}{\ell}\right)^2 + \cO(\eps).
\end{align*}
Considering the magnetic terms, we have $\cA_1(\partial_x)=\cO(\eps)$ due to~\eqref{eq:centredS1}. Moreover,
\begin{align*}
\norm{\cA_\sH^{\eps=0}(\partial_x^\sH)\phi_0}^2_{L^2(M_x)} &= \frac{1}{2\uppi}\int_0^{2\uppi} \Bigl[\cB^\bot\times \begin{psmallmatrix} \ell \cos y \\ \ell \sin y\end{psmallmatrix}\Bigr]^2\ \dd y + \cO(\eps) \\
&= \tfrac{1}{2}\ell^2\norm{\cB^\bot}^2_{\RR^2} + \cO(\eps).
\end{align*}
It remains to evaluate the correction $\lambda_{0,2}$ using~\eqref{eq:lambda02}. To do so, we introduce the basis
\[
\bigl\{\zeta_0^\mathrm{e},\{\zeta_k^\mathrm{e},\phi_k^\mathrm{o}\}_{k\geq 1}\bigr\}
\]
of orthonormal eigenfunctions of $-\Delta_\sV=-\ell^{-2}\,\partial_y^2$ on $L^2(\mathbb{S}^1,\ell\, \dd y)$, where
$\zeta_0^\mathrm{e} = \tfrac{1}{\sqrt{2\uppi\ell}}$ and
\[
\zeta_k^\mathrm{e} = \tfrac{1}{\sqrt{\uppi\ell}} \cos(ky)\quad,\quad\zeta_k^\mathrm{o} = \tfrac{1}{\sqrt{\uppi\ell}} \sin(ky)
\]
with eigenvalues $\lambda_k=k^2/\ell^2$.
The first term of~\eqref{eq:lambda02} then equals
\[
\int_0^{2\uppi} \tfrac{1}{\ell^2} \Bigl(\tfrac{1}{2}\ell^2 \cB^\parallel \zeta_0^\mathrm{e}\Bigr)^2\ \ell\,\dd y = \tfrac{1}{4}{\cB^\parallel}^2 \underbrace{\scpro{\zeta_0^\mathrm{e}}{\zeta_0^\mathrm{e}}_{L^2(\mathbb{S}^1,\ell\,\dd y)}}_{=1},
\]
while the second one reads
\begin{align*}
&\sum_{k\geq 1} \sum_{\bullet\in\{\mathrm{e,o}\}} \frac{\babs{\int_0^{2\uppi}\tfrac{1}{\ell^2}(\partial_y \zeta_k^\bullet)\left(\tfrac{1}{2}\ell^2\cB^\parallel\zeta_0^\mathrm{e}\right)- \tfrac{1}{\ell^2}\left(\tfrac{1}{2}\ell^2\cB^\parallel\zeta_k^\bullet\right)\overbrace{(\partial_y \zeta_0^\mathrm{e})}^{=0}\ \ell\,\dd y}^2}{k^2/\ell^2} \\
&\ = \sum_{k\geq 1} \tfrac{1}{2}\tfrac{\ell^2}{k^2}{\cB^\parallel}^2 \left(\babs{\scpro{\partial_y \zeta_k^\mathrm{e}}{\zeta_0^\mathrm{e}}_{L^2(\mathbb{S}^1,\ell\,\dd y)}}^2 - \babs{\scpro{\partial_y \zeta_k^\mathrm{o}}{\zeta_0^\mathrm{e}}_{L^2(\mathbb{S}^1,\ell\,\dd y)}}^2\right) \\
&\ = \sum_{k\geq 1} \tfrac{1}{2}\ell^2{\cB^\parallel}^2 \Bigl(\babs{\underbrace{\scpro{ \zeta_k^\mathrm{o}}{\zeta_0^\mathrm{e}}_{L^2(\mathbb{S}^1,\ell\,\dd y)}}_{\text{$=0$ for all $k\geq 1$}}}^2 - \babs{\underbrace{\scpro{\zeta_k^\mathrm{e}}{\zeta_0^\mathrm{e}}_{L^2(\mathbb{S}^1,\ell\,\dd y)}}_{\text{$=0$ for all $k\geq 1$}}}^2\Bigr) \\
&\ = 0.
\end{align*}
Finally, we gather all terms of~\eqref{eq:Ha2} and find
\begin{align*}
H_\mathrm{hol}^{\sigma=1} &= \eps^{-2}\bigl(\Ha^{\sigma=1}-\lambda_0\bigr) \\
&= - \partial_x^2 + \tfrac{1}{2}\tfrac{\ell''}{\ell} - \tfrac{1}{4} \left(\tfrac{\ell'}{\ell}\right)^2 +\tfrac{1}{4}\ell^2 \bigl({\cB^\parallel}^2 + 2 \norm{\cB^\bot}^2_{\RR^2}\bigr) + \cO(\eps).
\end{align*}
We observe a similar magnetic contribution, proportional to ${\cB^\parallel}^2 + 2 \norm{\cB^\bot}^2_{\RR^2}$, as in the case of rotationally invariant massive quantum tubes above. We finally mention that the non-magnetic part
\[
- \partial_x^2 + \tfrac{1}{2}\tfrac{\ell''}{\ell} - \tfrac{1}{4} \left(\tfrac{\ell'}{\ell}\right)^2 
\]
of $H_\mathrm{hol}^{\sigma=1}$ was already found in \cite[Equation~(45)]{Haag2015}.

\section*{Acknowledgments}

S.~Haag was supported by the German Research Foundation (DFG) within the Research Training Group 1838 on \enquote{Spectral theory and dynamics of quantum systems}.

\end{document}